\documentclass[11pt,prd,aps,amsfonts,eqsecnum,superscriptaddress,nofootinbib,longbibliography,notitlepage]{revtex4-2}

\usepackage{lmodern}    
\usepackage{microtype}  
\usepackage[english]{babel}
\usepackage{setspace}

\usepackage{fancyhdr}

\usepackage{caption}
\usepackage{graphicx}
\usepackage{xcolor}
\usepackage{subcaption}
\usepackage{rotating}
\usepackage{amsmath, amssymb, graphics, amsthm}
\usepackage{physics}
\usepackage{verbatim}
\usepackage{tikz}
\usepackage{tikz-cd}
\usepackage{circuitikz}
\usetikzlibrary{patterns}
\usetikzlibrary{through,hobby}
\usetikzlibrary{decorations.markings}
\usetikzlibrary{fadings,shadings}
\usetikzlibrary{plotmarks}
\usetikzlibrary{positioning}
\usetikzlibrary{decorations,arrows}
\usetikzlibrary{decorations.pathreplacing}
\usetikzlibrary{chains, scopes, positioning, backgrounds, shapes, fit, shadows, calc, arrows.meta, decorations.pathreplacing}
\usepackage{blkarray}
\usepackage{svg}
\usepackage{mathrsfs}
\usepackage{algorithm}
\usepackage{algpseudocode}

\usepackage{xcolor}
\definecolor{mycitecolor}{rgb}{0.0, 0.45, 0.85}   

\newcommand{\cse}[1]{\par\medskip\noindent\emph{#1}}
\usepackage[
  colorlinks=true,
  linkcolor=mycitecolor,
  citecolor=mycitecolor,
  urlcolor=mycitecolor,
  hyperindex=true,
  linktocpage=true
]{hyperref}

\usepackage[capitalise,compress]{cleveref}
\usepackage{tcolorbox}
\usepackage{graphicx}
\usepackage{xcolor}
\usepackage{subcaption}
\usepackage{rotating}
\usepackage{amsmath,amssymb,graphics,amsthm}
\usepackage{physics }
\usepackage{verbatim}
\usepackage{circuitikz}
\usepackage{tikz}
\usepackage{esint}
\usepackage{blkarray}
\usepackage{array}
\usepackage{svg}
\usepackage{ytableau}
\usepackage{cancel}

\usepackage{enumitem}

\newlist{thmcases}{enumerate}{1}
\setlist[thmcases]{
    label=\textbf{Case \arabic*:}, 
    leftmargin=*,                  
    topsep=0.5em,                  
    parsep=0.5em                   
}

\newtcolorbox{shadedtheorem}{
  colback=gray!15,    
  colframe=white,     
  boxrule=0pt,        
  arc=0pt,            
  left=5pt,           
  right=5pt,
  top=5pt,
  bottom=5pt
}
\usetikzlibrary{calc}

\newtheoremstyle{upright-hang}
  {6pt}   
  {6pt}   
  {\normalfont\hangindent=1.5em\hangafter=1\relax} 
  {0pt}   
  {\bfseries} 
  {.}     
  {0.5em} 
  {}      

\theoremstyle{upright-hang}
\newtheorem{thm}{Theorem}
\numberwithin{thm}{section}
\newtheorem{cor}[thm]{Corollary}

\newtheorem{lem}[thm]{Lemma}

\newtheorem{defn}[thm]{Definition}

\renewcommand{\thesection}{\arabic{section}}
\renewcommand{\thesubsection}{\thesection.\arabic{subsection}}
\renewcommand{\thesubsubsection}{\thesubsection.\arabic{subsubsection}}

\makeatletter
\renewcommand{\p@subsection}{}
\renewcommand{\p@subsubsection}{}
\makeatother

\makeatletter
\fancypagestyle{plain}{%
  \fancyhf{}%
  \fancyfoot[C]{\thepage}%
}
\fancypagestyle{revtexfooter}{%
  \fancyhf{}%
  \fancyfoot[C]{\thepage}%
}
\makeatother
\AtBeginDocument{\setstretch{1.25}\pagestyle{revtexfooter}\raggedbottom}
\makeatletter
\renewcommand\frontmatter@title@format{\LARGE\bfseries\centering\parskip\z@skip}

\def\frontmatter@affiliationfont{\normalfont\selectfont}
\newcommand{\AndySectionStar}[1]{%
  \par\addpenalty\@secpenalty
  \addvspace{3.5ex \@plus 1ex \@minus .2ex}%
  \noindent{\normalfont\Large\bfseries #1\par}%
  \nobreak\vspace{2.3ex \@plus .2ex}%
  \@afterheading}
\newcommand{\AndySubsectionStar}[1]{%
  \par\addpenalty\@secpenalty
  \addvspace{3.25ex \@plus 1ex \@minus .2ex}%
  \noindent{\normalfont\large\bfseries #1\par}%
  \nobreak\vspace{1.5ex \@plus .2ex}%
  \@afterheading}
\newcommand{\AndySubsubsectionStar}[1]{%
  \par\addpenalty\@secpenalty
  \addvspace{3.25ex \@plus 1ex \@minus .2ex}%
  \noindent{\normalfont\normalsize\bfseries #1\par}%
  \nobreak\vspace{1.5ex \@plus .2ex}%
  \@afterheading}
\renewcommand\@seccntformat[1]{\csname the#1\endcsname\quad}
\def\@hangfrom@section#1#2#3{\@hangfrom{#1#2}#3}
\def\@hangfroms@section#1#2{#1#2}
\renewcommand\section{\@ifstar{\AndySectionStar}{\@startsection {section}{1}{\z@}%
  {-3.5ex \@plus -1ex \@minus -.2ex}%
  {2.3ex \@plus .2ex}%
  {\normalfont\Large\bfseries}}}
\renewcommand\subsection{\@ifstar{\AndySubsectionStar}{\@startsection{subsection}{2}{-\parindent}%
  {-3.25ex\@plus -1ex \@minus -.2ex}%
  {1.5ex \@plus .2ex}%
  {\normalfont\large\bfseries}}}
\renewcommand\subsubsection{\@ifstar{\AndySubsubsectionStar}{\@startsection{subsubsection}{3}{-\parindent}%
  {-3.25ex\@plus -1ex \@minus -.2ex}%
  {1.5ex \@plus .2ex}%
  {\normalfont\normalsize\bfseries}}}
\renewcommand*\l@section[2]{%
  \ifnum \c@tocdepth >\z@
    \addpenalty\@secpenalty
    \addvspace{0.75em \@plus\p@}%
    \begingroup
      \parindent\z@
      \rightskip\@pnumwidth
      \parfillskip-\@pnumwidth
      \leavevmode\bfseries
      \setlength\@tempdima{2.3em}%
      \advance\leftskip\@tempdima
      \hskip-\leftskip
      #1\nobreak\hfil\nobreak\hb@xt@\@pnumwidth{\hss #2}\par
    \endgroup
  \fi}
\renewcommand*\l@subsection{\@dottedtocline{2}{1.5em}{3.2em}}
\renewcommand*\l@subsubsection{\@dottedtocline{3}{3.8em}{4.1em}}
\makeatother

\usepackage{mathtools}
\tikzstyle{densely dashed}= [dash pattern=on 4pt off 3pt]

\newcommand{\SON}[1]{\mathrm{SO}(#1)}
\newcommand{\SUN}[1]{\mathrm{SU}(#1)}

\usetikzlibrary{matrix, calc}
\tikzset{
  ycell/.style = {
    draw=black,
    line width=0.4pt
  },
  ydel/.style = {
    draw=red!80!black,
    line width=0.9pt
  },
  yadd/.style = {
    draw=green!55!black,
    line width=1.3pt
  }
}
\newcommand{\Ydeleted}[2]{%
  \pgfmathsetmacro{\xleft}{#1-1}
  \pgfmathsetmacro{\xright}{#1}
  \pgfmathsetmacro{\ytop}{1-#2}
  \pgfmathsetmacro{\ybottom}{-#2}

  \foreach \t in {0.08,0.43,0.78}{%
    \draw[ydel]
      ({\xleft+\t},{\ytop}) --
      ({\xleft+\t+0.14},{\ytop});
    \draw[ydel]
      ({\xleft+\t},{\ybottom}) --
      ({\xleft+\t+0.14},{\ybottom});
    \draw[ydel]
      ({\xleft},{\ybottom+\t}) --
      ({\xleft},{\ybottom+\t+0.14});
    \draw[ydel]
      ({\xright},{\ybottom+\t}) --
      ({\xright},{\ybottom+\t+0.14});
  }%
  \foreach \x/\sx in {\xleft/1,\xright/-1}{%
    \foreach \y/\sy in {\ytop/-1,\ybottom/1}{%
      \draw[ydel, solid, line join=miter]
        ({\x+\sx*0.08},\y) --
        (\x,\y) --
        (\x,{\y+\sy*0.08});
    }%
  }%
}
\usepackage{etoolbox}

\newcommand{\Ycell}[3]{%
  \ifstrequal{#3}{ydel}{%
    \Ydeleted{#1}{#2}%
  }{%
    \draw[#3] (#1-1,1-#2) rectangle (#1,-#2);
  }%
}
\newcommand{\Yx}[2]{%
  \draw[red!80!black,line width=0.9pt] (#1-1,1-#2)--(#1,-#2);
  \draw[red!80!black,line width=0.9pt] (#1-1,-#2)--(#1,1-#2);}
\newcommand{\Ydiagram}[1]{%
  \foreach \len [count=\r] in {#1}{%
    \foreach \c in {1,...,\len}{\Ycell{\c}{\r}{ycell}}}}
\newcommand{\dg}[1]{\vcenter{\hbox{\begin{tikzpicture}[x=0.55cm,y=0.55cm]#1\end{tikzpicture}}}}
 
\newcommand{\Rmu}{\dg{\Ydiagram{2,2,2,1}}}
\newcommand{\PairAleft}{\dg{\Ydiagram{2,2,1,1}\Ycell{2}{3}{ydel}\Yx{2}{3}}}
\newcommand{\PairAright}{\dg{\Ydiagram{2,2,2,2}\Ycell{2}{4}{yadd}}}
\newcommand{\PairBleft}{\dg{\Ydiagram{2,2,2}\Ycell{1}{4}{ydel}\Yx{1}{4}}}
\newcommand{\PairBright}{\dg{\Ydiagram{2,2,2,1,1}\Ycell{1}{5}{yadd}}}

\usepackage{dsfont}
\usepackage{booktabs}   
\usepackage{tabularray}
\UseTblrLibrary{booktabs,counter}
\newcounter{assump}
\renewcommand{\theassump}{A\arabic{assump}}
\newcommand{\assumption}[1]{\refstepcounter{assump}\textbf{\theassump}\label{#1}}

\allowdisplaybreaks

\begin{document}

\title{On Higher-Form Fracton Hydrodynamics with Momentum Conservation}

\author{Ben Bobell}
\email{ben.bobell@colorado.edu}

\affiliation{Department of Physics and Center for Theory of Quantum Matter, University of Colorado, Boulder, CO 80309, USA}

\author{Rahul Nandkishore}
\affiliation{Department of Physics and Center for Theory of Quantum Matter, University of Colorado, Boulder, CO 80309, USA}

\author{Andrew Lucas}
\affiliation{Department of Physics and Center for Theory of Quantum Matter, University of Colorado, Boulder, CO 80309, USA}

\begin{abstract}
   Momentum conservation is common in fluids, yet it may be incompatible with the hydrodynamics of theories with exotic conservation laws (such as fracton hydrodynamics) unless symmetries are spontaneously broken. Using the Martin-Siggia-Rose (MSR) effective field theory formalism, we develop a representation-theoretic framework for the hydrodynamics of translationally invariant systems with exotic conservation laws without spontaneous symmetry breaking. Under mild assumptions on the form of hydrodynamics, we show that such theories must satisfy two representation theoretic consistency conditions and a generalized Gauss's law constraint. Higher-form symmetries satisfy these conditions and we derive their ideal hydrodynamics; in three dimensions, the one-form case produces ideal non-relativistic magnetohydrodynamics. Finally, we prove that no other theories are compatible with momentum conservation in $\SON{3}$ nor $\SON{d}$, for sufficiently large odd $d$.
\end{abstract}

\maketitle

\tableofcontents

\section{Introduction}
Kinetic constraints in both quantum and classical systems can dramatically slow thermalization. In the simplest cases, these kinetic constraints are imposed via generalized global symmetries \cite{Gaiotto_2015, Pretko_2016, Pretko_2017,  PPN, Seiberg_2020.1, Seiberg_2020.2, Gorantla_2020, Seiberg_2021, Rudelius_2021, Burnell2022, You_2022, Gorantla_2022,  Sala2022}, whose dynamics can sometimes be understood using \textit{hydrodynamics}--an effective field theory for the near-equilibrium dynamics of local, fluctuating dynamical systems with a conserved charge. One by now well-studied class of such theories involves systems with global charge and dipole (or multipole) conservation, and is known as \textit{fracton hydrodynamics}  \cite{Iaconis_2019, Gromov_2020, Morningstar_2020, Feldmeier_2020, Zhang2020, Feldmeier_2021, Grosvenor2021, Iaconis_2021, Hart_2022, Osborne_2022, Guo_2022, Glorioso_2022, Burchards_2022, Armas_2024, Godkowski_2024, Godkowski_2026}.

\emph{Higher-form symmetries} are another type of generalized symmetry, which act on $p$-dimensional charged objects through a $(d-p)$-dimensional surface in $d$ spatial dimensions \cite{Grozdanov_2019, lake2018higherformsymmetriesspontaneoussymmetry, Hofman_2019, Delacrataz_2020, Gomes_2023}. Textbook magnetohydrodynamics may be reformulated as a hydrodynamic theory for a magnetic one-form symmetry in three spatial dimensions, in which the magnetic flux through an arbitrary two-dimensional surface must be conserved \cite{Grozdanov_2017, glorioso2018effectivefieldtheorymagnetohydrodynamics, Armas_2018, Armas_2019, Armas_2020}. Such exotic theories were generically formalized via group theory for arbitrary groups $G$, with charge densities $\rho_a$ transforming in an irreducible representation $R_a$, in Refs.~\cite{fractonMHD,Qi_2023}\footnote{Throughout this paper, we use \emph{$p$-form symmetry} to refer to a theory whose conserved charge density $\rho_a$ is an antisymmetric rank-$p$ tensor and satisfies a generalized Gauss's law constraint, so that its flux through a $(d-p)$-dimensional surface is conserved. } . This formalism allowed for the description of theories with gauged conservation laws on charge and dipole (or multipole) moments. The symmetry/group theoretic approach has a definite advantage: the hydrodynamic modes depend only on properties of the symmetry itself, and thus avoid any particular microscopic assumptions often needed in textbook derivations.

Source-free magnetohydrodynamics also supports momentum conservation without spontaneously breaking any symmetries. In contrast, the global charge and dipole conserving model was shown in Refs.~\cite{Glorioso_2022, Glorioso_2023} to exhibit spontaneous symmetry breaking, since momentum and the dipole moment generically do not commute. But what if the symmetries are gauged? It is natural to ask: what generalizations of fracton hydrodynamics \textit{are} compatible with momentum conservation without additional spontaneous symmetry breaking? We adopt the abstract framework in Ref.~\cite{Qi_2023}, but allow generic charge densities $\rho_a$ supported on a potentially finite volume $V$. 

We first take a constructive approach, deriving two consistency conditions (dubbed C1 and C2) that are necessary for a theory to conserve momentum. Momentum conservation will impose a generalized Gauss's law constraint (of the form $\Gamma \partial^m \rho_a=0$, for a tensor $\Gamma$), which in turn forces any nonzero charge to the boundary $\partial V$; nontrivial finite charges exist only if $\rho_a $ is supported on a finite volume $V$. We argue that only theories with a single-derivative Gauss's law constraint ($m=1$) are compatible with momentum conservation, and thus one cannot formulate a consistent gauged fracton magnetohydrodynamics with momentum conservation, at least in theories with purely scalar multipole charges. Among these $m=1$ theories with $\SON{3}$ symmetry, our two conditions leave higher-form symmetries as the only ones compatible with momentum conservation.  We conjecture that this result holds more generally in higher dimensions, i.e., for the group $\SON{d}$, and present some partial results towards this end. Although a lot of representation theoretic machinery goes into these results, we also provide a more physical argument for why non-trivial conservation laws in a momentum-conserving theory must contain at least a one-form symmetry, without breaking the structure of ideal hydrodynamics.
Momentum conservation thus does not come for free: outside of higher-form symmetries, it then seems to force an emergent Goldstone boson. 

Our paper is structured as follows: in Section \ref{Sec:scalarMSR} we provide an overview of the Martin-Siggia-Rose (MSR) effective field theory for scalar charges \cite{MSROG, Guo_2022}, before expanding it to generic conserved charges in Section \ref{Sec:Genericcharge}. As an example of our framework, we derive non-relativistic magnetohydrodynamics at ideal order. In Section \ref{Sec:consistencyconditions}, we derive two consistency conditions necessary for a theory to be compatible with momentum conservation. In Section \ref{Sec:PossibleRI}, after providing a pedagogical review of Young tableaux, we determine the possible irreducible representations compatible with consistency condition C1. We conclude in Sections \ref{Sec:Idealsolutions} and \ref{Sec:likelynothingelse} by showing that higher-form symmetries are compatible with momentum conservation in $\SON{d}$, and that essentially no other theories are.

\section{Review of MSR Formalism}
\label{Sec:scalarMSR}
Before describing exotic conserved charges, we review the Martin-Siggia-Rose (MSR) effective field theory framework for ordinary conserved densities \cite{MSROG}. Our review mirrors that of Ref.~\cite{Qi_2023}. These effective field theories describe stochastic fluctuating systems near their steady state with local dynamics. We consider a conserved scalar density $\rho(x,t)$ in $d$ spatial dimensions, defined on some possibly infinite volume $V\subseteq \mathbb{R}^d $. The conserved charge associated to such a density and volume will be $Q(t;V)\coloneq \int_V d^dx \rho(x,t)$. For $Q$ to truly be a conserved charge, we must enforce $dQ/dt=0$. Such charge-conserving dynamics are encoded via the continuity equation
\begin{equation}
    \partial_t\rho +\partial_iJ_i=0,
\end{equation}
where $J_i$ is the current density whose explicit form is yet to be determined and repeated indices will henceforth imply summation unless explicitly stated otherwise. Since we define our theory on an arbitrary volume, we demand that no charge leaks out of the volume $V$, i.e. $J_i\hat{n}_i|_{\partial V}=0$, where $\hat{n}_i$ denotes the normal vector to the boundary $\partial V$. Such a constraint is necessary because
\begin{equation}
    0 = \frac{d Q}{d t} = \int_V d^dx \partial_t\rho =  -\int_V d^dx \partial_i J_i = -\int_{\partial V} \hat{n}_iJ_i. 
\end{equation}
Demanding this for generic $V$ requires the integrand to vanish, giving $J_i\hat{n}_i|_{\partial V}=0$. Our goal is to encode these dynamics as a classical effective field theory. We write the action $S$ in terms of $\rho(x,t)$ and the conjugate \textit{noise field} $\pi(x,t)$, which encodes the random fluctuations in the microscopic dynamics and acts like a conjugate momentum for $\rho$:
\begin{equation}
    S= \int dt d^dx \left(\pi \partial_t\rho-H(\pi, \rho)\right).
\end{equation}
Here, $H(\pi,\rho)$ is the MSR effective Hamiltonian, whose explicit form is to be determined. We demand that $H$ contains no $\pi$-independent terms, $H(0,\rho)=0$ (which is related to demanding that the stochastic dynamics for $\rho$ have a normalized probability distribution). We choose this action $S$ because taking variational derivatives with respect to $\pi$ and setting $\pi$ to zero will generate the dynamics of $\rho$:
\begin{equation}
    0=\frac{\delta S}{\delta \pi}\Big|_{\pi =0}= \partial_t\rho - \frac{\partial H}{\partial  \pi}\Big|_{\pi =0}.
\end{equation}
Comparing with the continuity equation and integrating by parts to move the derivative off $J_i$, we obtain $H(\pi, \rho)=\partial_i \pi J_i $ (up to a boundary term that vanishes via our boundary condition on $J_i$). The principle of the MSR formalism is thus to write the most general $H$ which is consistent with the symmetries of our theory and match terms to find $J_i$. Such a Hamiltonian generically has infinitely many terms, corresponding to arbitrarily many spatial derivatives on $\pi$. At \textit{ideal order}, $J_i$ is independent of derivatives of $\pi$, while at $n$-th order, $J_i$ is written in terms of at most $n$ derivatives. 

We constrain $H$ by imposing the symmetries of the theory as invariance under shifts of $\pi$. In the most minimal theory, $H$ is invariant under spatially-independent shifts $c(t)$ of the noise field, $\pi \to \pi +c(t)$. Under such a shift, the action changes by
\begin{equation}
    S \to S + \int dt  c(t) \partial_t Q.
\end{equation}
Invariance of the action under this shift is precisely the statement that $\partial_tQ=0 $ because $c(t)$ is arbitrary. We often additionally demand time-reversal symmetry (invariance under $t\to -t$). To do so, we must first postulate the existence of a stationary state of the form
\begin{equation}
    P_s[\rho] = e^{-\Phi[\rho]},
\end{equation}
 where $\Phi[\rho]$ is a functional of the density and acts like a thermodynamic potential. We can define a conjugate chemical potential
 \begin{equation}
     \mu(x)\coloneq \frac{\delta \Phi}{\delta \rho(x)}
 \end{equation}
 and, though we shall not use it here, the thermodynamic pressure $P$ as the Legendre transform of $\Phi$, satisfying $\rho\equiv \delta P/\delta \mu(x) $. It was shown in Ref.~\cite{Guo_2022} that, in the weak noise limit, time-reversal symmetry corresponds to $t\to -t$ together with the transformation
 \begin{equation}
     \pi \to -\pi +i\mu  .
 \end{equation}
 Finally, we assume that statistical fluctuations are bounded:
 \begin{equation}
     \mathrm{Im}(H)\leq 0.
 \end{equation}
Together, these constraints fix the MSR Hamiltonian $H$. The $\pi$ shift invariance forces each term to depend on a spatial derivative of $\pi$, time reversal symmetry forces these terms to be accompanied by a $\pi-i\mu$ term, and $\mathrm{Im}(H)\leq 0$ forces the sign of the transport coefficient. At first order in one-spatial dimension ($d=1$), $H$ shall be
 \begin{equation}
     H=  -i\sigma(\mu) \partial_x\pi\partial_x(\pi-i\mu),
 \end{equation}
 where $\sigma(\mu)$ is a function of $\mu$ that contains no spatial derivatives. It is natural to rephrase the dynamics in terms of the chemical potential $\mu$: using $\partial_t\rho = \chi \partial_t\mu  $ (with the thermodynamic susceptibility $ \chi \coloneq \partial \rho/\partial \mu $), taking a variational derivative with respect to $\pi $, and setting $\pi =0$ generates the equation of motion
 \begin{equation}
     \partial_t \mu -\frac{1}{\chi }\partial_x(\sigma \partial_x\mu)=0,
 \end{equation}
 which implicitly defines its current. In the following section, we shall generalize this construction to densities transforming in exotic representations of $\SON{d}$ and impose momentum conservation.

\section{General Framework for Exotic Conserved Charges }\label{Sec:Genericcharge}
We now generalize our discussion to more exotic and generic charge densities. Before we develop our framework, we enumerate the assumptions we make about the form that ideal hydrodynamics can take in Table~\ref{tab:assumptions}. The consistency conditions derived in Section~\ref{Sec:consistencyconditions} are consistency conditions with respect to these assumptions.

\begin{table}[t]
\centering
\small
\begin{tblr}{colspec={@{} l Q[l,wd=0.18\linewidth] Q[j,wd=0.66\linewidth] @{}}, rowsep=2pt}
\toprule
 & \textbf{Keyword} & \textbf{Assumption Statement} \\
\midrule
\assumption{as:locality} & Locality &
The stress tensor $\tau_{ji}$ and current $J_{ai}$ are local functions of
the hydrodynamic variables, as is required for the validity of hydrodynamics. \\
\assumption{as:symmetry} & Symmetries &
Translational and $\SON{d}$ symmetry for $d\geq 3$ (although our framework applies to general finite-dimensional groups $G$). Theorem~\ref{thm:pform} additionally requires that $d$ is odd (we believe this can be relaxed) and Section~\ref{sec:RIchoice} assumes $d\neq 4$. We do not assume boost invariance. \\
\assumption{as:msr} & MSR Hydrodynamics &
Time-reversal symmetry with a stationary state $P_s=e^{-\Phi}$. \\
\assumption{as:exact} & Exactness &
Every solution of $\Gamma_{ia\alpha}\partial_i f_a=0$ may be written as $f_a=\Gamma_{j_1\cdots j_maI}\partial_{j_1}\cdots \partial_{j_m}g_I$, for some $g_I$ transforming in an irreducible representation $R_I$. We are unaware of any theories that violate this exactness assumption, but any such theory lies outside the scope of our framework. \\
\assumption{as:entropy} & Entropy Current  &
The entropy current takes the form $s_i=Pv_i+\cdots$ (equivalently, $J_{ai}=\rho_av_i+\cdots$). We are unaware of any momentum-conserving theories that violate this form.  \\
\assumption{as:boundary} & Boundary Conditions &
 No current flows out of our volume $V$ (so $\hat{n}_iJ_{ai}|_{\partial V}=0$), derivatives of the density $\rho_a$ behave nicely on the boundary ($\hat{n}_j \Gamma_{jaI} \partial_k \rho_a|_{\partial V}=0 $), and we work in the sector with no dynamically frozen background ($c_I=0$, see Section~\ref{sec:generalizedgausslawsec}). \\
\assumption{as:generic} & Genericness &
All conditions are required to hold for arbitrary densities allowed by symmetries and constraints. For instance, the condition $\Gamma_{ia\alpha}\partial_if_a=0$ follows from requiring $ \int_V d^dx (\partial_if_a)\Gamma_{ia\alpha}J_{\alpha}=0$ for arbitrary $J_\alpha$.  \\
\bottomrule
\end{tblr}
\caption{Assumptions under which C1, C2, and Theorem~\ref{thm:pform} are derived.}
\label{tab:assumptions}
\end{table}

Consider a system in $d$ spatial dimensions whose dynamics remain invariant under the group $G$, characterized by a conserved charge density $\rho_a$ and a current density $J_{\alpha}$. To be as general as possible, let $Q_a= \int_V d^dx \rho_a$, for some large, possibly infinite volume $V\subseteq \mathbb{R}^d $. The charge density $\rho_{a}$ transforms in the irreducible representation $R_a$, while $J_\alpha$ transforms in a generally reducible representation $ R_\alpha = \bigoplus_k R_{\alpha_k}$. For the remainder of this paper, we will select $G=\SON{d}$ for convenience, although the framework developed here generalizes to any finite-dimensional group $G$ (Assumption~\ref{as:symmetry}). Likewise, we are interested in theories exhibiting momentum conservation, featuring a conserved momentum density $g_i$ that transforms in the $d$-dimensional vector representation $\mathcal{V}$, alongside a corresponding stress tensor $\tau_{ij}$ (Assumption~\ref{as:symmetry}). The continuity equations governing these dynamics are given by
\begin{equation}
\label{eq:continuityeqn}
   \partial_t\rho_a+\partial_i J_{ai} =0, \quad  \partial_t g_j +\partial_i \tau_{ji}=0,  
\end{equation}
where $J_{ai}= \Gamma_{ia\alpha}J_{\alpha}$ and the tensors $\Gamma_{ia\alpha}$ are generalized Clebsch-Gordan coefficients. More generally, a rank-$m$ generalized Clebsch-Gordan coefficient $\Gamma_{i_1\cdots i_m a\alpha}$ is nonzero if $R_a$ appears in the irreducible decomposition of $( \mathcal{V}^{\otimes m} )\otimes R_\alpha $. 
For $Q_a$ to be a well-defined charge, we require that current perpendicular to $\partial V$ must identically vanish: $\hat{n}_i J_{ai}|_{\partial V}=0$, where $\hat{n}_i$ is the unit vector perpendicular to $\partial V$ (Assumption~\ref{as:boundary}). Such a condition enforces that no charge can leave the volume $V$. We wish to study conserved quantities of the form
\begin{equation}
    Q[f_a] \coloneq \int_V d^dx f_a\rho_a,
\end{equation}
where $f_a$ are functions of $x$.  Clearly, $f_a$ may be taken to be a constant. We may also consider whether it is possible to take $x$-dependent $f_a$. For $Q[f_a]$ to be a conserved quantity, its time derivative must vanish:
\begin{equation}
   0= \frac{d Q[f_a]}{dt} =  \int_V d^dx f_a\partial_t\rho_a =  -\int_V d^dx f_a\Gamma_{ia\alpha}\partial_iJ_{\alpha}=  \int_V d^dx (\Gamma_{ia\alpha}\partial_if_a)J_{\alpha}.
\end{equation}
In the final step, we integrated by parts, utilizing the fact that $\hat{n}_i\Gamma_{ia\alpha}J_\alpha|_{\partial V}=0$ to drop the boundary term. Since this must hold for arbitrary currents (Assumption~\ref{as:generic}), we require \begin{equation}
    \Gamma_{ia\alpha}\partial_i f_a=0.
\end{equation}
We can now encode the above constraints in terms of the MSR effective field theory framework. We introduce the conjugate noise field $ \pi_a$ and chemical potential $\mu_a$, corresponding to our charge density $\rho_a$, alongside the conjugate noise field $\sigma_i$ and chemical potential $v_i$ (the fluid velocity) corresponding to our momentum density $g_i$. The MSR action is given by
\begin{equation}
    S= \int dt d^dx \left[\pi_a\partial_t\rho_a +\sigma_i\partial_tg_i-H(\pi_a,\sigma_i,\rho_a,g_i)\right],
\end{equation}
where the Hamiltonian $H(\pi_a,\sigma_i,\rho_a,g_i)$ satisfies $H(0,0,\rho_a,g_i)=0$ and the time-reversal constraint:
\begin{equation}
    \label{eq:TRS}
    H(i\mu_a,iv_i,\rho_a,g_i) = i\partial_i s_i.
\end{equation}
Here, $s_i$ is proportional to the entropy current density. To encode the equations of motion given in Eq.~\eqref{eq:continuityeqn} and ensure invariance under $\pi_a\to \pi_a+f_a$, we write $H$ as 
\begin{equation}
    H(\pi_a,\sigma_i,\rho_a,g_i)= \partial_i \pi_a  J_{ai} + \partial_i \sigma_j \tau_{ji},
\end{equation}
where we shall henceforth adopt an index convention in which \textbf{each generalized index runs over an orthonormal basis of its own representation and separate indices are summed independently.} For instance, an antisymmetric $p$-form index $a$ takes the form $a=(a_1,\ldots ,a_p)$, where $a_1<\ldots<a_p$ (i.e., its components are \textit{strictly ordered and unique}). If $G=\SON{3}$ and $a$ is a 3-form index, the only allowed value is $a_1a_2a_3=123$, so
\begin{equation}
    \epsilon_a\epsilon_a = \sum_{a_1<a_2<a_3 }\epsilon_a^2 =\epsilon_{123}^2 =1.
\end{equation}
However, if $i$, $j$, $k$ are three separate vector indices, each is summed independently:
\begin{equation}
    \epsilon_{ijk} \epsilon_{ijk}= \sum_{ijk}\epsilon_{ijk}^2 =6. 
\end{equation}
The value of an index contraction thus depends on how indices are grouped into generalized indices; the two results differ by $3!$, the number of orderings of $i$, $j$, and $k$. To avoid confusion, each index letter ($i$, $j$, $a$, etc.) denotes a single generalized index, and all index contractions are written in terms of these labels, unless explicitly stated otherwise. This convention avoids cumbersome combinatorial prefactors and makes $\partial\rho_a/\partial \rho_b=\delta_{ab}$ (i.e., the identity on $R_a$).

With our framework established, we now turn our attention to determining the constitutive relations. In Section~\ref{Sec:obstructionansatz}, we show that momentum conservation must impose a constraint on the allowed charge densities $\rho_a$. We then demonstrate that in ideal non-relativistic magnetohydrodynamics such a constraint is precisely Gauss's law (Section~\ref{Sec:IdealMHD}), which provides a template for constructing the constraint for arbitrary charges in Section~\ref{Sec:consistencyconditions}.

\subsection{Obstruction to a Simple Ansatz}
\label{Sec:obstructionansatz}

First, we will naively assume that momentum conservation places no constraints on our charge density and observe that this will lead to a contradiction with Assumption~\ref{as:generic}. Inserting our choice for $H$ into Eq.~\eqref{eq:TRS}, we arrive at
\begin{equation}
     (\partial_i \mu_a ) J_{ai} + (\partial_i v_j )\tau_{ji} =\partial_i s_i = \frac{\partial s_i}{\partial \mu_a}\partial_i \mu_a + \frac{\partial s_i}{\partial v_j}\partial_i v_j ,
\end{equation}
where the last step follows from the chain rule.  Equating the coefficients of the gradients $\partial_i v_j$ and $\partial_i \mu_a$, we extract the constitutive relations:
\begin{equation}
    \tau_{ji} = \frac{\partial s_i}{\partial v_j}, \quad J_{ai} = \frac{\partial s_i}{\partial \mu_a}.
\end{equation}
To annihilate the current, we define the complementary representation $R_{\tilde{\alpha}}$. These are representations present in the irrep decomposition for $\mathcal{V}\otimes  R_{a} $ that are excluded from the current's representations $\oplus_{k}R_{\alpha_k}$, meaning $ (\oplus_{k}R_{{\alpha}_k}) \cap (\oplus_k R_{\tilde{\alpha}_k})=\emptyset $. Because $J_{ai}= \Gamma_{ia\alpha}J_\alpha$, projecting into $R_{\tilde{\alpha}}$ by applying the Clebsch-Gordan coefficients $\Gamma_{ia\tilde{\alpha}}$ to $J_{ai}$ must entirely annihilate $J_{ai}$:
\begin{equation}
    \label{eq:killing}
   0= \Gamma_{ia\tilde{\alpha}} J_{ai}=\Gamma_{ia\tilde{\alpha}} \frac{\partial s_i}{\partial \mu_a}.
\end{equation}
To proceed, we introduce the insightful ansatz $s_i =Pv_i- A_{bi}\rho_b$, where the unknown coefficient $A_{bi}$ can depend on $\mu_a$ and $v_i$ and $P$ is the thermodynamic pressure (Assumption~\ref{as:entropy}). This ansatz is motivated by standard hydrodynamics, where the entropy current contains a $Pv_i$ term. Since $\rho_a \equiv \partial P/\partial \mu_a$, this term contributes $\rho_av_i$ to the current. Plugging this ansatz into Eq.~\eqref{eq:killing} and defining the susceptibility $\chi_{ab}\coloneq \partial \rho_a/ \partial\mu_b$, we obtain
\begin{equation}
    \Gamma_{ia\tilde{\alpha}} \left(\rho_av_i -\frac{\partial A_{bi}}{\partial \mu_a}\rho_b-A_{bi}\chi_{ab}  \right)=0.
\end{equation}
Because the condition must hold for arbitrary densities and potentials (Assumption~\ref{as:generic}), we can separate the terms proportional to the density from those proportional to the susceptibility. Satisfying this condition in general requires, by Assumption~\ref{as:generic}, $A_{bi}=0$ (more precisely, its projection onto $R_{\tilde{\alpha}}$ must vanish, as it is the only term proportional to $\chi_{ab}$), yielding
\begin{equation}
      \Gamma_{ia\tilde{\alpha}}  \rho_av_i=0.
\end{equation}
This condition places a strict physical constraint on $\rho_av_i$: no component of $\rho_av_i$ (which transforms under $\mathcal{V}\otimes R_a $) can lie in the representation $R_{\tilde{\alpha}}$. However, for generic choices of $\rho_a$ and $v_i$, the product $\rho_av_i $ will generically have a non-trivial component in $R_{\tilde{\alpha}}$. Requiring that the overlap must vanish thus provides a contradiction with  Assumption~\ref{as:generic}, demonstrating that the naive assumption that momentum conservation does not provide a constraint on $\rho_a$ cannot generically hold. 

We must therefore explicitly account for the constraint that momentum conservation places upon $\rho_a$. For a constraint to be useful in restricting the current, it must involve a single spatial derivative on the charge density $\rho_a$. This is because only the gradients $\partial_i\mu_a$ and $\partial_i v_j$ enter the constitutive relations in the above derivation; any constraint on $\rho_a$ must match this order to be able to naturally restrict the current. As we will demonstrate in Section \ref{Sec:consistencyconditions}, there exists an irreducible representation $R_I$ such that $\Gamma_{iaI}\partial_i \rho_a=0$.

\subsection{Ideal Non-Relativistic Magnetohydrodynamics}
\label{Sec:IdealMHD}

As a warm-up, we shall solve the case of ideal, non-relativistic magnetohydrodynamics before we consider generic $\rho_a$. We shall work in three spatial dimensions, so our group $G=\SON{3}$, and we let the charge density $\rho_i$ and current $J_k$ transform in the vector representation $\mathcal{V}$. Our continuity equation is
\begin{equation}
    \partial_t \rho_i +\epsilon_{ijk}\partial_j J_k=0,
\end{equation}
which is Faraday's law in textbook Maxwell electromagnetism under the identification of $\rho_i=B_i$ (the magnetic field) and $J_k=E_k$ (the electric field). We study conserved charges of the form
\begin{equation}
    Q[f_i;V] = \int_V d^3x  f_i \rho_i,
\end{equation}
where we demand $\epsilon_{ijk}\partial_j f_i=0$. Since $f_i$ is a curl-free function on a simply connected domain in $\mathbb{R}^3$, it can be written as the gradient of some scalar function $g$: $ f_i=\partial_i g $. Such a constraint produces the standard magnetic one-form symmetry. Let $V' \subset V$ and choose $g $ to be the indicator function of the volume $V'$
\begin{equation}
    g(x;V') = \begin{cases}
        1 & \text{ if } x \in V' \\
        0 & \text{otherwise}
    \end{cases},
\end{equation}
so that $f_i(x)=-\delta(x\in \partial V') \hat{n}_i(x)$, where $\hat{n}_i$ is the unit normal vector. Plugging this into $Q[f_i;V]$ yields the one-form charge
\begin{equation}
    Q[f_i;\partial V'] = \int_{ \partial V'} dS \rho_i \hat{n}_i.
\end{equation}
One-form symmetries thus lie within our class of generalized symmetries.

Momentum conservation further constrains $\rho_i$. Demanding that $\{P_k, Q[f_i;V]\}=0$ for arbitrary $g$ yields 
\begin{equation}
    0=\{P_k, Q[f_i;V]\}= \int_{V} d^3x f_i \partial_k \rho_i =\int_{V} d^3x \partial_i g \partial_k \rho_i = \int_{\partial V} dS \hat{n}_i  g \partial_k \rho_i -  \int_{V} d^3x g \partial_i\partial_k \rho_i. 
\end{equation}
Choosing an infinite family of $g$ that vanish on $\partial V$ (which is always accessible via step functions) causes the bulk integral to vanish for every such choice in $g$, implying that $\partial_k \partial_i\rho_i=0$. A vanishing gradient gives the Gauss's law constraint $\partial_i \rho_i=c(t)$, for $c(t)$ a constant in space but in principle time-dependent. In fact, $c(t)$ must be constant in time. Applying $\partial_i$ to our continuity equation yields
\begin{equation}
   0= \partial_t(\partial_i\rho_i) +\epsilon_{ijk}\partial_i\partial_j J_k = \partial_t(\partial_i\rho_i) = \partial_tc(t).
\end{equation}
Because $\partial_i\rho_i$ is constant in space and time, its dynamics are frozen and we can choose $\partial_i\rho_i=0$ (Assumption~\ref{as:boundary})  without loss of generality (we can always subtract off the dynamically frozen $ cx_i/3 $ part without affecting our hydrodynamics). 

With our Gauss's law constraint $\partial_i\rho_i=0$ chosen, we return to studying the Poisson bracket for a generic $g$. The bulk term now vanishes and the Poisson bracket reduces to the boundary term
\begin{equation}
    \{P_k, Q[f_i;V]\}=\int_{\partial V}dS\hat{n}_i g\partial_k \rho_i  ,
\end{equation}
 which itself must vanish for \textit{arbitrary} $g$. This holds if we either restrict ourselves to a class of theories in which $g|_{\partial V}=0$ or if we impose on $\rho_i$ the boundary condition $\hat{n}_i \partial_k \rho_i|_{\partial V}=0  $ (Assumption \ref{as:boundary}). We reject the former option to allow for nontrivial global charges: selecting $g|_{\partial V}=0$ imposes
\begin{equation}
    Q[f_i] = \int_V d^3x f_i \rho_i =  \int_V d^3x \partial_i g \rho_i = \int_{\partial V} dS\hat{n}_i  g \rho_i - \int_{V}d^3x g \partial_i\rho_i=0,
\end{equation}
which entirely removes nontrivial charges from our theory. We further note that subtracting off our background $ cx_i/3$ is now necessary for momentum conservation. The derived boundary condition $\hat{n}_i \partial_k \rho_i|_{\partial V}=0$ is required  for the Poisson bracket $\{P_k, Q[f_i;V]\}$ to vanish. If $\rho_i$ satisfies this boundary condition, a configuration with the background, $\rho_i'=\rho_i +cx_i/3$ violates it:
\begin{equation}
   \hat{n}_i \partial_k (\rho_i')|_{\partial V}= \hat{n}_i \partial_k (cx_i/3)|_{\partial V} = (c/3)\hat{n}_k \neq 0 \qquad \text{if } c\neq0,
\end{equation}
leaving a nonvanishing boundary term and thus a nonvanishing Poisson bracket. The background's presence thus obstructs momentum conservation, which further justifies our choice of $c=0$.

Together these constraints have a clear physical meaning. Momentum conservation demands that translations of our surface $\partial V$ cannot change the charge, which is enforced by the boundary condition $\hat{n}_i \partial_k \rho_i|_{\partial V}=0$. Translational invariance further prevents our charge $\rho_i$ from having local sources or sinks, i.e. $\partial_i\rho_i=0$ (there cannot be magnetic monopoles). 

We repeat the naive argument in the previous section, except we now impose our Gauss's law constraint via a Lagrange multiplier $\lambda$. The constitutive relations are thus
\begin{equation}
    \partial_i \mu_j J_{ji} +\partial_i v_j \tau_{ji} = \partial_i s_i +\lambda \partial_i \rho_i =\frac{\partial s_i}{\partial v_j}\partial_i v_j +\frac{\partial s_i}{\partial \mu_j}\partial_i \mu_j + \lambda\phi_{ij}\partial_i v_j +\lambda \chi_{ij} \partial_i \mu_j,
\end{equation}
where the last step follows from the chain rule. We have defined the generalized thermodynamic susceptibilities $\phi_{ij}\coloneq \partial \rho_i/\partial v_j$ and $\chi_{ij}\coloneq \partial \rho_i/\partial \mu_j$. Note that $\chi_{ij}=\chi_{ji}$ since $\rho_i \equiv \partial P/\partial \mu_i$ ensures that $\chi_{ij}$ is a symmetric tensor composed of the second derivatives of the pressure. Equating the motifs provides the constitutive relations
\begin{equation}
  \tau_{ji}= \frac{\partial s_i}{\partial v_j } +\lambda \phi_{ij}, \quad   J_{ji} = \frac{\partial s_i}{\partial \mu_j} + \lambda \chi_{ij}.
\end{equation}
 We shall see that incorporating the Lagrange multiplier $\lambda$ removes the contradiction in the previous section. Under a similar ansatz $s_i =Pv_i -A_{ki}\rho_k$, we first evaluate our current:
\begin{equation}
     J_{ji}= \rho_j v_i- \frac{\partial A_{ki}}{\partial \mu_j}\rho_k -A_{ki}\chi_{kj}  +\lambda \chi_{ij}.
\end{equation}
Since $J_{ji}$ is antisymmetric, contracting with an $\SON{3}$-invariant tensor $T_{ij,mn}$ that is symmetric in $(ij)$ and $(mn)$ and also between swapping $(ij)\leftrightarrow (mn)$ will cause both sides to vanish:
\begin{equation}
    0 =T_{ij,mn}\left(\rho_j v_i- \frac{\partial A_{ki}}{\partial \mu_j}\rho_k -A_{ki}\chi_{kj}  +\lambda \chi_{ij}\right).
\end{equation}
Equating terms with and without a susceptibility $\chi_{ij}$ yields two conditions:
\begin{equation}
    A_{ki}=\lambda\delta_{ki}, \quad  T_{ij,mn}\rho_j v_i =T_{ij,mn}\frac{\partial A_{ki}}{\partial \mu_j}\rho_k.
\end{equation}
Plugging $A_{ki}$ into the above produces
\begin{equation}
    T_{ij,mn}\rho_j v_i =T_{ij,mn}\frac{\partial \lambda }{\partial \mu_j}\rho_i.
\end{equation}
Since every symmetric $\SON{3}$-invariant tensor can be decomposed as 
\begin{equation}
     T_{ij,mn} =c_1\left(\delta_{im}\delta_{jn}+\delta_{in}\delta_{jm}-\frac{2}{3}\delta_{ij}\delta_{mn}\right) +c_2 \delta_{ij}\delta_{mn},
\end{equation}
we can plug that into the above and observe that $\lambda $ must be $\lambda = \mu_iv_i$. This yields (after doing a similar calculation for $\tau_{ji}$)
\begin{equation}
    \tau_{ji} =P\delta_{ij}+g_jv_i-\mu_j\rho_i, \quad J_{ji}=\rho_j v_i -\rho_iv_j.
\end{equation}
We have successfully constructed the non-dissipative constitutive relations for non-relativistic magnetohydrodynamics; we are not aware of earlier work deriving this effective field theory for explicitly non-relativistic systems.  Note that our relations closely mirror those of the relativistic case (Ref.~\cite{Grozdanov_2017}), with the primary differences arising from our framework's lack of boost invariance.

\section{Consistency Conditions and the Generalized Gauss's Law for Generic Momentum Conserving Theories}
\label{Sec:consistencyconditions}
In this section, we derive the spatial derivative constraint on $\rho_a$ imposed by momentum conservation and subsequently utilize it to solve the ideal hydrodynamics. For the hydrodynamic theory to be consistent, we further assert two representation-theoretic consistency conditions that must be satisfied.

\subsection{The First Consistency Condition (C1) }
To proceed, we shall assume that if the differential $\Gamma_{ia\alpha}\partial_i f_a=0$ has an infinite family of non-trivial solutions, they may be written in the form $f_a= \Gamma_{j_1\cdots j_m aI}\partial_{j_1}\cdots \partial_{j_m} g_I $, where $g_I$ transforms under the irreducible representation $R_I$ (Assumption~\ref{as:exact}). Such an assumption holds in all theories we are aware of (such as higher-form symmetries). Note that there may be multiple irreducible representations $R_I$ that satisfy the equation $\Gamma_{ia\alpha}\partial_i f_a = 0$; for now, we only concern ourselves with the existence of at least one such valid $R_I$. Such multi-derivative solutions exist in theories like fracton magnetohydrodynamics \cite{fractonMHD}, where $R_a=\textbf{5}$ (the symmetric traceless representation) and $f_{ij}=(\partial_i\partial_j -\frac{1}{3}\delta_{ij}\partial^2) \phi $, for a scalar field $\phi$. As we will demonstrate in Eq.~\eqref{eq:momentumconstraint}, this exactness assumption is quite powerful because the same differential operator $\Gamma_{j_1 \cdots j_m a I} \partial_{j_1} \cdots \partial_{j_m}$ will control the spatial derivative constraint on the charge density $\rho_a$.

Given that only a single derivative constraint can be utilized to restrict the current (see Section \ref{Sec:obstructionansatz}) and $m$ will label the order of the derivative constraints on $\rho_a$, we shall restrict ourselves for the remainder of this paper to theories in which $R_a$, $R_\alpha$, and $R_I$ are chosen such that $m=1$. We postulate that theories with $m\geq 2$ will generically exhibit the spontaneous symmetry breaking of emergent conservation laws; we leave a detailed confirmation to future work (also see the derivative-counting nonlocality argument at the beginning of Section~\ref{Sec:secondconsistencycondition}). Substituting our solution for $f_a$ into the prior vanishing condition gives
\begin{equation}
\Gamma_{ia\alpha}\Gamma_{jaI}\partial_i\partial_{j} g_I=0.
\end{equation}
Because partial derivatives commute, this condition requires that the completely symmetric part of the product $ \Gamma_{ia\alpha}\Gamma_{j aI}$ vanishes over the indices $(i,j)$. More formally, if we view  $\Gamma_{ia\alpha}\Gamma_{j aI}$ as a map on $\mathcal{V}^{\otimes 2}$, it must vanish on the symmetric subspace $\text{Sym}^{2}(\mathcal{V})\subseteq \mathcal{V}^{\otimes 2}$, which is spanned by vectors of the form $v_1\otimes v_2+v_2\otimes v_1$.  The product $\Gamma_{ia\alpha}\Gamma_{jaI}$ acts as an intertwiner mapping  $\mathcal{V}^{\otimes 2}\to (\oplus_k R_{\alpha_k}) \otimes R_I$. By Schur's lemma, an intertwiner can only map to isomorphic irreducible representations. Therefore, the image of $\Gamma_{ia\alpha}\Gamma_{j aI}(\text{Sym}^{2}(\mathcal{V}))$ must consist exclusively of irreducible representations present in the irrep decomposition of $\text{Sym}^{2}(\mathcal{V})$. To guarantee that this mapping has no symmetric part, we strictly require that $\Gamma_{ia\alpha}\Gamma_{j aI}(\text{Sym}^{2}(\mathcal{V}))=0$.  This is generally satisfied if the irrep decomposition of $(\oplus_kR_{\alpha_k})\otimes R_I$ contains no completely symmetric components, meaning
\begin{equation}
 \label{eq: firstcondition}
  \text{Sym}^{2}(\mathcal{V})\cap \left[(\oplus_kR_{\alpha_k})\otimes R_I \right]=\emptyset. 
\end{equation} 
We shall call the above condition our \textbf{First Consistency Condition (C1)} for formulating theories compatible with momentum conservation that avoid spontaneous symmetry breaking. In Section \ref{sec:RIchoice}, we will explicitly analyze which choices of $R_I$ satisfy this condition. We note that this condition becomes increasingly restrictive as more irreducible representations are present in $R_\alpha$, since all irreducible representations must simultaneously satisfy our constraint. 

\subsection{Momentum Conservation Imposes a Generalized Gauss's Law Constraint}
\label{sec:generalizedgausslawsec}
We now ask: which theories are compatible with momentum conservation without acquiring a Goldstone mode? Imposing momentum conservation will require that the Poisson bracket $\{Q[f_a],P_k\}=0$ for any $k$. If this bracket were non-trivial, translation would generate a new, non-trivial conserved quantity, signaling a potential Goldstone mode (as occurs, for example, in the theory with dipole and momentum conservation \cite{Glorioso_2022, Glorioso_2023}). We observed in Section \ref{Sec:obstructionansatz} that a constraint on the density $\rho_a$ is \textit{required} for ideal hydrodynamics to be consistent. In the one-form case, such a constraint was found to be a Gauss's law constraint $\partial_i\rho_i=0$ (see Section \ref{Sec:IdealMHD}). In this section, we show that for a general density $\rho_a$, the density constraint becomes a \textit{generalized} Gauss's law. We incorporate this generalized Gauss's law into the MSR framework in Section \ref{Sec:secondconsistencycondition} to derive the Second Consistency Condition (C2) for momentum conservation. 

To ensure invariance under translations and prevent spontaneous symmetry breaking, we demand
\begin{equation}
\label{eq:momentumconstraint}
    0= \{P_k, Q[f_a;V]\} = \int_V d^dx  \Gamma_{j aI}\partial_{j} g_I \partial_k \rho_a = \int_{\partial V} dS \hat{n}_j \Gamma_{jaI}g_I \partial_k\rho_a -\int_V d^dx \Gamma_{j aI}g_I \partial_{j} \partial_k \rho_a,
\end{equation}
where we have integrated by parts to shift the derivatives off $g_I$. Our argument is now identical to the one-form case. Selecting a family of $g_I$ that vanish on the boundary forces the bulk integrand to vanish and thus $\Gamma_{jaI} \partial_j \partial_k \rho_a = 0$. Since $k$ is an arbitrary spatial index, $\Gamma_{jaI} \partial_j \rho_a$ has a vanishing gradient and thus is constant in space: $\Gamma_{jaI} \partial_j \rho_a = c_I(t)$, where $c_I(t)$ still in principle depends on time. We can apply $\Gamma_{jaI}\partial_j$ to the continuity equation to show that $c_I(t)$ is also constant in time:
\begin{equation}
    0 = \partial_t(\Gamma_{jaI}\partial_j\rho_a) + \Gamma_{jaI}\Gamma_{ia\alpha}\partial_j \partial _iJ_\alpha=\partial_t(\Gamma_{jaI}\partial_j\rho_a) =\partial_t(c_I),
\end{equation}
where $\Gamma_{jaI}\Gamma_{ia\alpha}\partial_j \partial _iJ_\alpha=0$ by C1. We can thus take $c_I=0$ without loss of generality since it is dynamically frozen (by  Assumption~\ref{as:boundary}), generating the generalized Gauss's law constraint
\begin{equation}
\Gamma_{ja I}\partial_{j}\rho_a=0.
\end{equation}
After we have chosen our Gauss's law constraint, our bracket becomes just a boundary term for generic $g_I$:
\begin{equation}
     \{P_k, Q[f_a;V]\}= \int_{\partial V} dS \hat{n}_j \Gamma_{jaI}g_I \partial_k\rho_a,
\end{equation}
which itself vanishes for \textit{arbitrary} $g_I$. This holds if we either restrict ourselves to functions $g_I$ satisying $g_I|_{\partial V}=0$ or if we impose the boundary condition $\hat{n}_j \Gamma_{jaI} \partial_k\rho_a|_{\partial V}=0$. We again adopt the latter (Assumption~\ref{as:boundary}) to allow non-trivial charge configurations (the derivation is analogous to the one-form case and is thus omitted). 

We further note that this generalized Gauss's law constraint restricts our conserved charges to lie on the boundary, since 
\begin{equation}
    Q[f_a] =  \int_V d^dx \Gamma_{jaI}\partial_j g_I \rho_a = \int_{\partial V}dS\hat{n}_j\Gamma_{jaI} g_I \rho_a - \int_V d^dx \Gamma_{jaI} g_I \partial_j\rho_a = \int_{\partial V}dS\hat{n}_j\Gamma_{jaI} g_I \rho_a,
\end{equation}
which leaves only the surface term. Momentum conservation thus imposes that all our charge is restricted to the boundary $\partial V$. Physically, states with a nonzero global charge $Q\neq 0$ can only be accommodated by formulating the theory on a finite-volume manifold with a boundary. This restriction allows us to differentiate whether a theory has spontaneous symmetry breaking forcing it into the zero charge sector. Such a case may occur when considering dipolar charges, in which $\{P_k,Q_{\mathrm{dipole}}\}\propto Q_{\mathrm{charge}}$.

\subsection{The Second Consistency Condition (C2)}
\label{Sec:secondconsistencycondition}

We now encode the above constraints into the MSR effective field theory framework (Assumption~\ref{as:msr}), following a procedure analogous to Section~\ref{Sec:obstructionansatz}. However, we now explicitly enforce the constraint $\Gamma_{iaI}\partial_i\rho_a=0$ via a Lagrange multiplier field $\lambda_I$. Inserting our choice for $H$ into Eq.~\eqref{eq:TRS} and adding our Lagrange multiplier, we arrive at
\begin{equation}
     (\partial_i \mu_a ) J_{ai} + (\partial_i v_j )\tau_{ji} =\partial_i s_i +\lambda_I \Gamma_{iaI}\partial_i\rho_a = \partial_i s_i +\lambda_I \Gamma_{iaI}\phi_{aj} \partial_iv_j + \lambda_I \Gamma_{iaI}\chi_{ab} \partial_i \mu_b, 
\end{equation}
where the last step follows from the chain rule. We have defined the generalized thermodynamic susceptibilities $\phi_{aj}\coloneq \partial \rho_a/\partial v_j$ and $\chi_{ab}\coloneq \partial \rho_a/\partial \mu_b$. Note that $\chi_{ab}=\chi_{ba}$ since $\rho_a \equiv \partial P/\partial \mu_a$ ensures that $\chi_{ab}$ is a symmetric tensor composed of the second derivatives of the pressure. This form immediately justifies why we set $m=1$: because $\partial_is_i$ contains only a single spatial derivative, $m\geq 2$ would yield higher-order spatial derivatives that cannot consistently modify the constitutive relations at ideal order. These theories may generically require $J_{ai}$ and/or $\tau_{ji}$ to be integrals of our hydrodynamic densities at ideal order, violating Assumption~\ref{as:locality}. Just as we did in our naive analysis, we equate the coefficients of the independent gradients $\partial_i v_j$ and $\partial_i \mu_a$. This yields the modified constitutive relations
\begin{equation}
    \tau_{ji} = \frac{\partial s_i}{\partial v_j} +\lambda_I \Gamma_{iaI} \phi_{aj}, \quad J_{ai} = \frac{\partial s_i}{\partial \mu_a} + \lambda_I \Gamma_{ibI}\chi_{ab}.
\end{equation}
The addition of the Lagrange multiplier term modifies the constitutive relations from Section \ref{Sec:obstructionansatz}, coupling the current and stress tensor to the relevant thermodynamic susceptibilities. To determine the Lagrange multiplier $\lambda_I$ alongside the explicit form of the stress tensor and current, we utilize the same hydrodynamic ansatz $s_i =Pv_i- A_{ai}\rho_a$ (Assumption \ref{as:entropy}). Substituting this ansatz into our expression for $J_{ai}$ gives
\begin{equation}
    J_{ai} = \rho_a v_i  -\frac{\partial A_{bi}}{\partial \mu_a} \rho_b -A_{bi} \chi_{ab} +\lambda_I \Gamma_{ibI} \chi_{ab}. 
\end{equation}
Projecting this current into the complementary representation $R_{\tilde{\alpha}}$ by applying the Clebsch-Gordan coefficients $\Gamma_{ia\tilde{\alpha}}$ completely annihilates $J_{ai}$:
\begin{equation}
    \Gamma_{ia \tilde{\alpha}}\left( \rho_a v_i  -\frac{\partial A_{bi}}{\partial \mu_a} \rho_b -A_{bi} \chi_{ab} +\lambda_I \Gamma_{ibI} \chi_{ab}\right) =0.
\end{equation}
Because this condition must hold for arbitrary densities and potentials, we can independently equate the respective $\rho$ terms and susceptibility ($\chi$) terms. This produces two simultaneous conditions:
\begin{equation}
    \Gamma_{ia\tilde{\alpha}} \rho_a v_i = \Gamma_{ia\tilde{\alpha}}  \frac{\partial A_{bi}}{\partial \mu_a} \rho_b, \quad \Gamma_{ia\tilde{\alpha}}\Gamma_{ibI}\lambda_I\chi_{ab} =\Gamma_{ia\tilde{\alpha}}A_{bi}\chi_{ab}.
\end{equation}
These conditions are solved if we set $ A_{bi} = \lambda_I \Gamma_{ib I}$ and set the Lagrange multiplier $\lambda_{I} = \Gamma_{jaI}\mu_a v_j$. This specific choice of $\lambda_I$ follows from the decomposition
\begin{equation}
    \Gamma_{jaI}\Gamma_{ibI}=\delta_{ij} \delta_{ab}-K^{(I)}_{ia,jb},
\end{equation}
where $K^{(I)}_{ia,jb}$ is a complicated combination of projectors onto all irreducible representations in the decomposition for $\mathcal{V}\otimes R_{a}$. Namely, we can always rewrite
\begin{equation}
    \Gamma_{jaI}\Gamma_{ibI}= \sum_{R\subseteq \mathcal{V}\otimes R_a} \kappa_I(R) P_{ia,jb}^{(R)},  
\end{equation}
where $\kappa_I(R)$ is some real coefficient, $P_{ia,jb}^{(R)}=(\text{dim}(R)/\lVert \Gamma_{iaR} \rVert^2)\Gamma_{iaR}\Gamma_{jbR}$ is the projector onto the irreducible representation $R$, and $\lVert \Gamma_{iaR} \rVert=\sqrt{\Gamma_{iaR}\Gamma_{iaR}}$. Since $\delta_{ij}\delta_{ab}=\sum_{R\subseteq \mathcal{V}\otimes R_a} P_{ia,jb}^{(R)}$, we can take $K^{(I)}_{ia,jb}$ to be
\begin{equation}
    K^{(I)}_{ia,jb} = \sum_{R\subseteq \mathcal{V}\otimes R_a}(1- \kappa_I(R)) P_{ia,jb}^{(R)}.
\end{equation}
Furthermore, for this constraint to be self-consistent, it must be the case that projecting $K$ into the complementary representation $R_{\tilde{\alpha}}$ vanishes, i.e. $\Gamma_{ia\tilde{\alpha}} K^{(I)}_{ia,jb}=0$. Applying the above decomposition, this is equivalent to requiring 
\begin{equation}
   \Gamma_{jaI}\Gamma_{ibI} \Gamma_{ia\tilde{\alpha}} = \Gamma_{jb\tilde{\alpha}}.
\end{equation}
By evaluating the contraction of the generalized Clebsch-Gordan coefficients, this latter condition can be written in terms of Wigner $6j$ symbols for the group $G=\SON{d}$ \cite{birdtracks}:
\begin{equation}
    \Gamma_{jaI}\Gamma_{ibI} \Gamma_{ia\tilde{\alpha}} = (-1)^{\Phi(\tilde{\alpha}, I,\mathcal{V})} \lVert \Gamma_{jaI} \rVert^2\begin{Bmatrix} j_{\tilde{\alpha}} & j_{\mathcal{V}} & j_a \\ j_{I} & j_{\mathcal{V}} & j_a \end{Bmatrix}_{\SON{d}}\Gamma_{jb \tilde{\alpha}}\equiv   \kappa_I(R_{\tilde{\alpha}})\Gamma_{jb \tilde{\alpha}},
\end{equation}
where $j_a$ denotes the highest weight labels for the irreducible representation $R_a$, $\Phi(\tilde{\alpha}, I,\mathcal{V})$ is a phase factor dependent on the involved representations and $\{\cdot \}_{\SON{d}}$ denotes the Wigner $6j$ symbol for $\SON{d}$. This entire phase-weighted $6j$ term is exactly the recoupling coefficient $\kappa_I(R_{\tilde{\alpha}})$. To see this, we can apply the projective form of $\Gamma_{jaI}\Gamma_{ibI}$ to $\Gamma_{ia\tilde{\alpha}}$, which yields $\kappa_I(R_{\tilde{\alpha}})\Gamma_{jb\tilde{\alpha}}$. The factor of $ \lVert \Gamma_{jaI} \rVert^2$ arises from us normalizing each Clebsch-Gordan coefficient by dividing by its magnitude and observing that the $R_{\tilde{\alpha}}$ normalizations will cancel. For the specific case when $G=\SON{3}$, the explicit coefficient is 
\begin{equation}
  \kappa_I(R_{\tilde{\alpha}})= \lVert \Gamma_{jaI} \rVert^2\begin{Bmatrix} j_{\tilde{\alpha}} & j_{\mathcal{V}} & j_a \\ j_{I} & j_{\mathcal{V}} & j_a\end{Bmatrix}_{\SON{3}}.
\end{equation}
 The phase above is trivial because we need to perform an even number of index swappings to obtain our Clebsch-Gordan coefficients in the proper order. Our consistency condition is thus equivalent to showing that $  \kappa_I(R_{\tilde{\alpha}})$ is identical for all irreducible representations in $R_{\tilde{\alpha}}$ (while strict equality requires $  \kappa_I(R_{\tilde{\alpha}})=1$, our normalization convention can be adjusted so long as $\kappa$ is identical for all $R_{\tilde{\alpha}}$). If this condition cannot be satisfied for any valid combination of $R_I$, $R_{\tilde{\alpha}}$, and $R_a$, it signals an obstruction to hydrodynamics. We designate this as our \textbf{Second Consistency Condition (C2)} for valid theories that conserve momentum without an emergent Goldstone mode. We note that, unlike C1, C2 becomes \emph{less} restrictive as there are more irreducible representations in $R_\alpha$, as there are fewer possible $R_{\tilde{\alpha}}$ values and thus fewer constraints. In this case, we can view $K^{(I)}_{ia,jb}$ as a weighted projector onto $R_\alpha$:
\begin{equation}
     K^{(I)}_{ia,jb} = (1- \kappa_I(R_\alpha)) P_{ia,jb}^{(\alpha)},
\end{equation}
where there is an implied sum over $R_\alpha$ if it is reducible. In Appendix \ref{Appdx:enumeratingtheoriesSO3}, we demonstrate that for the specific case of $\SON{3}$, the only valid theories satisfying both consistency conditions are $p$-form symmetries. While $p$-form symmetries naturally satisfy these conditions for generic $\SON{d}$ (as detailed in Section \ref{Sec:pformframewwork}), we will demonstrate in Section \ref{Sec:nothingelse} that there do not exist theories beyond $p$-form symmetries that satisfy our two conditions, given sufficiently large $d$. 

If both consistency conditions are satisfied, we obtain explicit solutions for $J_{ai}$ and $\tau_{ji}$ by plugging these choices for $A_{bi}$ and $\lambda_I$ into our constitutive relations:
\begin{equation}
    \tau_{ji} = P\delta_{ij} +g_j v_i -\rho_a \mu_a \delta_{ij} +K^{(I)}_{ia,jb} \rho_b\mu_a,  \quad  J_{ai} = K^{(I)}_{ia,jb}\rho_b v_j.
\end{equation}

\section{On the Possible Choices for \texorpdfstring{$R_I$}{R\_I}}
\label{Sec:PossibleRI}

In Section \ref{Sec:consistencyconditions}, we mentioned the existence of the representation $R_I$ that solves the equation $\Gamma_{ia\alpha}\partial_i f_a=0$, but provided no further detail. It is, however, possible to entirely determine the possible choices of $R_I$ from the representations $R_a$ and $(\oplus_k R_{\alpha_k})$. To do this, we will represent each irreducible representation of $\SON{d}$ by its corresponding \emph{Young tableau} and then utilize the \emph{Newell-Littlewood} rules to write down the irrep decomposition of the product of two irreducible representations, such as $R_a \otimes R_\alpha$. We shall first provide a review of these rules before deriving the possible choices for $R_I$. Importantly, because irreducible representations are typically uniquely determined by their Young tableaux in $\SON{d}$, we shall use the representation and its tableau label interchangeably for the rest of the paper (modulo a minor subtlety for even $d$ that is addressed below.)
\subsection{A Review of Young Tableaux and the Newell-Littlewood Rules for \texorpdfstring{$\SON{d}$}{SO(d)}}
\label{Sec:reviewyoung}

It is often convenient to have a graphical representation for various irreducible representations. A \textit{Young tableau} (or \textit{Young diagram}) provides such a graphical framework \cite{Georgi1999YoungTableaux}. Each Young tableau comprises $L$ rows of left-aligned boxes in non-increasing order. We label the shape of each Young tableau by $[\lambda]_L=[\lambda_1,\lambda_2\cdots, \lambda_L]$, where $\lambda_i$ is the number of boxes in the $i$-th row. For example, the Young diagram $[2,2,1]$ will label the diagram
\begin{equation}
   [2,2,1]= \vcenter{\hbox{\ydiagram{2,2,1}}}.
\end{equation}
Often it is convenient to denote $m$ rows of the same length $\lambda$ by $[\lambda^m] $; going back to the previous example, we could write $[2,2,1] = [2^2,1]$. In $\SON{d}$, each box of a Young tableau will correspond to a $d$-dimensional vector ($[1]=\mathcal{V}$), so a rank-$n$ tensor will have $n$ boxes in its Young tableau. Suppose we wanted to build an irreducible tensor $T_{ijk}$ by first symmetrizing the indices $i$ and $j$ and then antisymmetrizing the indices $i$ and $k$. Indices placed in the same row of a Young tableau are completely symmetrized, whereas indices in the same column are completely antisymmetrized. The corresponding Young tableau for $T_{ijk}$ shall be $[2,1]$, which encodes a pair of symmetric indices and a pair of antisymmetric indices overlapping at a single shared index. If we were working in $\SUN{d}$, we would be done. For $\SON{d}$, however, the resulting Young tableau will not necessarily represent an irreducible representation. This is because the metric $\delta_{ij}$ is an invariant tensor of $\SON{d}$. We can thus contract $\delta_{ij}T_{ijk}$ as an operation in $\SON{d}$ and create a smaller tensor without breaking rotational invariance. For example, consider the symmetric tensor $T_{ij}$. In order for $T_{ij}$ to be irreducible, we must break it up into a traceless component and its trace: $T_{ij}=(T_{ij}-\frac{1}{d}\delta_{ij} T_{kk})+\frac{1}{d}\delta_{ij} T_{kk} $. In order for the Young tableau $[2]$ to truly represent an irreducible representation of $\SON{d}$, we must assign it to the symmetric \textit{traceless} component $(T_{ij}-\frac{1}{d}\delta_{ij} T_{kk})$, rather than the purely symmetric component.  

We now explain how to perform the irrep decomposition for two Young tableaux $[\lambda]_L\otimes [\sigma]_M$ in $\SON{d}$. First, we remind the reader of the rules for $\SUN{d}$; the rules for $\SON{d}$ will follow \cite{Georgi1999YoungTableaux}. In $ [\sigma]_M$, write the letter $a$ in each of the boxes in the first row, the letter $b$ in each of the boxes in the second row, etc. until the Young tableau $[\sigma]_M $ is full. Then take each of the $a$ boxes in $ [\sigma]_M$ and place them in $[\lambda]_L$ such that the new tableau is a valid Young tableau (rows are nonincreasing in length as one descends the tableau) and no two $a$s appear in the same column. Likewise, do the same with the $b$ rows, $c$ rows, etc. until the entire $[\sigma]_M$ tableau is exhausted. Then, reading the boxes from right to left and top to bottom, the number of $a$s must be greater than or equal to the number of $b$s which must be greater than or equal to the number of $c$s, and so on at every step. Any tableau that violates this should be erased. The resulting sum of valid tableaux provides the irrep decomposition in $\SUN{d}$.  Finally, if a column has a length greater than $d$, the diagram must identically vanish because an antisymmetric tensor of rank greater than $d$ must have a repeated index. These are called the \textit{Littlewood-Richardson rules} for $\SUN{d}$. For example, consider $[1^2]\otimes [1^2]$ for $d>4$. The decomposition is given by

\begin{equation}
    \vcenter{\hbox{\ydiagram{1,1}}} \otimes \vcenter{\hbox{\begin{ytableau}
a \\
b
\end{ytableau}}} \to \vcenter{\hbox{
  \begin{ytableau}
  {} \\
  {} \\
  a \\
  b
  \end{ytableau}}
}
\oplus
\cancel{\vcenter{\hbox{
  \begin{ytableau}
  {} & b \\
  {} \\
  a
  \end{ytableau}
}}}
\oplus
\cancel{\vcenter{\hbox{
  \begin{ytableau}
  {} & a & b \\
  {}
  \end{ytableau}
}}}
\oplus
\vcenter{\hbox{\begin{ytableau}
{} & a \\
{} & b
\end{ytableau}}}
\oplus
\vcenter{\hbox{\begin{ytableau}
{} & a \\
{} \\
b
\end{ytableau}}},
\end{equation}
where we have written slashes through each diagram that is not allowed. For $\SON{d}$, we must also account for traces over $\delta_{ij}$. The resulting rules are called \textit{Newell-Littlewood} rules. First, we perform the exact same decomposition rules for the full $[\lambda]_L\otimes [\sigma]_M$ in $\SUN{d}$. Then we take the trace by removing a single box from both $[\lambda]_L$ and $[\sigma]_M$ such that the two new sub-tableaux remain valid and then perform the $\SUN{d}$ Littlewood-Richardson rules on these new tableaux. After enumerating all single-trace tableaux (removing a single box from each), we remove two boxes from $[\lambda]_L$ and $[\sigma]_M$. Because contracting a symmetric pair of indices with an antisymmetric pair vanishes, the symmetry of the removed boxes must match: if the two boxes removed from $[\lambda]_L$ sit in the same row/column, the corresponding two boxes removed from $[\sigma]_M$ must also sit in the same row/column. We again perform the Littlewood-Richardson decomposition on these doubly-traced tableaux. We continue this process until either $[\lambda]_L$ or $[\sigma]_M$ is fully exhausted and sum over all possible resultant tableaux. 

 We must now apply the modification rules derived in Ref.~\cite{Kingrules}. First, we define the conjugate tableau $[\lambda']_L$ to be the transpose of $[\lambda]_L$. The $i$-th column length will be given by $\lambda_i'$. For orthogonal groups, a Young tableau $[\lambda]_L$ with $\lambda_1'\leq \lfloor d/2\rfloor$ is called \textit{standard}. If a tableau is nonstandard, it can be made standard via an equivalent dual tableau by contracting the first column with the Levi-Civita tensor $\epsilon^{j_1\cdots j_d}$ (known as \textit{Hodge duality}, which maps an antisymmetric $p$-index tensor into an antisymmetric $(d-p)$-index tensor) and then contracting with the metric tensor $\delta_{jk}$. This converts $[\lambda]_L$ into a tableau whose first column has length $d-L$.  Graphically, this is enforced by removing a hook of length $2L-d$ from the original Young tableau: start at the bottom box of the offending column, trace a path of length $2L-d$ up and right along the boundary of the Young tableau, and remove it. If the resulting Young tableau is no longer valid, we say the tableau vanishes and does not contribute. We multiply the resulting tableau by a phase factor $(-1)^x$, where the \emph{depth} $x$ is the difference in the starting and ending column indices of the removed hook. We henceforth dub this hook modification rule the ``snake rule". There is one final modification rule if $d=2k$. If we apply the aforementioned snake rule and $L=k$, applying the Levi-Civita symbol $\epsilon^{j_1\cdots j_d}$ maps the Young tableau back into itself. Applying it twice leaves the Young tableau invariant, meaning this action multiplies the Young tableau by either $+1$ or $-1$. This tableau is thus reducible into two separate irreducible representations: a self-dual part (corresponding to $ +1$) or an anti-self-dual part (corresponding to $-1$). The self-dual part is denoted by $[\lambda]_L^{+}= [\lambda_1,\cdots, \lambda_L]$ while the anti-self-dual part is denoted by  $[\lambda]_L^{-}= [\lambda_1,\cdots, -\lambda_L]$. This dichotomy reflects that $\SON{2k}$ is classified as a Lie group of type $D_k$, whereas $\SON{2k+1}$ is of type $B_k$ \cite{Koike1987YoungdiagrammaticMF}. Throughout this paper, we always label representations by their standard tableaux and the generalized indices in $\rho_a$, $J_\alpha$, etc.\ correspond to the properly dualized standard tableaux. For instance, consider the two-form theory in $\SON{3}$ (as seen in Appendix~\ref{Appdx:enumeratingtheoriesSO3}). The two-form density $\rho_a$ is labeled by the tableau $[1]$ and it carries a Hodge-dual vector index. If $R_I$ is the ordinary undualized vector irreducible representation (also labeled by the tableau $[1]$), the generalized Clebsch-Gordan coefficient is the Levi-Civita tensor $\Gamma_{jaI}=\epsilon_{jaI}$.
 
Before moving on, let us demonstrate an example. Let us again look at $[1^2]\otimes [1^2]$, but this time in $\SON{5}$. Before applying the modification rules, we will again have the same three $\SUN{d}$ diagrams as before, but we now include the traced components:

\begin{equation}
   [1^2]\otimes [1^2]=  \vcenter{\hbox{\ydiagram{1,1}}} \otimes \vcenter{\hbox{\begin{ytableau}
a \\
b
\end{ytableau}}} \to  \vcenter{\hbox{
  \begin{ytableau}
  {} \\
  \tikz[remember picture, baseline=(a1.base)] \node[inner sep=0] (a1) {}; \\
  a\\
\tikz[remember picture, baseline=(b1.base)] \node[inner sep=0] (b1) {$b$};
  \end{ytableau}}
} \oplus
\vcenter{\hbox{\begin{ytableau}
{} & a \\
{} & b
\end{ytableau}}}
\oplus
\vcenter{\hbox{\begin{ytableau}
{} & a \\
{} \\
\tikz[remember picture, baseline=(b2.base)] \node[inner sep=0] (b2) {$b$};
\end{ytableau}}} \oplus \left(\vcenter{\hbox{\begin{ytableau}
{} \\
\end{ytableau}}}\otimes \vcenter{\hbox{\begin{ytableau}
a 
\end{ytableau}}}\right),
\begin{tikzpicture}[remember picture, overlay]
    \draw[red, thick] ($(a1.north)+(0,0.4ex)$) -- ($(b1.south)-(0,0.4ex)$);
    
    \draw[red, thick] ($(b2.north)+(0,0.4ex)$) -- ($(b2.south)-(0,0.4ex)$);
\end{tikzpicture}
\end{equation}
where we have drawn lines through boxes that will be removed by the snake rule. If we further evaluate the trace decomposition rules to $[1]\otimes [1]$ and execute the snake rule on the nonstandard tableaux, we will arrive at the decomposition
\begin{equation}
    [1^2]\otimes [1^2]= \vcenter{\hbox{\ydiagram{1}}} \oplus \vcenter{\hbox{\ydiagram{2,2}}} \oplus \vcenter{\hbox{\ydiagram{2,1}}} \oplus \vcenter{\hbox{\ydiagram{1,1}}} \oplus \vcenter{\hbox{\ydiagram{2}}} \oplus [0],  
\end{equation}
where we dropped the labels $a$ and $b$ and $[0]$ denotes an empty box (the scalar irreducible representation), arising from the full contraction of both indices. While $\SON{3}$ has a unique decomposition in terms of irreducible representations, this breaks down for generic $d>3$, where representations can appear with higher multiplicities. This makes computing our Wigner $6j$ symbols much harder, as we generally need to keep track of exactly which multiplicity copy of an irreducible representation we are considering.

\subsection{Determining the Allowed Representations \texorpdfstring{$R_I$}{R\_I}}
\label{sec:RIchoice}
 Our goal is to determine, for $\SON{d}$, which representations $R_I$ can be chosen such that $R_\alpha \otimes R_I$ contains no purely symmetric rank-2 component (i.e., the $[2]$ or $[0]$ tableau). First, note that the condition $\Gamma_{ia\alpha }\partial_i f_a=0$ requires that $R_a \subseteq [1]\otimes R_\alpha$ (where $\subseteq$ denotes inclusion in the irrep decomposition and $[1]$ is the vector representation). Likewise, the condition $f_a = \Gamma_{jaI} \partial_j g_I$ requires that $R_I \subseteq [1]\otimes R_a   \subseteq   [1]\otimes [1]\otimes R_\alpha $. Nontrivial choices for $R_a$, $R_\alpha$, and $R_I$ must respect these constraints.

Our problem can thus be rephrased as: in the irrep decomposition of $R_\alpha\otimes R_I$, when does $[2]$ or $[0]$ have multiplicity zero? At this stage of the analysis, (before doing modification rules), the only way $[0]$ can occur in the decomposition for $R_\alpha\otimes R_I$ is if $R_I=R_\alpha$ (we must be able to subtract off the entire representation doing traces, and thus the representations must be equal). We thus choose $R_I\neq R_\alpha$ and ignore the $[0]$ case for the rest of the paper (indeed, this will only limit the number of possible theories further). The trick for the $[2]$ case is to note that the multiplicity of $R_I$ in $R_\alpha^*\otimes [2]$ is identical to the multiplicity of $[2]$ in $R_\alpha\otimes R_I$. Specifically, for an irreducible representation $A$ and a generic representation $B$, the dimension $\text{dim}(\text{Hom}_G(A,B))$ provides the multiplicity $m$ of $A $ in $B$. From Fulton and Harris \cite{FultonHarris1991} Eq.~2.9, this dimension is given by the inner product of their characters:
\begin{equation}
    m = \langle \chi_A,\chi_B\rangle,
\end{equation}
where $\chi_A$ is the \textit{character} of $A$. In our case, $A=[2]$ and $B=R_\alpha\otimes R_I$. Because characters are multiplicative ($\chi_{R_\alpha\otimes R_I}= \chi_{R_\alpha}\chi_{R_I}$), we evaluate
\begin{equation}
    m= \langle \chi_{[2]},\chi_{R_\alpha}\chi_{R_I}\rangle= \langle \overline{\chi}_{R_\alpha}\chi_{[2]},\chi_{R_I}\rangle =\langle \chi_{R_\alpha^{*}}\chi_{[2]},\chi_{R_I}\rangle = \langle \chi_{R_I},\chi_{R_\alpha^*\otimes [2]}\rangle,
\end{equation}
where we used the fact that $\overline{\chi}_{R_\alpha}= \chi_{R_\alpha^*} $ and that multiplicities are real to swap the inner product. Unless $d\equiv  2 \pmod 4$ and $R_\alpha$ has $d/2$ rows, $R_\alpha$ is real and thus self-dual, meaning $R_\alpha\cong R_\alpha^{*}$ (Ref.~\cite{Alesker2010}, Lemma 3.2). In the exceptional case, if $R_\alpha$ is the self-dual (anti-self-dual) representation, then $R_\alpha^*$ is the anti-self-dual (self-dual) representation. These are distinct irreducible representations, but they share the same Young tableau. Since the arguments below only depend on properties of the Young tableau, we may replace $R_\alpha^*$ with $R_\alpha$ without loss of generality. Thus, $m$ is also the multiplicity of $R_I$ in $R_\alpha\otimes [2]$. Our goal is once again reframed: we must find choices of $I$ such that $R_I \not\subset R_\alpha \otimes [2]$. To make progress, we decompose the tensor product $R_\alpha \otimes [1] \otimes [1]$:
\begin{equation}
    R_I \subseteq \vcenter{\hbox{\begin{ytableau} \alpha \end{ytableau}}}\otimes \left( \vcenter{\hbox{\ydiagram{1}}} \otimes \vcenter{\hbox{\ydiagram{1}}} \right) 
= \vcenter{\hbox{\begin{ytableau} \alpha \end{ytableau}}} \otimes \left( \vcenter{\hbox{\ydiagram{1,1}}} \oplus  \vcenter{\hbox{\ydiagram{2}}} \oplus [0] \right)= \vcenter{\hbox{\begin{ytableau} \alpha \end{ytableau}}}\otimes \vcenter{\hbox{\ydiagram{1,1}}}  \oplus \vcenter{\hbox{\begin{ytableau} \alpha \end{ytableau}}}\otimes \vcenter{\hbox{\ydiagram{2}}} \oplus \vcenter{\hbox{\begin{ytableau} \alpha \end{ytableau}}}.
\end{equation}
We further note that if $R_\alpha\neq [0]$, then $R_\alpha\subseteq R_\alpha\otimes [1,1]$ because we can always add and subtract a box in the same position by our Newell-Littlewood rules for $\SON{d}$. The case where $R_\alpha =[0]$ is trivial and will be handled separately. To ensure $R_I \not\subset R_\alpha \otimes [2]$ while still satisfying $R_I \subseteq R_\alpha \otimes [1] \otimes [1]$, we must select an $R_I$ that appears in the decomposition of $R_\alpha \otimes [1,1]$ but does not appear in $R_\alpha \otimes [2]$. We determine this by comparing the Newell-Littlewood trace rules for both products. There are three cases for the contractions:
\begin{enumerate}
    \item \textbf{No Trace}: We apply standard $\SUN{d}$ rules. For $R_\alpha\otimes[1,1]$, we add two boxes to $R_\alpha$ such that they do not share a row, while for $R_\alpha\otimes [2]$ we add two boxes such that neither share the same column. The only configurations generated by $R_\alpha \otimes [1,1]$ that cannot be generated by $R_\alpha \otimes [2]$ are those where the two added boxes sit in the same column. Thus, $R_I$ can be formed by adding two boxes in the same column to $R_\alpha$. 
    
    \item \textbf{2 Traces}: Taking two traces requires us to remove two boxes from $R_\alpha$. For the product $R_\alpha \otimes [1,1]$, the antisymmetric trace rules imply that we can subtract any two boxes from $R_\alpha$ so long as they sit in different rows. For $R_\alpha \otimes [2]$, the symmetric trace rules enforce that we can subtract any two boxes as long as they sit in different columns. The unwanted intersection occurs when we subtract two boxes that share neither a row nor a column. The only terms that are unique to $R_\alpha \otimes [1,1]$ are thus formed by subtracting two boxes that occupy the exact same column of $R_\alpha$.
    
    \item \textbf{1 Trace}: Taking one trace leaves us with $(R_\alpha\setminus [1]) \otimes [1]$ for both diagrams. The resulting irreducible representations are identical for both products, meaning no $R_I$ can be extracted here. Note that this component naturally contains $R_\alpha$, because we can always subtract and add a box, so unless $R_\alpha=[0]$, $R_I\neq R_\alpha$.
\end{enumerate}

Overall, we must therefore generate $\{R_\alpha, R_I\}$ by adding or subtracting a box in the same column of $R_a$. An example is given in Figure~\ref{fig:possiblediagrams}. A caveat: in $\SON{2k}$, a tableau where the number of rows is exactly $k$ splits into self-dual and anti-self-dual parts. For instance, in $\SON{4}$, $[1,1]$ is reducible, and the above trace logic requires modification. To avoid this edge case, we assume $d \neq 4$. Thus, excluding the $\SON{4}$ and $R_\alpha=[0]$ edge cases, the only valid irreducible representations for $R_I$ are those formed by either adding or subtracting two boxes in a column of $R_\alpha$. Since $R_I\subseteq  [1]\otimes R_a$, we must specifically select this choice of $R_I$ from the former decomposition. This argument precedes modification rules for nonstandard tableaux, which may later force $[2] \subseteq R_\alpha \otimes R_I$ in certain $\SON{2k}$ situations. Our argument only states that it is \textit{necessary} that $R_I$ is generated by adding or subtracting two boxes in the trace step, not that it is sufficient.

Finally, we address the trivial case $R_\alpha=[0]$. Here, the decomposition reduces to $R_I\subseteq [1,1] \oplus [2] \oplus [0]$. Because we require $R_I \not\subset R_\alpha \otimes [2] = [0] \otimes [2] = [2]$, the valid choices for $I$ are simply the remaining components: $R_I \subseteq [1,1] \oplus [0]$.

\begin{figure}[t]
\centering
\begin{tikzpicture}[every node/.style={inner sep=2pt}]
  \matrix (M) [
      matrix of nodes,
      nodes={anchor=west},     
      column sep=16pt,
      row sep=10pt,
  ]{
    $R_a$ & Possible $\{R_\alpha,R_I\}$ (before modification rules) \\
    $\Rmu$     & $\left\{\,\PairAleft\,,\ \PairAright\,\right\}\quad\text{or}\quad
                 \left\{\,\PairBleft\,,\ \PairBright\,\right\}$ \\
  };
  \coordinate (sep) at ([xshift=-8pt]M-1-2.west);
  \draw[dashed] (sep |- M.north) -- (sep |- M.south);    
\end{tikzpicture}
\caption{Possible $R_\alpha$ and $R_I$ allowed by C1 for $R_a= [2^3,1] $ before modification rules. Crossed-out dashed red boxes indicate boxes that were deleted from $R_a$ and solid green boxes indicate boxes added to $R_a$.  }
\label{fig:possiblediagrams}
\end{figure}

\section{Consistency of \texorpdfstring{$p$}{p}-Form Symmetries with Their Ideal Hydrodynamics}\label{Sec:Idealsolutions}

In this section, we demonstrate that $p$-form symmetries in $\SON{d}$ satisfy consistency conditions C1 and C2, and we derive their ideal constitutive relations. As a warm-up, we first rederive ideal non-relativistic magnetohydrodynamics (a one-form symmetry in $\SON{3}$) within our general framework, reproducing the results of Section~\ref{Sec:IdealMHD}. We then derive the constitutive relations for arbitrary $p$ in odd $d$ (we leave the verification of C2 to Appendix~\ref{Appdx:pformC2derivation}). The restriction to odd $d$ is mild and serves to exclude the self-dual case $p=d/2$ for even $d$, where we expect the same results to hold. For $p\neq d/2$ in even $d$, our calculation goes through unchanged. 

\subsection{Warm-up: One-Form Symmetry in \texorpdfstring{$\SON{3}$}{SO(3)}}
\label{Sec:oneformframework}

While already solved earlier, we show that our generic framework reproduces Section \ref{Sec:IdealMHD}. We consider the theory where $R_a=[1]$, $R_{\alpha}=[1]$, and $R_I=[0]$. Recall the conditions established in Section \ref{Sec:IdealMHD}: 
\begin{equation}
    \partial_t \rho_i +\epsilon_{ijk}\partial_j J_k=0, \quad f_k =\partial_k g, \quad  \partial_i \rho_i=0.
\end{equation}
In Appendix \ref{Appdx:enumeratingtheoriesSO3}, we demonstrate that such a theory satisfies our two consistency conditions and is thus compatible with hydrodynamics. To solve for the ideal hydrodynamic constitutive relations, we must evaluate the tensor $K^{(I)}_{ia,jb}$. Recall that $K^{(I)}_{ia,jb}$ will be proportional to the projector onto $R_\alpha=[1]$: 
\begin{equation}
    K^{(I)}_{ia,jb} =\frac{1}{2}(1-\kappa_I([1])) (\delta_{ij}\delta_{ab}-\delta_{ib}\delta_{ja}).
\end{equation}
By evaluating the Wigner $6j$ symbols (see Appendix \ref{Appdx:enumeratingtheoriesSO3}), we find that $\kappa_I(R_\alpha) =-1$. This yields 
\begin{align}
    \tau_{ji} &=P\delta_{ij} +g_jv_i - \rho_i\mu_j, \quad 
    J_{ai} =\rho_av_i-\rho_iv_a,
\end{align}
which agrees with Section \ref{Sec:IdealMHD}. 

\subsection{Generic \texorpdfstring{$p$}{p}-Form Symmetries in \texorpdfstring{$\SON{d}$}{SO(d)}}
\label{Sec:pformframewwork}

We now generalize to theories for $p\geq 1$ in $\SON{d}$ for odd $d$, where the density $\rho_a$ transforms in the $p$-form irreducible representation. We will further demand the current $J_{\alpha}$ transforms in the $(p+1)$-form irreducible representation. While we restrict to odd $d$ to avoid the subtleties of self-dual tableaux for even $d$ when $p=d/2$, the results in this section shall remain the same. Since $R_I$ must differ by two boxes in a column from $R_\alpha$, and $R_I\subseteq [1]\otimes R_a$, $g_I$ is constrained to transform in the $(p-1)$-form irreducible representation. By Hodge duality, this set-up will break into four distinct regimes of Young tableaux depending on the value of $p$:

\begin{center}
\begin{tabular}{| c | c | c | c | c |} 
  \hline
   & $R_a$ & $R_\alpha$ & $R_I$ & $R_{\tilde{\alpha}}$  \\ 
  \hline
  $p< \lfloor d/2\rfloor$ & $[1^p]$ & $[1^{p+1}]$ & $[1^{p-1}]$ & $[2,1^{p-1}]\oplus [1^{p-1}]$ \\ 
  \hline
  $p> \lceil d/2\rceil$ & $[1^{d-p}]$ & $[1^{d-p-1}]$ & $[1^{d-p+1}]$ & $[2,1^{d-p-1}]\oplus [1^{d-p+1}] $\\
  \hline
  $p= \lfloor d/2\rfloor$ & $[1^p]$ & $[1^{p}]$ & $[1^{p-1}]$ & $[2,1^{p-1}]\oplus [1^{p-1}]$\\
  \hline
  $p= \lceil d/2\rceil$ & $[1^{d-p}]$ & $[1^{d-p-1}]$ & $[1^{d-p}]$ & $[2,1^{d-p-1}]\oplus [1^{d-p}]$\\
  \hline
\end{tabular}
\end{center}

 To verify C2 across these four cases, we shall evaluate the generalized Clebsch-Gordan coefficients directly instead of computing the Wigner $6j$ symbols because they are challenging to compute in generic $\SON{d}$. We detail each case in Appendix \ref{Appdx:pformC2derivation} and find that the condition is universally satisfied. Furthermore, we demonstrate that $K^{(I)}_{ia,jb}=\delta_{ia}^{jb}$, which reduces the stress tensor and current to
\begin{equation}
    \tau_{ji} = P\delta_{ij} +g_j v_i - \mu_{[jI]}\rho_{[iI]}, \quad J_{ai} = (v\wedge \rho)_{ia},
\end{equation}
where $ I$ labels a contracted $(p-1)$-form index in the stress tensor. This confirms that our generalized method can systematically construct ideal hydrodynamics for arbitrary $p$-form symmetries.

\section{There is Likely Nothing Beyond \texorpdfstring{$p$}{p}-Form Symmetries in \texorpdfstring{$\SON{d}$}{SO(d)}}
\label{Sec:likelynothingelse}
In this section, we prove that, under mild restrictions detailed below, the only single-derivative ($m=1$) theories compatible with C1 and C2 are $p$-form symmetries. We first provide a physical picture for why single-derivative theories satisfying only C1 behave like one-form symmetries with additional structure. Such a picture provides physical intuition for why higher-form symmetries are special and motivates the no-go result in Section \ref{Sec:nothingelse}.
\subsection{Physical Motivation: Why Compatible Theories Look Like Higher-Form Symmetries}
We now present an example that illustrates how single derivative $(m=1)$ theories behave like one-form symmetries. We first focus on the one-form case in $\mathbb{R}^3$ before generalizing it to generic $m=1$ theories. Consider the finite volume $V$ to be a box $V=[x_-,x_+]\times [y_-,y_+]\times [z_-,z_+]$ in $\mathbb{R}^3$. The dipole symmetry $f_i= z\delta_{iz}$ is a valid multipolar charge, since $\partial_j \epsilon_{ijk}f_i=0$. We shall choose boundary conditions such that $\hat{n}_\perp \rho_{\perp}|_{\text{side walls}}=0$, where $\perp$ denotes the coordinates $\{x,y\}$ perpendicular to $\hat{z}$ (a repeated $\perp$ implies summation of indices in the set $\perp$), and the side walls are the faces of the box whose normal vectors are perpendicular to $\hat{z}$. The conserved multipolar charge is thus
\begin{equation}
    Q[f_i] = \int_{V}d^3x  \rho_zz\equiv  \int_{z_-}^{z_+} dz \Psi(z)z,
\end{equation}
where we define the planar flux $\Psi(z) \coloneq \int_{\mathrm{plane}} dxdy \rho_z$. The boundary condition is crucial because it makes $\Psi(z)$ independent of $z$:
\begin{equation}
    \partial_z \Psi(z) = \int_{\mathrm{plane}} dxdy \partial_z \rho_z = - \int_{\mathrm{plane}} dxdy (\partial_x \rho_x+\partial_y\rho_y) = -\int_{\text{side walls}} dl \hat{n}_\perp \rho_{\perp} =0 .
\end{equation}
Momentum conservation follows:
\begin{equation}
    0 = \{P_z,Q[f_i;V]\} = \int_V d^3x \partial_z\rho_z z= \int_{z_-}^{z_+} dz z\partial_z \Psi(z) = [z\Psi(z)]_{z_-}^{z_+} - \int_{z_-}^{z_+} dz  \Psi(z)=0,
\end{equation}
where the last expression follows from the $z$-independence of $\Psi(z)$. By rotational invariance, the same argument holds for any direction $\hat{n}$, not just $\hat{z}$. Momentum conservation in a finite volume is thus compatible with the conservation of a flux of $\rho_z$ through codimension-one surfaces in any direction (which is the geometric picture of a one-form symmetry).

The same construction follows for any single-derivative ($m=1$) theory. We work in $d$ spatial dimensions with our volume $V$ a $d$-dimensional box. We denote the $x_d$ direction $\hat{z}$,  choose $f_a=c_I\Gamma_{zaI}z$ ($c_I$ is some constant that transforms in $R_I$), choose boundary conditions such that $\Gamma_{\perp aI}\hat{n}_\perp \rho_{a}|_{\text{side walls}}=0$, ($\perp$ now denotes the coordinates $\{x_1,\cdots ,x_{d-1}\}$ perpendicular to $\hat{z}$ and the side walls are the boundary of the $d$-dimensional box with normal vectors perpendicular to $\hat{z}$). Defining the generalized flux
\begin{equation}
    \Psi(z) = \int_{\perp}d^{d-1}x \Gamma_{zaI}\rho_a,
\end{equation}
and carrying out identical arguments to the one-form case shows that $\partial_z\Psi(z)=0$ and thus $\{P_z,Q[f_a;V]\}=0$. For each fixed $I$, and (by rotational invariance) in any direction, the vector $\Gamma_{zaI}\rho_a$ is divergence-free, so its flux through codimension-one surfaces is conserved. The generalized Gauss's law constraint generates the same one-form-like fluxes, now built from $\Gamma_{zaI}\rho_a$ instead of $\rho_z$. 

We emphasize that our argument shows every compatible theory is at \emph{least} a one-form symmetry. In general, however, the extra $I$ index and the contraction with the $a$ index give a theory more structure than a one-form symmetry. A $p$-form symmetry is the case in which this extra structure remains consistent with momentum-conserving hydrodynamics. Otherwise, it is precisely this additional structure that violates condition C2, thus obstructing momentum conservation and generating a Goldstone mode. This is our no-go result in Section \ref{Sec:nothingelse}.

\subsection{Proof of the No-Go Result }
\label{Sec:nothingelse}

In this section, we show that, given a sufficiently large odd dimension $d$ and $R_\alpha$ irreducible, the only theories in $\SON{d}$ compatible with Consistency Conditions C1 and C2 are $p$-form symmetries. Namely, we prove the following theorem:
\begin{thm}
\label{thm:pform}
   Let our group $G=\SON{d}$, for odd $d$. Let $R_a$, $R_\alpha$, and $R_I$ be standard irreducible representations that satisfy consistency conditions C1 and C2. Further, assume that $R_a$ is chosen such that $d\geq 4((\lambda_a)'_1+1)$. Then for some $p\geq 1$, $R_a=[1^p]$ and either $(R_\alpha, R_I) =([1^{p+1}],[1^{p-1}])$ or  $(R_\alpha, R_I) =([1^{p-1}],[1^{p+1}])$.
\end{thm}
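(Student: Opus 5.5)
The proof combines the structural constraint from Section~\ref{sec:RIchoice} with an explicit evaluation of C2 in the large-$d$ regime. The assumption $d\geq 4((\lambda_a)'_1+1)$ makes all tableaux involved (those of $R_a$, $R_\alpha$, $R_I$, and every component of $[1]\otimes R_a$) have fewer than $d/4$ rows. The snake/modification rules therefore never act, the Newell--Littlewood products are the ``stable'' ones, and every irrep in $[1]\otimes R_a$ is obtained from $R_a$ by adding or removing a single box, with multiplicity one. In this stable range the irreps and their tensor-product multiplicities are $d$-independent combinatorial data; only the numerical recoupling coefficients depend on $d$.

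\emph{Step 1 (C1 data).} By Section~\ref{sec:RIchoice}, $R_\alpha$ and $R_I$ differ by two boxes in a single column. Since both lie in $[1]\otimes R_a$, each differs from $R_a$ by one box. So either $R_\alpha=R_a+\square_c$ and $R_I=R_a-\square_c$ in the same column $c$, or the reverse. The complementary set $R_{\tilde\alpha}$ is then every other one-box addition or removal of $R_a$. \emph{Step 2 (C2 as an eigenvalue condition).} Because the decomposition is multiplicity free, $\Gamma_{jaI}\Gamma_{ibI}$, viewed as an operator on $\mathcal{V}\otimes R_a$, is diagonal in the projectors $P^{(R)}$, with eigenvalues $\kappa_I(R)$. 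Rather than computing $6j$ symbols, I would express this operator through the ``exchange'' operator $X_{ia,jb}=(L_{ij})_{ab}$, where $L_{ij}$ are the $\mathfrak{so}(d)$ generators in $R_a$. Its eigenvalue on $R\subseteq\mathcal{V}\otimes R_a$ is $\tfrac12[C_2(R)-C_2(R_a)-C_2(\mathcal{V})]$, and for $R=R_a\pm\square$ in row $r$ and column $c$ this is a linear function of the content $c-r$ (for removals, shifted by $d$). A rank-one-in-$I$ operator such as $\Gamma\Gamma^T$, which intertwines through $R_I$, can be written as a polynomial in $X$ that vanishes on all eigenspaces except $R_I$ after the swap $(i\leftrightarrow j)$. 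Concretely, I would use the crossing/Fierz relation: $\kappa_I(R)$ is proportional to the crossing-matrix entry between the $s$-channel $R$ and the $u$-channel $R_I$. In the multiplicity-free stable range, these entries are known rational functions of the contents (the Brauer-algebra/Jucys--Murphy description of $\mathcal{V}\otimes R_a$).

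\emph{Step 3 (case analysis).} C2 requires $\kappa_I(R)$ to take the same value for every $R\in R_{\tilde\alpha}$. Using the explicit content dependence, $\kappa_I(R)$ depends on the content of the box added or removed to form $R$, relative to the content of the column-$c$ boxes defining $R_I$. Whenever $R_a$ has two or more distinct addable or removable corners besides those in column $c$, these corners have different contents. For $d$ large compared to the number of rows, the $d$-dependent terms cannot compensate for this, so the $\kappa$ values differ and C2 fails. The only $R_a$ surviving this test are those whose $R_{\tilde\alpha}$ consists of exactly the two irreps listed in the table of Section~\ref{Sec:pformframewwork}, namely the column shapes $[1^p]$, with $R_\alpha,R_I$ the column extended or shortened by one box. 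Shapes with one extra corner, such as $[2,1^{p-1}]$ or rectangles, would be eliminated by checking the two competing corners directly. Finally, Appendix~\ref{Appdx:pformC2derivation} confirms that $[1^p]$ does satisfy C2.

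The main obstacle is Step 2: producing a clean closed form for $\kappa_I(R)$ as a function of corner contents in general $\SON{d}$, including the $d$-dependent shift for box removals. It is also necessary to verify that the differences between the $\kappa$ values cannot vanish accidentally. The hypothesis $d\geq 4((\lambda_a)'_1+1)$ should be exactly what guarantees that the denominators, of the form $d-2r+\dots$, are nonzero and that the $d$-dependent contents are separated from the $d$-independent ones.
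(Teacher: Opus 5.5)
Your Steps 1--2 reproduce the paper's setup. C1 makes $R_\alpha,R_I$ equal to $R_a$ with a box added or removed in a single column. $\hat M$ is diagonal on the multiplicity-free space $W=[1]\otimes R_a$. Your exchange operator is $\hat X_{\mathrm{int}}$, with the content eigenvalues of Lemma~\ref{lem:contentxtheorem}, and your ``swap'' remark is the paper's $\Theta(\hat M)=\hat P_I$.

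\textbf{The gap.} The actual content of the theorem, that C2 fails for every non-column $R_a$, is never proved. You posit a closed form for $\kappa_I(R)$ that you do not state. You then assert that distinct corner contents give distinct $\kappa$'s because ``the $d$-dependent terms cannot compensate,'' and you only say the leftover shapes are checkable. That bookkeeping is also off:
\begin{itemize}
\item A rectangle $[s^r]$ with $s\ge 2$ is already excluded by C1, since no addable box sits below its only removable box $(r,s)$.
\item $[2,1^{p-1}]$ with $p\ge 2$ has five components in $W$, so it is not a special case.
\item The hypothesis $d\ge 4((\lambda_a)'_1+1)$ does not control denominators. It keeps $R_\alpha\otimes R_I$ free of modification rules, so that the same-column conclusion of Section~\ref{sec:RIchoice} holds. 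The $x_\beta$ are distinct for every odd $d$ anyway, by Theorem~\ref{thm:uniquenesscasimir}.
\end{itemize}

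\textbf{How the paper does it.} It never computes individual $\kappa$'s. From $\Theta(\hat X_{\mathrm{int}})=-\hat X_{\mathrm{int}}$ and $\Theta(\hat X_{\mathrm{int}}^2)=\hat X_{\mathrm{int}}^2+(d-2)\hat X_{\mathrm{int}}$ it gets $\Tr(\hat M\hat X_{\mathrm{int}}^k)$ for $k\le 2$. It writes C2 as $\hat M=\kappa\mathbb{I}+\Delta\hat P_\alpha$, which yields the single scalar condition $C^*$. Inserting the C1 values of $(x_I,x_\alpha)$ reduces $C^*$ to $G(d)=0$, and a positivity case analysis rules that out.

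\textbf{How your route could be completed.} Your approach can be finished, and gives a sharper argument. Write $\hat M=(1\otimes\iota)^{T}s\,(1\otimes\iota)$, where $\iota:R_a\to\mathcal{V}\otimes R_I$ is the Clebsch--Gordan map and $s$ swaps the two vector factors of $\mathcal{V}\otimes\mathcal{V}\otimes R_I$. Let $x_1$ couple the inner $\mathcal{V}$ to $R_I$, and $x_2$ couple the outer $\mathcal{V}$ to $\mathcal{V}\otimes R_I$. Then $x_2=s x_1 s+s-e$, with $e$ the contraction of the two vectors, because $s-e$ is the vector--vector coupling (eigenvalues $1,-1,1-d$). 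Taking the diagonal element of $x_2 s=s x_1+1-e$ on the path $R_I\to R_a\to R_\beta$ gives
\begin{equation*}
\kappa_I(\beta)\propto\frac{1-e_{\beta\beta}}{x_\beta-x_{a/I}},\qquad e_{\beta\beta}=0\quad\text{for } R_\beta\neq R_I .
\end{equation*}
Here $x_{a/I}$ is the eigenvalue attached to $R_I\to R_a$, and $e_{\beta\beta}$ vanishes because $e$ maps into the $R_I$-isotypic part. For $[1^p]$ this reproduces $\kappa_\alpha/\kappa_{\mathrm{hook}}=-p$, and with $e_{II}=d_a/d_I$ also $\kappa_I=\kappa_{\mathrm{hook}}$, matching Appendix~\ref{Appdx:pformC2derivation}.

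Hence C2 forces all $x_\beta$ with $R_\beta\in W\setminus\{R_\alpha,R_I\}$ to coincide, so Theorem~\ref{thm:uniquenesscasimir} gives $|W|\le 3$. In the stable range $|W|=2k+1$, where $k$ is the number of removable corners, so $R_a$ is a rectangle. C1 then makes it a single column. This uses the full crossing data rather than three moments and avoids the $G(d)$ casework. As submitted, however, your proposal stops exactly where this argument is needed.
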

 The condition $d\geq 4((\lambda_a)'_1+1)$ places us in a regime in which $R_I\otimes R_\alpha$ is small enough such that its irrep decomposition requires no modification rules. This is because we will not have to apply modification rules in the regime where $(\lambda_I)_1' +(\lambda_\alpha)_1' \leq \lfloor d/2\rfloor$. Since $(\lambda_I)_1'$ and $(\lambda_\alpha)_1'$ are bounded by $(\lambda_a)_1'+1$, we obtain the provided bound. We believe this condition can be relaxed and the theorem will still hold, however it may require significant casework to verify. Likewise, we further believe that the theorem will hold for even $d$ (indeed, most of the proof of Theorem~\ref{thm:pform} does not require odd $d$), however we do not believe it shall be fruitful to deal with the self-dual edge cases with certain Young tableaux being further reducible.  We shall prove Theorem~\ref{thm:pform} in Appendix \ref{Appdx:proofmaintheorem}. We provide an outline of the proof below.
\begin{proof}[Outline of Proof]
We first observe that the tensor $\hat{M}\equiv M_{ia,jb}\coloneq \Gamma_{ibI}\Gamma_{jaI}$ is itself an intertwiner in $\mathrm{End}_{\SON{d}}([1]\otimes R_a)$. This algebra is generated by the interaction operator $\hat{X}_{\mathrm{int}}$, which is defined in Appendix~\ref{Appdx:proofmaintheorem}. The proof shall involve deriving explicit eigenvalues for $\hat{X}_{\mathrm{int}}$ (which, in turn, will require finding the explicit quadratic Casimir eigenvalues). We utilize these analytic forms to derive several trace identities for various products of $\hat{X}_{\mathrm{int}}$ and $\hat{M}$. These trace identities will provide a system of equations in terms of various dimensions of irreducible representations and eigenvalues of $\hat{X}_{\mathrm{int}}$. We further assume consistency conditions C1 and C2 to reduce these systems of equations to a single consistency equation that we label $C^*$. Finally, we show that $C^*$ cannot hold for non-column irreducible representations when $d$ is above $(\lambda_a)_1'+(\lambda_a)_2'$. 
\end{proof}

\section{Concluding Thoughts}
In this paper, we studied which exotic conservation laws a momentum-conserving system can have without Goldstone modes. We find that, under the assumptions laid out in Table~\ref{tab:assumptions}, such theories must obey a generalized Gauss's law condition, since local sources and sinks would obstruct translational invariance, and that any net global charge must lie on the boundary of the system. Generically, these theories must be at least one-form symmetries, with conserved fluxes through surfaces, as in magnetohydrodynamics. We argued that only theories with a single-derivative Gauss's law constraint are compatible with momentum conservation, and thus one cannot formulate a consistent gauged fracton magnetohydrodynamics with momentum conservation, at least in theories with purely scalar multipole charges.  We further established two purely representation theoretic consistency conditions (C1 and C2) that highly constrain the class of such theories, which we find to be essentially only $p$-form symmetries. This has a strong physical consequence: momentum-conserving theories that are not $p$-form symmetries must either violate an assumption in Table~\ref{tab:assumptions} or produce a Goldstone mode. While we only prove this for irreducible current representations $R_\alpha$, odd spatial dimensions $d$, and $R_a$ sufficiently small, we expect the result to hold generically. 

Two of these restrictions seem to be technical rather than essential. Very little of our proof should change for even $d$ (as only exceptional representations behave differently). The size constraint on $R_a$ is imposed to avoid modification-rule subtleties in the tensor-product decomposition and avoid case-work. We are unaware of any case that would adjust our results and believe it could be removed with a more sophisticated proof. 

The reducibility of $R_\alpha$ is the only condition that may be exceptional. For reducible $R_\alpha$, condition C1 becomes much more stringent, as the irreducible representation $R_I$ must satisfy it for each irreducible representation in $R_\alpha$. The condition C2, however, becomes less stringent, as fewer representations remain in the complementary representation $R_{\tilde{\alpha}}$. In principle, an exotic $R_\alpha$ could exist that perfectly balances the tightening of C1 and the loosening of C2, evading our no-go result. We have found no such example and suspect none exists: the generalized Gauss's law constraint already makes our theory behave like a $p$-form symmetry. We leave such a confirmation (or a counterexample) to future work.

Our framework also rests on a potentially strong assumption about the allowed multipolar charges $Q[f_a]= \int_V d^dx f_a \rho_a $; namely that every solution of $\Gamma_{ia\alpha} \partial_i f_a=0$ is exact and of the form $f_a= \Gamma_{jaI}\partial_j g_I$ (for single-derivative theories). Even if we grant that $p$-form symmetries are the only theories allowed after relaxing the above three assumptions, it remains possible that this is an artifact of exactness forcing theories to resemble $p$-forms. Whether theories admitting non-exact $f_a$ can be compatible with momentum conservation is the most interesting open direction.

Additionally, our analysis is restricted to scalar multipolar charges $Q[f_a]$ and does not address the case in which it transforms in some representation: i.e. $Q_{b}[f_a] = \int_V d^dx f_a \rho_b  $, for $(f_a,\rho_b)$ transforming in representations $(R_a,R_b)$. This is precisely the structure for dipole-conserving fractons, in which $f_a=x_a$ transforms in the vector representation and $\rho_b=\rho$ transforms as a scalar. Extending our framework to such charges provides a natural direction towards a momentum-conserving framework for generalized fracton-hydrodynamics. Such exotic charges may yield different constraints and allow a broader class of theories than a scalar $Q[f_a]$. 

Finally, our analysis was restricted to $G=\SON{d}$, but our framework works for \textit{any} finite-dimensional group. It will be interesting to see whether other groups allow for exotic charges beyond $p$-forms. 

\section*{Acknowledgements}
Claude Fable 5 helped prove Theorem~\ref{thm:pform}; all steps were independently verified by us. Claude Opus 5 and 5.5 helped create the Young diagrams and Table~\ref{tab:assumptions}. ChatGPT 5.6 gave helpful feedback on LaTeX style. All other results are our own. This work was supported by the U.S. Department of Energy, Office of Science, Basic
Energy Sciences under Award No. DE-SC0021346 (BB, RN), and by the National Science Foundation under CAREER Grant DMR-2145544 (AL).

\begin{appendix}

\makeatletter


\let\@sectioncntformat\@seccntformat

\let\@hangfrom@section\@hang@from


\renewcommand\section{\@ifstar{\AndySectionStar}{\@startsection{section}{1}{\z@}%

  {-3.5ex \@plus -1ex \@minus -.2ex}%

  {2.3ex \@plus .2ex}%

  {\normalfont\Large\bfseries}}}

\renewcommand\subsection{\@ifstar{\AndySubsectionStar}{\@startsection{subsection}{2}{\z@}%

  {-3.25ex\@plus -1ex \@minus -.2ex}%

  {1.5ex \@plus .2ex}%

  {\normalfont\large\bfseries}}}

\renewcommand\subsubsection{\@ifstar{\AndySubsubsectionStar}{\@startsection{subsubsection}{3}{\z@}%

  {-3.25ex\@plus -1ex \@minus -.2ex}%

  {1.5ex \@plus .2ex}%

  {\normalfont\normalsize\bfseries}}}

\makeatother

\renewcommand{\thesubsection}{\thesection.\arabic{subsection}}
\renewcommand{\theequation}{\thesection.\arabic{equation}}
\renewcommand{\thesubsubsection}{\thesubsection.\arabic{subsubsection}}
\makeatletter
\renewcommand\section{\@ifstar{\AndySectionStar}{\@startsection {section}{1}{-\parindent}%
  {-3.5ex \@plus -1ex \@minus -.2ex}%
  {2.3ex \@plus .2ex}%
  {\normalfont\Large\bfseries}}}
\renewcommand\subsection{\@ifstar{\AndySubsectionStar}{\@startsection{subsection}{2}{-\parindent}%
  {-3.25ex\@plus -1ex \@minus -.2ex}%
  {1.5ex \@plus .2ex}%
  {\normalfont\large\bfseries}}}
\renewcommand\subsubsection{\@ifstar{\AndySubsubsectionStar}{\@startsection{subsubsection}{3}{-\parindent}%
  {-3.25ex\@plus -1ex \@minus -.2ex}%
  {1.5ex \@plus .2ex}%
  {\normalfont\normalsize\bfseries}}}
\makeatother

\section{Enumerating All Possible Theories in \texorpdfstring{$\SON{3}$}{SO(3)}}
\label{Appdx:enumeratingtheoriesSO3}
In this section, we systematically enumerate all possible theories compatible with momentum conservation in $\SON{3}$ and find that the only such theories are $p$-form symmetries. Our strategy is to apply the First Consistency Condition (C1) first, which eliminates almost all candidate theories, and then verify the Second Consistency Condition (C2) for the remaining candidates using the Wigner 6$j$ symbols enumerated in Section \ref{Sec:6jtable}.
\subsection{Theory Elimination via C1 }
Recall that C1 requires $R_\alpha \otimes R_I$ to contain neither $[2]$ nor $[0]$. Further, recall that 6$j$ symbols are nonzero only if they satisfy the triangle inequality: for any triplet $(x,y,z)$ we must have that $|x-y|\leq z\leq x+y$. Because the vector representation has $j_V=1$ in $\SON{3}$, the triangle inequality for the triplets $(j_{\tilde{\alpha}},1,j_a)$ and $ (j_I,1,j_a)$ forces 
\begin{equation}
    j_{\tilde{\alpha}},j_I\in \{j_a-1,j_a,j_a+1\},
\end{equation}
leaving nine candidate theories for a given $j_a$. We look at three cases of $j_a$:
\cse{$j_a\geq 2$: no compatible theories.} The available representations for $R_\alpha$ and  $R_I$ are $\{[j_a-1],[j_a], [j_a+1]\}$. We can eliminate all nine pairings:
\begin{itemize}
    \item \emph{$R_\alpha=R_I$}: The decomposition for $R_\alpha\otimes R_I$ will always contain $[0]$, violating C1.
    \item \emph{$R_\alpha$ and $R_I$ differ by two boxes} (i.e. one is $[j_a+1]$ and one is $[j_a-1]$): applying the trace subtraction $\min(j_\alpha, j_I)$ times to the decomposition of $R_\alpha\otimes R_I$ will leave $[2]$ in the resulting sum.
    \item \emph{$R_\alpha$ and $R_I$ differ by one box}: The decomposition of the tensor products
    \begin{equation}
    [j_a]\otimes[j_a-1]=\bigoplus_{n=1}^{2j_a-1}[n],
\qquad
[j_a]\otimes[j_a+1]=\bigoplus_{n=1}^{2j_a+1}[n]
\end{equation}
both contain $[2]$. 
\end{itemize}
This exhausts all nine possible theories, so $j_a\geq 2$ is incompatible with momentum conservation in $\SON{3}$ by just C1.

\cse{$j_a=1$: two compatible theories.} The valid candidates are $R_\alpha, R_I \in \{[0], [1]\}$  (neither $R_I$ nor $ R_\alpha =[2]$ because then $[2]\subseteq R_\alpha \otimes R_I$). Similarly, we cannot have $R_\alpha =R_I$, because $[0]\subseteq R_\alpha \otimes R_I$. This leaves only two viable possibilities for $j_a=1$:
\begin{itemize}
    \item $(R_\alpha, R_I) = ([1],[0])$, corresponding to one-form hydrodynamics,
    \item $(R_\alpha, R_I) = ([0],[1])$, corresponding to two-form hydrodynamics.
\end{itemize}
We shall verify that both theories satisfy C2 in the next section. 

\cse{$j_a=0$: conventional hydrodynamics.} Here, $R_a=[0]$, leaving $R_\alpha,R_I \subseteq [1]\otimes R_a=[1]$. It therefore must be the case that $R_\alpha =R_I=[1]$ and $R_{\tilde{\alpha}}$ is empty, causing an apparent inconsistency. A scalar, rotationally invariant charge density $\rho$ is known to be consistent with momentum conservation in conventional hydrodynamics. The resolution is twofold. First, in such theories $f_a=f(x)$ is a scalar, so the derivative constraint forces $f_a$ to be constant. The charge $Q[f] = f\int_V d^dx \rho $ is thus trivially compatible with momentum conservation and C1 is not applicable. Second, the Lagrange multiplier $\lambda_I$ and the postulated coefficient $A_{ai}$ both vanish in such theories (i.e. $s_i=Pv_i$), so C2 never needs to be explicitly checked. 

\subsection{Verification of C2 for the Remaining Theories}
For the two remaining theories, we confirm that C2 holds (i.e., that $\kappa_I(R_{\tilde{\alpha}})$ is constant across $R_{\tilde{\alpha}}$). Reading off the relevant entries of the table in Section \ref{Sec:6jtable}:
\begin{itemize}
    \item \emph{One-form} $(R_\alpha,R_I)=([1],[0])$: here $\Gamma_{jaI}=\delta_{ja}$ and so $\lVert \Gamma_{jaI}\rVert^2=3 $. Further, $R_{\tilde{\alpha}}=[0]\oplus [2]$. Reading off our table for Cases 1 and 7, we see that $\kappa_I(R_{\tilde{\alpha}})=1$ in both cases, so C2 is satisfied.
    \item  \emph{Two-form} $(R_\alpha,R_I)=([0],[1])$: here $\Gamma_{jaI}=\epsilon_{jaI}$ and so $\lVert \Gamma_{jaI}\rVert^2=6 $. Further, $R_{\tilde{\alpha}}=[1]\oplus [2]$. Reading off our table for Cases 5 and 8, we see that $\kappa_I(R_{\tilde{\alpha}})=1$ in both cases, so C2 is satisfied.
\end{itemize}
Both the one- and two-form cases satisfy both consistency conditions, and no other theory in $\SON{3}$ does. The only theories in $\SON{3}$ compatible with momentum conservation are thus one- and two-form symmetries.

\subsection{Table of \texorpdfstring{$6j$}{6j} Symbols}
\label{Sec:6jtable}
For completeness, we list all nine possible theories in this section. Because the fundamental vector representation in this group has spin $j_V=1$, our expression for $\kappa_I(R_{\tilde{\alpha}})$ takes the form
\begin{equation}
        \kappa_I(R_{\tilde{\alpha}})=\lVert \Gamma_{jaI} \rVert^2 \begin{Bmatrix} j_{\tilde{\alpha}} & 1 & j_a \\ j_{I} & 1 & j_a \end{Bmatrix}=\lVert \Gamma_{jaI} \rVert^2\begin{Bmatrix} j_a & j_{\tilde{\alpha}} & 1  \\ j_a & j_{I} & 1\end{Bmatrix}.
\end{equation}
Recall that the $j_V=1$ constraint causes $j_{\tilde{\alpha}},j_I\in \{j_a-1,j_a,j_a+1\} $, restricting the number of allowed theories to nine. Note the inherent symmetry of the 6$j$-symbols:
\begin{equation}
    \begin{Bmatrix} j_a & j_{\tilde{\alpha}} & 1  \\ j_a & j_{I} & 1\end{Bmatrix}= \begin{Bmatrix} j_a & j_{I} & 1  \\ j_a & j_{\tilde{\alpha}} & 1\end{Bmatrix}.
\end{equation}
Because the expression is invariant under the exchange of $j_{\tilde{\alpha}}$ and $j_I$, several cases become identical. Using Table 9.2 in Ref.~\cite{Varshalovich1988}, we find
\begin{enumerate}
    \item $j_{\tilde{\alpha}},j_I=(j_a-1,j_a-1)$: \begin{equation}
         \kappa_I(R_{\tilde{\alpha}})=\lVert \Gamma_{jaI} \rVert^2 \begin{Bmatrix} 1 & j_a -1 & j_a  \\ 1 & j_a -1 & j_a\end{Bmatrix}= \frac{\lVert \Gamma_{jaI} \rVert^2}{j_a(2j_a-1)(2j_a+1)}
    \end{equation}
    \item $j_{\tilde{\alpha}},j_I=(j_a-1,j_a)$:
     \begin{equation}
         \kappa_I(R_{\tilde{\alpha}})=\lVert \Gamma_{jaI} \rVert^2 \begin{Bmatrix}   j_a  & j_a & 1  \\   j_a-1  & j_a & 1\end{Bmatrix} =-\frac{\lVert \Gamma_{jaI} \rVert^2}{j_a(2j_a+1)}
    \end{equation}
    \item $j_{\tilde{\alpha}},j_I=(j_a-1,j_a+1)$:
    \begin{equation}
          \kappa_I(R_{\tilde{\alpha}})= \lVert \Gamma_{jaI} \rVert^2  \begin{Bmatrix}   j_a+1  & j_a & 1  \\   j_a-1  & j_a & 1\end{Bmatrix} = \frac{\lVert \Gamma_{jaI} \rVert^2}{2j_a + 1}
    \end{equation}
    \item $j_{\tilde{\alpha}},j_I=(j_a,j_a-1)$: This is identical to Case 2 by symmetry 
    \item $j_{\tilde{\alpha}},j_I=(j_a,j_a)$:
    \begin{equation}
         \kappa_I(R_{\tilde{\alpha}})=\lVert \Gamma_{jaI} \rVert^2 \begin{Bmatrix}   j_a  & j_a & 1  \\   j_a  & j_a & 1\end{Bmatrix} = \lVert \Gamma_{jaI} \rVert^2 \frac{j_a(j_a+1)-1}{j_a(2j_a+1)(j_a+1)}
    \end{equation}
    
    \item $j_{\tilde{\alpha}},j_I=(j_a,j_a+1)$:
       \begin{equation}
          \kappa_I(R_{\tilde{\alpha}})= \lVert \Gamma_{jaI} \rVert^2\begin{Bmatrix}   j_a+1  & j_a & 1  \\   j_a  & j_a & 1\end{Bmatrix} =\frac{\lVert \Gamma_{jaI} \rVert^2 }{(j_a+1)(2j_a+1)} 
    \end{equation}
    \item $j_{\tilde{\alpha}},j_I=(j_a+1,j_a-1)$: This is identical to Case 3 by symmetry 
    \item $j_{\tilde{\alpha}},j_I=(j_a+1,j_a)$: This is identical to Case 6 by symmetry 
    \item $j_{\tilde{\alpha}},j_I=(j_a+1,j_a+1)$:
     \begin{equation}
           \kappa_I(R_{\tilde{\alpha}})=\lVert \Gamma_{jaI} \rVert^2   \begin{Bmatrix}   j_a+1  & j_a & 1  \\   j_a+1  & j_a & 1\end{Bmatrix} =\frac{\lVert \Gamma_{jaI} \rVert^2 }{(2j_a+1)(j_a+1)(2j_a+3)}
    \end{equation}
\end{enumerate}

\section{Derivation of C2 for \texorpdfstring{$p$}{p}-form Symmetries }
\label{Appdx:pformC2derivation}
In this Appendix, we evaluate the four index-contraction cases outlined in Section \ref{Sec:pformframewwork}. We remind the reader that our index contraction convention for $p$-forms sums over unique, ordered sets of indices. Consequently, the generalized Kronecker delta and Levi-Civita symbols utilized throughout these derivations lack conventional normalization constants. For instance, in our convention, if $a$ and $b$ are both antisymmetric $p$-forms, then $\delta_{a}^{b}A_b=A_{a}$, where the traditional $p!$ combinatorial prefactor associated with unrestricted index sums is omitted. 

\subsection{Case 1: \texorpdfstring{$p <\lfloor d/2\rfloor$}{p< ⌊d/2⌋} }

We begin by defining the generalized Clebsch-Gordan coefficients for our spaces: $\Gamma_{jaI } = \delta_{jI}^{a} $.  We can apply a Laplace expansion to these delta functions, yielding $ \delta_{jI}^{a}= \sum_{n}(-1)^{n+1} \delta_{j}^{a_n} \delta_{I}^{a \setminus\{a_{n}\}}  $. Expanding the contraction gives
\begin{align}
     \Gamma_{ja I } \Gamma_{ia\tilde{\alpha}}\Gamma_{ibI}&=\delta_{jI}^{a} \delta_{iI }^{b } \Gamma_{ia\tilde{\alpha}} = \sum_{n,m}(-1)^{n+m} \delta_{j}^{a_n}\delta_{i}^{b_m} \delta_{I}^{a \setminus \{a_{n}\}} \delta_{I}^{b \setminus\{b_{m}\}}\Gamma_{ia \tilde{\alpha}} = \sum_{n,m}(-1)^{n+m} \delta_{j}^{a_n} \delta_{b \setminus \{b_{m}\}}^{a \setminus \{a_{n}\}}\Gamma_{b_ma\tilde{\alpha}} \nonumber \\
     &= -\sum_{m}(-1)^{m}  \delta_{j,b\setminus \{b_m\}}^{a}\Gamma_{b_m a \tilde{\alpha} } = -\sum_{m}(-1)^{m}  \Gamma_{b_m [j,b\setminus \{b_m\}]\tilde{\alpha}}.
\end{align}
We must now evaluate the two possibilities for $R_{\tilde{\alpha}}$. For the totally antisymmetric case $R_{\tilde{\alpha}}=[1^{p-1}]$, we have $\Gamma_{b_m [j,b \setminus\{b_m \}]\tilde{\alpha} }= \delta_{b_m, \tilde{\alpha }}^{j,b \setminus \{b_m\}}$. Let us define the composite index $B^m\coloneq [j,b\setminus \{b_m \}]$ and perform another Laplace expansion along $b_m$:
\begin{equation}
   \delta_{b_m,\tilde{\alpha}}^{j,b \setminus \{b_m\}}=\sum_{n} (-1)^{n+1} \delta_{b_m }^{B^m_n} \delta_{\tilde{\alpha}}^{B^m\setminus \{B^m_n\}}.
\end{equation}
Since $b$ is a totally antisymmetric $p$-form, $\delta_{b_m }^{B^m_n}=0$ unless $B^m_n=j$. Therefore, the expansion becomes $ \delta_{b_m,\tilde{\alpha}}^{B^m}=\delta_{b_m }^{j} \delta_{\tilde{\alpha}}^{b\setminus \{b_m\}} $. Substituting this back yields
\begin{equation}
   \Gamma_{ja I } \Gamma_{ia\tilde{\alpha}}\Gamma_{ibI}= -\sum_{m}(-1)^{m} \delta_{b_m }^{j} \delta_{\tilde{\alpha}}^{b\setminus \{b_m\}} =  \delta_{j\tilde{\alpha}}^{b } = \Gamma_{jb\tilde{\alpha} }.
\end{equation}
Thus, $\kappa_I([1^{p-1}])=1$ for the totally antisymmetric representation. For the hook irreducible representation $R_{\tilde{\alpha}}=[2,1^{p-1}]$, the  antisymmetrization over $[j,b]$ in  $\Gamma_{b_m[j,b \setminus \{b_m \}]\tilde{\alpha} }$ vanishes. This occurs because the hook Young tableau contains a symmetric index pair; projecting this entirely into the completely antisymmetric $[1^{p+1}]$ representation annihilates the tensor by Schur's lemma. Therefore
\begin{equation}
   \Gamma_{jb\tilde{\alpha}}+ \sum_{m}(-1)^{m}  \Gamma_{b_m[j,b\setminus \{b_m\}]\tilde{\alpha}}=0,
\end{equation}
which matches $\Gamma_{ja I } \Gamma_{ia\tilde{\alpha}}\Gamma_{ibI} = \Gamma_{jb\tilde{\alpha}}$, thus establishing that $\kappa_I([2,1^{p-1}]) =1$ in the hook case as well. We now evaluate the projector $ P_{ia,jb}^{(\alpha)} = (\text{dim}(R_\alpha)/\lVert \Gamma_{ia\alpha}\rVert^2)\Gamma_{ia\alpha}\Gamma_{jb\alpha}$. By simple index contractions, we have $\Gamma_{ia\alpha}\Gamma_{jb\alpha} = \delta_{ia}^{\alpha}\delta_{jb}^{\alpha}= \delta_{ia}^{jb}$. Further, because $R_\alpha$ is a $(p+1)$-form, its dimension is $\text{dim}(R_\alpha)=\binom{d}{p+1}$. The norm $\lVert \Gamma_{ia\alpha}\rVert^2 = \delta_{ia}^{ia}$ requires summing over all valid, non-overlapping index choices. There are $\binom{d}{p}$ ways to choose $a$, leaving $(d-p)$ remaining choices for $i$, producing $(d-p)\binom{d}{p} = (p+1)\binom{d}{p+1}$. Finally, to solve $\kappa_I(R_\alpha)$, we return to our earlier sum, replacing $\tilde{\alpha}\to \alpha$:
\begin{equation}
    \Gamma_{jaI}\Gamma_{ibI}\Gamma_{ia\alpha} = -\sum_{m}(-1)^m \Gamma_{b_m [j,b\setminus \{b_m\}]\alpha } = -\sum_{m}(-1)^m \delta_{b_m,[j,b\setminus \{b_m\}]}^{\alpha} =-p\delta_{[j,b]}^{\alpha}=-p\Gamma_{jb\alpha},
\end{equation}
where the sum evaluates to $-p$ because shifting the index $b_m$ to the $m$-th slot requires $m$ swaps, generating a $(-1)^m$ phase that cancels the sum's alternating sign, repeated over $p$ identical terms. Thus, $K^{(I)}_{ia,jb}=(1-\kappa_I(R_\alpha))P^{(\alpha)}_{ia,jb}=\delta_{ia}^{jb} $ and $J_{ai}=(v\wedge\rho)_{ia}$. Henceforth, we shall drop the proportionality constants whenever we evaluate $K^{(I)}_{ia,jb}$, since the following three cases will end up identical to the first case. 

\subsection{Case 2: \texorpdfstring{$p>\lceil d/2\rceil$}{p>⌈d/2⌉} }
In this regime, we work in the Hodge dual space. Our coefficients become $\Gamma_{jaI}= \epsilon_{a}^{a'}\epsilon_{I}^{I'}\delta_{ja'}^{I'}$, where primes denote the dualized indices ($a'$ is a $(d-p)$-form and $I'$ is an $(d-p+1)$-form, respectively). Simplifying the relevant contraction gives 

\begin{align}
    \Gamma_{jaI}\Gamma_{ia\tilde{\alpha}}\Gamma_{ibI} &= (\epsilon_{a}^{a'}\epsilon_{I}^{I_1'}\delta_{ja'}^{I_1'} )(\epsilon_{b}^{b'}\epsilon_{I}^{I'_2}\delta_{ib'}^{I_2'})\Gamma_{ia\tilde{\alpha}}= \epsilon_{a}^{a'}\epsilon_{b}^{b'}\delta_{I_1'}^{I_2'} \delta_{ja'}^{I_1'} \delta_{ib'}^{I_2'}\Gamma_{ia\tilde{\alpha}}\nonumber
    \\&=\epsilon_{a}^{a'}\epsilon_{b}^{b'} \delta_{ib'}^{ja'}\Gamma_{ia\tilde{\alpha}}= \epsilon_{a}^{a'}\epsilon_{b}^{b'} \epsilon_{ib'}^I\epsilon_{ja'}^I\Gamma_{ia\tilde{\alpha}} =\delta_{iI}^b\delta_{jI}^a \Gamma_{ia\tilde{\alpha}}.
\end{align}
 Since $I$ is a $(p-1)$-form and $a$ and $b$ are $p$-forms, this expression is exactly the same as Case 1. The projector $K^{(I)}_{ia,jb}$ is further evaluated as
 \begin{align}
     K^{(I)}_{ia,jb}&= (1-\kappa_I(R_\alpha)) P_{ia,jb}^{(\alpha)}\propto \Gamma_{ia\alpha}\Gamma_{jb\alpha} = \epsilon_{a}^{a'}\epsilon_{b}^{b'}\epsilon_{\alpha}^{\alpha_1'}\epsilon_{\alpha}^{\alpha_2'}\delta_{i\alpha_1'}^{a'}\delta_{j\alpha_2'}^{b'} \nonumber \\&=\delta_{\alpha_1'}^{\alpha_2'}\epsilon_{a}^{a'}\epsilon_{b}^{b'}\delta_{i\alpha_1'}^{a'}\delta_{j\alpha_2'}^{b'} =\epsilon_{a}^{a'}\epsilon_{b}^{b'}\delta_{i\alpha_1'}^{a'}\delta_{j\alpha_1'}^{b'}= \epsilon_{a}^{i\alpha_1'}\epsilon_{b}^{j\alpha_1'}=\delta_{ia}^{jb}.
 \end{align}
 As before, this establishes that $K^{(I)}_{ia,jb}= \delta_{ia}^{jb}$.
 
\subsection{Case 3: \texorpdfstring{$p=\lfloor d/2\rfloor$}{⌊d/2⌋}  }
In this regime, the representations $R_{\tilde{\alpha}}$ and $R_I$ are identical to Case 1, meaning the generalized Clebsch-Gordan coefficients evaluate trivially to the exact same results. Only the representation $R_{\alpha}$ differs. Consequently, the projector $P_{ia,jb}^{(\alpha)}\propto \Gamma_{ia\alpha}\Gamma_{jb\alpha}$ must be evaluated differently. Here, we must use the dualized variables, defining $\Gamma_{ia\alpha}= \epsilon_{a}^{a'}\epsilon_{\alpha}^{\alpha'}\delta_{i\alpha'}^{a'}$, where the primed indices again denote the Hodge dual forms. Evaluating the projector yields
\begin{align}
    P_{ia,jb}^{(\alpha)} &\propto \Gamma_{ia\alpha}\Gamma_{jb\alpha} = (\epsilon_{a}^{a'}\epsilon_{\alpha}^{\alpha'_1}\delta_{i\alpha'_1}^{a'} )(\epsilon_{b}^{b'}\epsilon_{\alpha}^{\alpha_2'}\delta_{j\alpha_2'}^{b'})=\delta_{\alpha'_1}^{\alpha_2'}\epsilon_{b}^{b'}\epsilon_{a}^{a'}\delta_{i\alpha_1'}^{a'} \delta_{j\alpha_2'}^{b'} \nonumber\\
    &= \epsilon_{b}^{b'}\epsilon_{a}^{a'}\delta_{i\alpha_1'}^{a'} \delta_{j\alpha_1'}^{b'} = \epsilon_{a}^{i\alpha_1'} \epsilon_{b}^{j\alpha_1'} =\delta_{ia}^{jb}.
\end{align}
As before, this establishes that $K^{(I)}_{ia,jb}= \delta_{ia}^{jb}$.

\subsection{Case 4: \texorpdfstring{$p=\lceil d/2\rceil$}{p = ⌈d/2⌉}}

For this final case, the generalized Clebsch-Gordan coefficient is given by $\Gamma_{jaI}= \epsilon_{a}^{a'}\epsilon_{I}^{I'}\delta_{ja'}^{I'}$, where the primed variables again represent the dualized variables defined by the corresponding Young tableaux. Following the exact reasoning established in Case 2, we contract the coefficients to find
\begin{equation}
    \Gamma_{jaI}\Gamma_{ia\tilde{\alpha}}\Gamma_{ibI} = \delta_{iI}^{b}\delta_{jI}^a\Gamma_{ia\tilde{\alpha}}.
\end{equation}
This is almost identical to Case 2.  When the representation is the hook $R_{\tilde{\alpha}}=[2,1^{d-p-1}]$, this evaluates exactly as it did in Case 2. However, it requires a slight modification when $R_{\tilde{\alpha}}=[1^{d-p}]$. To resolve this, we expand $\Gamma_{ia\tilde{\alpha}}$ explicitly:
\begin{equation}
\Gamma_{ia\tilde{\alpha}}=\epsilon_{a}^{a'}\epsilon_{\tilde{\alpha}}^{\tilde{\alpha}'}\delta_{\tilde{\alpha}'}^{ia'} = \epsilon_{a}^{a'}\epsilon_{\tilde{\alpha}}^{ia'}=\pm \delta_{i\tilde{\alpha}}^{a}.
\end{equation}
Consequently, the dualized Clebsch-Gordan coefficient will reduce to the non-dualized coefficients (up to an unimportant sign), as in Case 2. Finally, to evaluate our projector $K^{(I)}_{ia,jb}$, we observe that the projector $P_{ia,jb}\propto \Gamma_{ia\alpha}\Gamma_{jb\alpha}$ will be identical to Case 2.

\section{Proof of Main Theorem}
\label{Appdx:proofmaintheorem}
\subsection{Setup and Notation}
For notational convenience let $W=[1]\otimes R_a$. For an irreducible representation $R_\beta \subseteq W$, let $\hat{P}_\beta$ denote the projector onto $R_\beta$ and $d_\beta = \text{dim}(R_\beta) =\Tr(\hat{P}_\beta) $. The dimension of the full space is given by $\text{dim}(W)= dd_a$. We denote the size of the Young tableau $[\lambda_a]_L$ to be $|\lambda_a| \coloneq  \sum_i (\lambda_a)_i $. We denote the $A$-th \textit{generator} of $\SON{d}$ acting on a representation $R$ as the operator $\hat{X}_A^{(R)}$, where $A$ runs over all $d(d-1)/2$ generators of $\SON{d}$. We take the generators to be Hermitian, $(\hat{X}_A^{(R)})^\dagger=\hat{X}_A^{(R)}$, antisymmetric, $(\hat{X}_A^{(R)})^T=-\hat{X}_A^{(R)}$, and satisfy the commutation relation $[\hat{X}_A^{(R)},\hat{X}_B^{(R)}]=if_{ABC} \hat{X}_C^{(R)}$, for real structure constants $f_{ABC}$ \cite{Georgi1999generators}. The quadratic Casimir operator is defined as $\hat{C}_2(R) \coloneq \hat{X}_A^{(R)}\hat{X}_A^{(R)}$, where the repeated index $A$ implies summation. We denote the scalar eigenvalue of $\hat{C}_2(R)$ acting on a specific irreducible representation $R_\beta \subseteq R$ as $c_2(R_\beta)$. Finally, the relevant operator in C2 is $\hat{M}\equiv M_{ia,jb}\coloneq\Gamma_{ibI}\Gamma_{jaI}$. Because $\hat{M}$ is an intertwiner, it commutes with the group action on the full space $W$ and thus belongs to the centralizer algebra for the group action, $\mathrm{End}_{\SON{d}}(W)$. Observe that $W$ is multiplicity free: by the Newell-Littlewood rules, each $R_\beta \subseteq W$ is formed by adding/subtracting a single box to $R_a$ (followed by modification rules), and each resulting tableau appears only once. A tableau with $d/2$ rows in even $d$ further splits into distinct self-dual and anti-self-dual representations, so each irreducible representation appears in $W$ at most once. By Schur's lemma, $\mathrm{End}_{\SON{d}}(W)$ is spanned by projectors onto the irreducible representations $\hat{P}_\beta$, and since $\hat{M}\in \mathrm{End}_{\SON{d}}(W)$,
\begin{equation}
    \hat{M} = \sum_\beta \kappa_I(\beta) \hat{P}_\beta.
\end{equation}
 To characterize $\mathrm{End}_{\SON{d}}(W)$, we study the \textit{interaction generator} $\hat{X}_{\mathrm{int}} = \hat{X}_A^{([1])} \otimes \hat{X}_A^{(R_a)}$ (the repeated $A$ index again implies summation), which is the natural invariant coupling between $R_a$ and $[1]$. Such a generator arises when considering the quadratic Casimir operator $\hat{C}_2$ acting on $W$, which is constructed from the total generator $\hat{X}_A^{(W)} = \hat{X}_A^{([1])}\otimes \mathbb{I} + \mathbb{I}\otimes \hat{X}_A^{(R_a)} $:
\begin{equation}
    \hat{C}_2(W)= \hat{X}_A^{(W)}\hat{X}_A^{(W)}= (\hat{X}_A^{([1])}\otimes \mathbb{I} + \mathbb{I}\otimes \hat{X}_A^{(R_a)} )^2 = \hat{C}_2(R_a)+\hat{C}_2([1]) + 2\hat{X}_{\mathrm{int}}.
\end{equation}
Since Casimirs commute with the group action, Schur's lemma implies that each Casimir operator acts as a scalar on every irreducible subspace. When acting on $R_\beta$, $\hat{C}_2(R_a)$ and $\hat{C}_2([1])$ become their eigenvalues $c_2(R_a)$ and $c_2([1])$, while $\hat{C}_2(W)$ becomes $c_2(R_\beta)$. It follows that the eigenvalues $x_\beta$ of $\hat{X}_{\mathrm{int}}$ associated with the irreducible representation $R_\beta\subseteq W$ are given by
\begin{equation} 
\label{eq:xbetadef}
    x_\beta=\frac{1}{2}\left(c_2(R_\beta)-c_2(R_a)-c_2([1]) \right).
\end{equation}
Since $\hat{X}_{\mathrm{int}}$ is a linear combination of Casimirs, it lies in $\mathrm{End}_{\SON{d}}(W)$, and we can decompose it in terms of its eigenvalues:
\begin{equation}
    \hat{X}_{\mathrm{int}}= \sum_{\beta}x_\beta \hat{P}_\beta.
\end{equation}
 In the next section, we shall prove various properties of the Casimir eigenvalue $c_2(R_\beta)$, including an analytic expression in terms of the Young tableau for $R_\beta$. 

With this setup established, the proof of Theorem~\ref{thm:pform} will proceed as follows. First, we notice that $\hat{M}$ is proportional to the partial transpose of the projector onto $R_I$, denoted $\hat{P}_I$. For convenience, for the remainder of this Appendix, we shall work with a different normalization convention from the main text where this proportionality is an equality. We then exploit properties of the partial transpose and the Casimir eigenvalues to derive a system of equations that must be satisfied if C2 holds. Finally, we show that this system of equations has no valid solutions for non-column irreducible representations, completing the proof.

\subsection{On the Nature of Casimirs}
To complete the proof of Theorem~\ref{thm:pform}, we must find explicit expressions for the Casimir eigenvalues. To do so, we must first introduce the concept of \textit{root systems}. 
\begin{defn}
    A \textbf{root system} $(E,\Phi)$ consists of a finite-dimensional real vector space $E$ with an inner product, and a collection of nonzero vectors $\Phi$ (called roots) that satisfy the following four properties:
    \begin{enumerate}
        \item Vectors in $\Phi$ span $E$,
        \item If $\beta \in \Phi$ and $ c\in \mathbb{R}$, then $c\beta \in \Phi$ only if $c=\pm 1$,
        \item If $\beta,\gamma \in \Phi$ then $s_\beta (\gamma)\in \Phi$, where $s_\beta(\gamma)$ is defined to be 
        \begin{equation}
            s_\beta(\gamma) = \gamma - 2\frac{\langle \gamma ,\beta\rangle}{\langle \beta ,\beta\rangle  }\beta ,
        \end{equation}
        \item For all $\beta,\gamma \in \Phi$, we have that $2\frac{\langle \gamma ,\beta\rangle}{\langle \beta ,\beta\rangle  }$ is an integer. 
        \end{enumerate}
\end{defn}
We further define a \textit{base} as follows:
\begin{defn}
    Let $(E,\Phi)$ be a root system. A subset $\Delta \subseteq \Phi$ is called a base  if it forms a basis for $E$ and every root in $\Phi$ can be expressed as a linear combination of elements in $\Delta$ with non-negative or non-positive integer coefficients. Roots with non-negative (non-positive) coefficients constitute the set of \textbf{positive roots} (\textbf{negative roots}), denoted $\Phi^+$ ($\Phi^{-}$).  
\end{defn}
This framework allows us to uniquely identify representations by their highest weights and compute their Casimir eigenvalues, ultimately enabling the proof of the following theorem:
\begin{thm}
    \label{thm:casimirvalues}
      Let $R$ be an irreducible representation of $\SON{d}$ with a Young tableau $[\lambda]_L$ and weights given by $\lambda=(\lambda_1,\cdots, \lambda_L)$. Then the quadratic Casimir $c_2(R)= \sum_i \lambda_i(\lambda_i+d-2i) \equiv d|\lambda|+g(\lambda)$, where $g(\lambda)\coloneq \sum_i \lambda_i (\lambda_i-2i)$.
\end{thm}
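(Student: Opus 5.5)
The plan is to use the standard Freudenthal formula $c_2(R)=\langle \lambda,\lambda+2\delta\rangle$, where $\lambda$ is the highest weight and $\delta=\tfrac12\sum_{\beta\in\Phi^+}\beta$ is the Weyl vector. I will then evaluate this explicitly in an orthonormal basis adapted to $\SON{d}$. First, I fix the Cartan subalgebra of $\mathfrak{so}(d)$ spanned by the rotations in the planes $(x_1,x_2),(x_3,x_4),\dots$, giving $k=\lfloor d/2\rfloor$ orthonormal weight coordinates $e_1,\dots,e_k$.

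The roots come from the vector representation. For $d=2k+1$ ($B_k$) they are $\pm e_i\pm e_j$ ($i<j$) and $\pm e_i$. For $d=2k$ ($D_k$) they are only $\pm e_i\pm e_j$. I take the base $\Delta=\{e_1-e_2,\dots,e_{k-1}-e_k,e_k\}$ (resp. $e_{k-1}+e_k$ in type $D$), so that $\Phi^+=\{e_i\pm e_j\}_{i<j}\cup\{e_i\}$. Summing, the $i$-th component of $2\delta$ is $2(k-i)$ from the $e_i\pm e_j$ roots, plus $1$ from $e_i$ in the odd case. In both cases $2\delta_i=d-2i$.

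Next I identify the highest weight of the tableau $[\lambda]_L$ with $\lambda=\sum_i\lambda_ie_i$, using that $L\le\lfloor d/2\rfloor$ for a standard tableau. I justify this by checking that the traceless tensor with symmetrizers of rows $1,\dots,L$ contains the vector built from the complex combinations $x_{2i-1}+ix_{2i}$, repeated $\lambda_i$ times in row $i$. That vector has weight $\lambda$ and is annihilated by the raising operators, since row lengths are non-increasing and antisymmetrization in columns kills further raising. The self-dual/anti-self-dual case $\lambda_k\to-\lambda_k$ leaves $\lambda_k^2$ unchanged, and its $\delta_k$ component vanishes when $d=2k$. The formula therefore holds for both.

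Finally, I fix the normalization of the inner product so that $\langle\lambda,\lambda+2\delta\rangle$ matches the Casimir $\hat X_A\hat X_A$ with the paper's generator convention. To do this I check the vector representation. With the generators $(X_{(mn)})_{ij}=-i(\delta_{mi}\delta_{nj}-\delta_{mj}\delta_{ni})$, a direct sum gives $c_2([1])=d-1$. This agrees with $\lambda=e_1$: $1+(d-2)=d-1$. The claimed expression $\sum_i\lambda_i(\lambda_i+d-2i)=d|\lambda|+g(\lambda)$ then follows by expanding.

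The main obstacle is the normalization and the highest-weight identification, rather than the algebra. I expect to handle the former with the vector-rep check, since the Killing form and $\hat X_A\hat X_A$ differ only by a representation-independent constant fixed by one example. As a sanity check I will also verify the adjoint $[1,1]$: $(1+d-2)+(1+d-4)=2d-4$, which is twice the dual Coxeter number $d-2$, as expected.
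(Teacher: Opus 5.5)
Your proposal is correct and follows the same route as the paper. Both invoke $c_2(R)=\lVert\lambda+\delta\rVert^2-\lVert\delta\rVert^2$ (the paper cites Hall's Proposition 10.6) and compute $\delta_i=(d-2i)/2$ by summing positive roots separately for $B_k$ and $D_k$; your explicit tableau-to-highest-weight identification and your vector-representation check, which fixes the inner-product normalization against $\hat{X}_A\hat{X}_A$, are details the paper leaves implicit.
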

\begin{proof}
    We define the \textit{half-sum of positive roots} as
\begin{equation}
    \delta =(\delta_1,\cdots, \delta_L) \coloneq \frac{1}{2}\sum_{\beta \in \Phi^{+}}\beta.
\end{equation}
According to Proposition 10.6 in Ref.~\cite{Hall2015Further}, the quadratic Casimir eigenvalue for this representation is given by
\begin{equation}
    c_2(R) = \lVert \lambda+\delta    \rVert^2- \lVert \delta \rVert^2= \sum_i \lambda_i(\lambda_i +2\delta_i).
\end{equation}
 The proof is complete if we show that $\delta_i =(d-2i)/2 $. Since the positive roots slightly differ depending on the parity of $d$, we split the analysis into two cases:
    \begin{thmcases}
    \item When $d$ is even, the positive roots take the form $e_j\pm e_k$ for integers $1\leq j<k\leq d/2$. Thus
    \begin{equation}
         2\delta = \sum_{j=1}^{d/2}\sum_{k=j+1}^{d/2} e_j +e_k +e_j-e_k =   2\sum_{j=1}^{d/2}(d/2-j)e_j =\sum e_j (d-2j),
    \end{equation}
which yields the desired result. 
    \item When $d$ is odd, the positive roots are again of the form $e_j\pm e_k$ for integers $1\leq j<k\leq (d-1)/2$, along with the additional roots $e_j$, for $j \in \{1,\cdots, (d-1)/2\}$ . Thus
    \begin{equation}
         2\delta = \sum_{j=1}^{(d-1)/2}\left[e_j +\left(\sum_{k=j+1}^{(d-1)/2} e_j +e_k +e_j-e_k\right)\right] =\sum_j e_j (d-2j),
    \end{equation}
    which matches the even case.
\end{thmcases}
\end{proof}
This analytic form will be crucial for characterizing the properties of $x_\beta$. In fact, we can fully determine $x_\beta$ for any irreducible representation $R_\beta \subseteq W$ by the \emph{content} of its modified box. For a box at row $i$ and column $j$ in the Young tableau $[\lambda]$, its \emph{content} $c(i,j)\coloneq j-i$. We can thus prove the following: 
\begin{lem}
\label{lem:contentxtheorem}
    If $R_\beta$ is obtained from $R_a$ by adding a box of content $c$, then $x_\beta= c$. If $R_\beta $ is obtained from $R_a$ by subtracting a box of content $c$, then $x_\beta = -(d-1+c)$.
\end{lem}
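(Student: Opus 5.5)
We will compute $x_\beta$ directly from Eq.~\eqref{eq:xbetadef} using the closed form of Theorem~\ref{thm:casimirvalues}, $c_2(R)=d|\lambda|+g(\lambda)$ with $g(\lambda)=\sum_i\lambda_i(\lambda_i-2i)$. First we record $c_2([1])=1\cdot(1+d-2)=d-1$. Because both Casimirs are additive row by row, adding or removing one box from $R_a$ changes only a single row of $\lambda_a$. The difference $c_2(R_\beta)-c_2(R_a)$ therefore reduces to a one-row computation, and only the modified row index $i$ and its length enter.

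\textbf{Adding a box.} Suppose the new box sits at row $i$, column $j=(\lambda_a)_i+1$, so its content is $c=j-i$. We then have
\begin{equation}
c_2(R_\beta)-c_2(R_a)=(\lambda_i+1)(\lambda_i+1+d-2i)-\lambda_i(\lambda_i+d-2i)=2\lambda_i+1+d-2i,
\end{equation}
where $\lambda_i=(\lambda_a)_i=j-1$. This equals $2(j-1)+1+d-2i=2c+d-1$. Subtracting $c_2([1])=d-1$ and halving gives $x_\beta=c$.

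\textbf{Removing a box.} Suppose the removed box sits at row $i$, column $j=(\lambda_a)_i$, with content $c=j-i$. Then
\begin{equation}
c_2(R_\beta)-c_2(R_a)=(j-1)(j-1+d-2i)-j(j+d-2i)=-(2j-1+d-2i)=-(2c+d-1).
\end{equation}
Subtracting $d-1$ gives $-2c-2(d-1)$, and halving yields $x_\beta=-(d-1+c)$, as claimed.

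\textbf{Main subtlety.} The row-additivity computation assumes that $R_\beta$ is labeled by the tableau literally obtained by adding or removing the box. That tableau must be standard, so that the weights $\lambda=(\lambda_1,\dots,\lambda_L)$ in Theorem~\ref{thm:casimirvalues} are the highest weights with $L\le\lfloor d/2\rfloor$. Two cases need checking.

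First, if adding a box in row $L+1$ produces a nonstandard tableau, the snake rule modifies it. We will note that in the regime of Theorem~\ref{thm:pform}, $d\ge4((\lambda_a)_1'+1)$, every tableau in $W$ has at most $(\lambda_a)_1'+1\le\lfloor d/2\rfloor$ rows. No modification rule is then triggered, and the formula applies verbatim.

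Second, for even $d$ with $d/2$ rows, the last highest-weight component may be $-\lambda_L$. Theorem~\ref{thm:casimirvalues} only involves $\lambda_L^2$ together with the linear term $(d-2L)\lambda_L$, which vanishes when $L=d/2$. The sign is therefore irrelevant, and the same result holds for both the self-dual and anti-self-dual parts.
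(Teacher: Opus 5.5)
Your proposal is correct and is the same argument as the paper's. You evaluate the one-row change in $c_2=\sum_i\lambda_i(\lambda_i+d-2i)$ for a box added at $(i,\lambda_i+1)$ or removed at $(i,\lambda_i)$, subtract $c_2([1])=d-1$, and halve, which reproduces the paper's shifts $2\lambda_r+1+d-2r$ and $-2\lambda_r+1+2r-d$. Your closing remarks on standardness and on the sign of the last weight when $L=d/2$ are correct; they make explicit what the paper uses tacitly, including when it later applies the lemma to $[\lambda_a,1]^{\pm}$ in Corollary~\ref{cor:uniquevanish}.
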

\begin{proof}
    Suppose we add a box at row $r$ (i.e. at position $(r,\lambda_r+1)$). The content of this box is $c=(\lambda_r+1-r)$. Plugging this configuration into Theorem~\ref{thm:casimirvalues}, we find that the Casimir eigenvalue shifts by $c_2(R_\beta) -c_2(R_a) = 2\lambda_r+1+d-2r$. Since $c_2([1]) =d-1$, Eq.~\eqref{eq:xbetadef} simplifies to $x_\beta =\lambda_r+1-r=c $. Likewise, suppose we removed a box at $(r,\lambda_r)$, which has content $c=\lambda_r-r$. This removal causes a Casimir eigenvalue shift of $c_2(R_\beta) -c_2(R_a) =-2\lambda_r+1+2r-d$. It follows that $x_\beta = -\lambda_r+1-d+r= -(d-1+c) $, completing the proof. 
\end{proof}
It will also be useful to express $g(\lambda)$ in terms of the columns of $[\lambda]_L$:
\begin{lem}
\label{lem:columnvalues}
    $g(\lambda_a)=\sum_{j\geq 1} (\lambda_a)_j'(2j-2-(\lambda_a)'_j) $
\end{lem}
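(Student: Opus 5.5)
We need to prove $g(\lambda)=\sum_i \lambda_i(\lambda_i-2i)$ equals $\sum_j \lambda'_j(2j-2-\lambda'_j)$. The plan is to rewrite both sides as sums over the boxes of the Young tableau and match them.

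First, we express the row form as a box sum. For a fixed row $i$, the boxes $(i,j)$ with $1\le j\le\lambda_i$ satisfy $\sum_{j=1}^{\lambda_i}(2j-1)=\lambda_i^2$. Therefore
\begin{equation}
\lambda_i(\lambda_i-2i)=\sum_{j=1}^{\lambda_i}\bigl(2j-1-2i\bigr)=\sum_{j=1}^{\lambda_i}\bigl(2c(i,j)-1\bigr),
\end{equation}
where $c(i,j)=j-i$ is the content introduced before Lemma~\ref{lem:contentxtheorem}. Summing over rows gives $g(\lambda)=\sum_{(i,j)\in\lambda}\bigl(2c(i,j)-1\bigr)$.

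Next, we expand the column form in the same way. In column $j$ the boxes are $(i,j)$ with $1\le i\le\lambda'_j$, and $\sum_{i=1}^{\lambda'_j}(2i-1)=(\lambda'_j)^2$. Hence
\begin{equation}
\lambda'_j(2j-2-\lambda'_j)=\sum_{i=1}^{\lambda'_j}\bigl(2j-2-(2i-1)\bigr)=\sum_{i=1}^{\lambda'_j}\bigl(2(j-i)-1\bigr).
\end{equation}
Summing over columns again gives $\sum_{(i,j)\in\lambda}\bigl(2c(i,j)-1\bigr)$.

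Both expressions enumerate the same finite set of boxes of $[\lambda_a]$, once by rows and once by columns, so they are equal. This proves the lemma. The only point requiring care is the index bookkeeping, namely confirming that $-2-(2i-1)$ gives exactly the $-1-2i$ offset from the row computation. Nothing here depends on $d$ or on whether the tableau is standard, since $g$ is a purely combinatorial function of the shape.
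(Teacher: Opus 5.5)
Your proof is correct and follows essentially the same route as the paper. Both arguments use $\sum_{j=1}^{n}(2j-1)=n^2$ to rewrite $g$ as the box sum $\sum_{(i,j)\in\lambda}\bigl(2(j-i)-1\bigr)$ and then regroup by columns. The only cosmetic differences are that you expand the column-form expression back into this box sum instead of evaluating the box sum column by column, and that you phrase the summand in terms of the content $c(i,j)$.
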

\begin{proof}
    \begin{align}
        g(\lambda_a)&= \sum_i (\lambda_a)_i^2 -2i(\lambda_a)_i = \sum_i \left(-2i(\lambda_a)_i+\sum_{j=1}^{(\lambda_a)_i} (2j-1)\right)  =\sum_{(i,j)\in \lambda} 2j-1-2i
        \\&=\sum_j (\lambda_a)'_j(2j-1) -(\lambda_a)_j'((\lambda_a)_j'+1)=\sum_j (\lambda_a)_j'(2j-2-(\lambda_a)'_j).
    \end{align}
\end{proof}

We can further prove that the Casimir eigenvalues in our case will be unique (up to an exception):
\begin{thm}
    Let $R_a$ be an irreducible representation of $\SON{d}$ and let $R_{\beta_1},R_{\beta_2} \subseteq W$ be irreducible representations with $R_{\beta_1}\neq R_{\beta_2}$. Then $c_2(R_{\beta_1})\neq c_2(R_{\beta_2})$, unless $d$ is even and $R_a$ has exactly $d/2-1$ rows. In that case, $W$ contains both $[\lambda_a,1]^{\pm}$ (the self-dual and anti-self-dual irreducible representations, respectively), which share a Casimir eigenvalue.  
    \label{thm:uniquenesscasimir}
\end{thm}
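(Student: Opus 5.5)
The plan is to turn the statement into a comparison of the eigenvalues $x_\beta$ of $\hat{X}_{\mathrm{int}}$, and then use Lemma~\ref{lem:contentxtheorem} to express each $x_\beta$ through the content of the single box that was modified. By Eq.~\eqref{eq:xbetadef}, $c_2(R_{\beta_1})=c_2(R_{\beta_2})$ holds exactly when $x_{\beta_1}=x_{\beta_2}$, since $c_2(R_a)$ and $c_2([1])$ are common to every summand of $W$. By the Newell-Littlewood rules, every $R_\beta\subseteq W$ is $R_a$ with one box added at an addable corner or removed at a removable corner, followed by modification rules where needed. The proof is therefore a case analysis on the type of modification. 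Throughout, let $L=(\lambda_a)'_1\le\lfloor d/2\rfloor$ be the number of rows of the standard tableau $R_a$.

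\emph{Step 1 (same type).} Two distinct addable corners of a Young diagram lie on distinct diagonals, because corners occur in distinct rows and columns and the corner contents strictly decrease going down. Hence two distinct added boxes have distinct contents $c$, and Lemma~\ref{lem:contentxtheorem} gives distinct $x_\beta=c$. The same argument applies to removable corners, which give distinct values $-(d-1+c)$.

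\emph{Step 2 (one added box, one removed box, both standard).} A coincidence would require $c_{\rm add}+c_{\rm rem}=-(d-1)$. An added box sits in row $r\le L+1$, so $c_{\rm add}\ge 1-r\ge -L$. A removed box sits in a row $\le L$, so $c_{\rm rem}\ge 1-L$. Together these give $c_{\rm add}+c_{\rm rem}\ge 1-2L\ge 1-2\lfloor d/2\rfloor$. For odd $d$ this is $\ge 2-d>1-d$, so no coincidence occurs. For even $d$, equality forces $L=d/2$ together with an added box in row $d/2+1$. That tableau is nonstandard, so it falls under Step 3 and is not covered by the lemma as stated.

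\emph{Step 3 (modification rules, the main obstacle).} Lemma~\ref{lem:contentxtheorem} assumes the new tableau is standard, so boxes added in row $L+1>\lfloor d/2\rfloor$ must be handled separately.
\begin{itemize}
\item For odd $d$ with $L=(d-1)/2$: the snake rule removes a hook of length $1$, which is exactly the added box. The resulting representation is $R_a$ itself, with $x_\beta=-\tfrac{d-1}{2}$. This value differs from every added content, since these are $\ge 1-L=\tfrac{3-d}{2}$ once that row is excluded. It also differs from every removed value $-(d-1+c)$, which would need $c=-\tfrac{d-1}{2}<1-L$.
\item For even $d$ with $L=d/2$: the tableau $[\lambda_a,1]$ has a column of length $d/2+1$. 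It dualizes via a hook of length $2$, which either vanishes or yields a standard tableau already counted by a removal. I will check this directly, recomputing $c_2$ from Theorem~\ref{thm:casimirvalues} on the standard label so that no new equal pair arises.
\item For even $d$ with $L=d/2-1$: adding a box in row $d/2$ produces a tableau with exactly $d/2$ rows. It splits into $[\lambda_a,1]^{\pm}$, and since Theorem~\ref{thm:casimirvalues} depends only on $|\lambda_i|$, these two share a Casimir eigenvalue. This is precisely the stated exception.
\end{itemize}
The careful bookkeeping in the even-$d$ dualization, including confirming that the $\pm$ labels on other self-dual tableaux in $W$ do not create further collisions, is where I expect the real work to lie.
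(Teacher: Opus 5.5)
\textbf{Relation to the paper's proof.} Your Steps 1 and 2 are the paper's computation written with contents instead of weights. The paper writes $\lambda_\beta=\lambda_a+\nu$ with $\nu\in\{\pm e_i,0\}$, so that $c_2(R_\beta)-c_2(R_a)=2\langle\mu,\nu\rangle+1$ with $\mu_i=\lambda_i+(d-2i)/2$ strictly decreasing. The dictionary is:
\begin{itemize}
\item distinct contents of addable (or removable) corners is $\mu_i\neq\mu_j$;
\item your collision condition $c_{\mathrm{add}}+c_{\mathrm{rem}}=-(d-1)$, for an addition in row $i$ and a removal in row $j$, is exactly $\mu_i+\mu_j=0$;
\item your bound $1-2L>1-d$ is the positivity of $\mu$ for odd $d$;
\item your odd-$d$ snake-rule case is the paper's $\nu=0$ constituent.
\end{itemize}
For odd $d$ your argument is complete.

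\textbf{The gap.} It is in even $d$ with $L=d/2$, at exactly the point you defer: can $W$ contain both $[\mu]^+$ and $[\mu]^-$ for some $d/2$-row tableau $\mu$? These are distinct representations with the same tableau, and by Theorem~\ref{thm:casimirvalues} the same Casimir. If both occurred, the theorem would have a second exception. Nothing in your Steps 1--3 excludes this. It also cannot be decided from Newell--Littlewood tableaux, which do not carry the $\pm$ decoration.

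\textbf{The missing fact.} The paper uses the following: when $L=d/2$ one has $|\lambda_{d/2}|\ge 1$, and tensoring with $[1]$ shifts $\lambda_{d/2}$ by at most one unit, so its sign can never flip. Hence every $d/2$-row constituent of $W$ carries the same label as $R_a$. Both signs can appear only when $\lambda_{d/2}=0$ and $\lambda_{d/2-1}\ge 1$ (dominance of $\lambda\pm e_{d/2}$), i.e.\ $L=d/2-1$, which is precisely the stated exception. With this, distinct constituents of $W$ have distinct tableaux away from the exception, and your content comparison finishes the proof.

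\textbf{A related caution.} Your length-$2$ snake-rule computation is correct: the strip $\{(d/2+1,1),(d/2,1)\}$ is removable only if $\lambda_{d/2}=1$, and it then reproduces the removal of $(d/2,1)$. However, read with a labelled $R_a$, this double-counts that constituent. In weight language there is no $(d/2+1)$-th coordinate, so this term is the contribution of the opposite-label branch of the $\mathrm{O}(d)$ product and is not a summand of $W$. This is harmless for the Casimir statement, but it matters for the multiplicity-freeness of $W$ used later in the appendix.
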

\begin{proof}
    Let the irreducible representation $R_a$ be labeled by its highest weight vector $\lambda = (\lambda_1, \dots, \lambda_L)$, where the components are non-increasing: $\lambda_1 \geq \lambda_2 \geq \dots \geq |\lambda_L| $. All components are non-negative, except when $d$ is even, $L=d/2$, and $R_a$ is anti-self-dual. Let $\delta$ be the half-sum of positive roots, as before. By the explicit construction of the $\SON{d}$ root systems in Theorem~\ref{thm:casimirvalues}, $\delta_i= (d-2i)/2$, so the components of $\delta$ are strictly decreasing and non-negative: $\delta_1 > \delta_2 > \dots > \delta_L \geq 0$, with $\delta_L=0$ only when $L=d/2$ for even $d$. According to Proposition 10.6 in Ref.~\cite{Hall2015Further}, the quadratic Casimir eigenvalue for this representation is given by
\begin{equation}
    c_2(R_{a}) = \lVert \lambda+\delta    \rVert^2- \lVert \delta \rVert^2.
\end{equation}
To derive the eigenvalue $c_2(R_{\beta})$ for the representations in the tensor product, we note that tensoring with the vector representation $[1]$ shifts the highest weight by $\nu$: i.e.  $\lambda_{\beta} = \lambda_a + \nu$. For $\SON{d}$, these shifts are the unit vectors $\nu =\pm e_i$ (where $e_i$ are 1 at index $i$ and zero elsewhere), corresponding to adding or subtracting a box from the Young tableau of $R_a$. For odd $d$, there is an additional zero weight shift $\nu=0$, which occurs when adding a box to the Young tableau yields the same tableau after applying the modification rules. The shifted weight $ \lambda_\beta$ can have a negative last component only if $R_\beta$ is anti-self-dual. For a standard shift $\nu = \pm e_i$,
\begin{equation}
    c_2(R_{\beta}) = \lVert \lambda+\delta  +\nu  \rVert^2- \lVert \delta \rVert^2 = c_2(R_a) + 2\langle \lambda+\delta, \nu\rangle +1.
\end{equation}
Define $\mu_i\coloneq 
\lambda_i +\delta_i$. Because $\lambda$ is non-increasing and $\delta$ is strictly decreasing, the sequence of $\{\mu_i\}$ is strictly decreasing and $\mu_i\leq 0$ is only possible when $i=d/2$ and $d$ is even. Consequently, two distinct shifts $\nu_1\neq \nu_2$ give the same Casimir eigenvalue only if $\langle \mu ,\nu_1\rangle=\langle \mu,\nu_2\rangle$. If both $\nu_1=\pm e_i$ and $\nu_2=\pm e_j$ have the same sign, this requires $\mu_i=\mu_j$ for $i\neq j$, which is impossible, as each component of $\mu$ is distinct. Thus, this can only occur if there exists $\mu_i=-\mu_j$ (possibly with $i=j$). For even $d$, there are two ways this can happen: 
\begin{itemize}
    \item If $L=d/2$ (so $R_a$ is either self-dual or anti-self-dual), then $\mu_L=\lambda_L\neq 0$. Since $|\mu_L|< \mu_j$ for all $j<L$, no two shifts are the same. Note that a single shift changes $\lambda_L$ by $\pm 1$ and since $|\lambda_L| \geq 1 $, it cannot flip the sign of $\lambda_L$. Thus, $W$ never contains both self-dual and anti-self-dual branches. 
    \item If $L=d/2-1$, then adding a box in row $d/2$ corresponds to adding $\nu =\pm e_{d/2}$, with $\mu_{d/2}=0$ (since row $d/2$ of $R_a$ is empty and $\delta_{d/2}=0$). Both choices of $\nu$ yield the same Casimir eigenvalue, corresponding to the exceptional self-dual/anti-self-dual representations $[\lambda_a,1]^{\pm}$. 
\end{itemize}

For odd $d$, all $\mu_i>0$, so the above mechanism cannot occur. The only potential degeneracy arises from a standard shift $\pm e_i$ yielding a net zero shift to the Casimir eigenvalue, causing a clash with the $\nu = 0$ eigenvalue. Setting the shift to zero requires
\begin{equation}
    \pm 2\mu_i + 1 = 0.
\end{equation}
Because $\mu_i$ is strictly positive for odd $d$, the $+e_i$ case can never vanish. For the $-e_i$ case, the above equality requires $\mu_i=\frac{1}{2}$. For odd $d$, the minimal possible value for $\delta_i$ is $\delta_L=\frac{1}{2}$, when $L=(d-1)/2$. Therefore, this zero-shift condition can only be met if $i=L$, $\delta_L = 1/2$, and $\lambda_L = 0$. However, if $\lambda_L=0$, applying the shift $-e_L$ would result in a new highest weight where the $L$-th component is $-1$. Since irreducible representations of $\SON{d}$ for odd $d$ have only non-negative highest weights, this tableau is forbidden. Therefore, the Casimir eigenvalue in the case $\nu=0$ remains unique. Since every valid shift $\nu$ yields a unique eigenvalue, $R_{\beta_1} \neq R_{\beta_2}  \implies c_2(R_{\beta_1} ) \neq c_2(R_{\beta_2} )$ (besides the stated exception), as desired.  
\end{proof}
 The uniqueness immediately provides us the following corollary:
\begin{cor}
\label{cor:uniquevanish}
At most one irreducible representation $R_\beta\subseteq W$ can have $x_\beta=0$.
\end{cor}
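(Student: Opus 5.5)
The plan is to read Corollary~\ref{cor:uniquevanish} off from the eigenvalue relation Eq.~\eqref{eq:xbetadef} together with Theorem~\ref{thm:uniquenesscasimir}.

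By Eq.~\eqref{eq:xbetadef}, $x_\beta=\tfrac12\bigl(c_2(R_\beta)-c_2(R_a)-c_2([1])\bigr)$. The subtracted terms $c_2(R_a)$ and $c_2([1])=d-1$ do not depend on $\beta$. Hence the map $R_\beta\mapsto x_\beta$ is an affine bijection of the Casimir eigenvalues, and two summands $R_{\beta_1}\neq R_{\beta_2}$ of $W$ have the same $x$-value exactly when $c_2(R_{\beta_1})=c_2(R_{\beta_2})$.

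Suppose $R_{\beta_1}\neq R_{\beta_2}$ both satisfy $x_{\beta_1}=x_{\beta_2}=0$. Then their Casimir eigenvalues coincide. Outside the exceptional case, Theorem~\ref{thm:uniquenesscasimir} forbids this, which proves the claim. The standing hypothesis of Theorem~\ref{thm:pform} is that $d$ is odd, so the exception never occurs there. This is the setting in which the corollary is used.

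For completeness I would also handle the exceptional case: even $d$ with $R_a$ having exactly $d/2-1$ rows. There the only coincident pair is $[\lambda_a,1]^{\pm}$. Both are obtained by adding a box in row $r=d/2$, column $1$, which has content $c=1-d/2$. By Lemma~\ref{lem:contentxtheorem}, their common eigenvalue is $x=1-d/2$. This is nonzero for $d\geq 3$; since $d$ is even, $d\geq 4$ and $x\le -1$. So even in this case the degenerate pair does not sit at $x=0$, and the corollary holds without qualification.

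The main point needing care is this exceptional even-$d$ pair. Everything else is immediate from the uniqueness of the Casimir eigenvalues.
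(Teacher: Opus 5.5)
Your proposal is correct and follows the paper's proof essentially step for step. You reduce $x_{\beta_1}=x_{\beta_2}=0$ to equality of Casimir eigenvalues via Eq.~\eqref{eq:xbetadef} and invoke Theorem~\ref{thm:uniquenesscasimir}; you then rule out the even-$d$ exceptional pair $[\lambda_a,1]^{\pm}$ by using Lemma~\ref{lem:contentxtheorem} to show that their common eigenvalue is the content $1-d/2\neq 0$, just as the paper does.
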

\begin{proof}
  By Eq.~\eqref{eq:xbetadef},  $x_\beta =0$ if and only if $c_2(R_\beta)=c_2([1])+c_2(R_a)$. Suppose two distinct irreducible representations $R_{\beta_1}$ and $R_{\beta_2}$ each have $x_{\beta_1}=x_{\beta_2}=0$. Since $R_a$ is fixed, this means that the two representations share a Casimir eigenvalue. By Theorem~\ref{thm:uniquenesscasimir}, this can only hold if $d$ is even, $R_a$ has exactly $d/2-1$ rows, and $\{R_{\beta_1},R_{\beta_2}\}= \{[\lambda_a,1]^{+},[\lambda_a,1]^{-} \}$. We generate $R_\beta$ by adding a single box to the first column of $R_a$, which has column length $d/2 -1$, so the added box has content $c= 1-d/2$. By Lemma~\ref{lem:contentxtheorem}, $x_\beta=c$ for both irreducible representations (since they have the same Casimir eigenvalue). Since we restrict to $d\geq 3$, $c<0$, so $x_\beta\neq 0$, which contradicts our assumption that $x_\beta=0$. Thus, no distinct representations can both have $x_\beta=0$, completing the proof.
\end{proof}
The following will only use Corollary~\ref{cor:uniquevanish}, so the lack of uniqueness of Casimirs in even $d$ does not restrict our main result.

\subsection{Conditions from Partial Transpose Properties}
Let us define the partial transpose $\Theta(\cdot)$ such that $\Theta(M_{ia,jb}) =M_{ib,ja} $. As we noted earlier, $\hat{M}$ and the $R_I$ projector $\hat{P}_I$ are related via  $\Theta(\hat{M}) =\hat{P}_I $. We now establish several general properties of the partial transpose before investigating its action on the generator $\hat{X}_{\mathrm{int}}$:
\begin{lem}
\label{lem:transproperties}
    The partial transpose $\Theta$ satisfies the following three properties for any $Y,Z\in \mathrm{End}(W)$:
    \begin{enumerate}
        \item $\Theta^2=\mathbb{I}$,
        \item  $\Tr(\Theta(Y)Z)= \Tr(Y\Theta(Z))$,
        \item $\Tr(\Theta(Y)^2)=\Tr(Y^2)$.
    \end{enumerate}
\end{lem}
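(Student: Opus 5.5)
The plan is to work directly with components in the product basis of $W=[1]\otimes R_a$. We write an element of $\mathrm{End}(W)$ as a tensor $Y_{ia,jb}$, with matrix product $(YZ)_{ia,jb}=Y_{ia,kc}Z_{kc,jb}$ and trace $\Tr(Y)=Y_{ia,ia}$. The partial transpose $\Theta$ swaps only the $R_a$ labels $a\leftrightarrow b$ and leaves the vector labels $i,j$ in place. Throughout, every index runs over a fixed orthonormal basis of its representation. This matches the convention stated in Section~\ref{Sec:Genericcharge}, so $\Theta$ is a well-defined linear map on $\mathrm{End}(W)$ and no combinatorial prefactors appear.

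\emph{Property 1.} This is immediate. Applying $\Theta$ twice swaps $a$ and $b$ twice:
\[
\Theta(\Theta(Y))_{ia,jb}=\Theta(Y)_{ib,ja}=Y_{ia,jb}.
\]
Hence $\Theta^2=\mathbb{I}$ on all of $\mathrm{End}(W)$.

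\emph{Property 2.} We expand both traces in components. On the left,
\[
\Tr(\Theta(Y)Z)=\Theta(Y)_{ia,jb}Z_{jb,ia}=Y_{ib,ja}Z_{jb,ia}.
\]
On the right,
\[
\Tr(Y\Theta(Z))=Y_{ia,jb}\Theta(Z)_{jb,ia}=Y_{ia,jb}Z_{ja,ib}.
\]
In the second expression we relabel the summed dummy indices $a\leftrightarrow b$, which turns it into $Y_{ib,ja}Z_{jb,ia}$. This equals the first expression. The only point to check is that $a$ and $b$ range over the same index set, namely the basis of $R_a$, so the relabeling is legitimate.

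\emph{Property 3.} This follows from the first two. Set $Z=\Theta(Y)$ in Property 2 to get $\Tr(\Theta(Y)\Theta(Y))=\Tr(Y\,\Theta(\Theta(Y)))$. By Property 1 the right-hand side is $\Tr(Y^2)$.

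No step is a real obstacle. The one thing to be careful about is the index convention: $\Theta$ must be defined with respect to an orthonormal real basis of $R_a$, so that transposition is basis-consistent. For $\SON{d}$ the representations are real (and, as noted in Section~\ref{sec:RIchoice}, the exceptional complex case only relabels tableaux), so I will state this explicitly. I will also remark that none of the three properties uses that $Y$ or $Z$ is an intertwiner. This is what will later allow us to apply them to $\hat{M}$, $\hat{P}_I$ and $\hat{X}_{\mathrm{int}}$ alike.
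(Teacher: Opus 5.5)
Your proof is correct and matches the paper's argument: component expansion for Properties 1 and 2 (with the dummy relabeling $a\leftrightarrow b$), then Property 3 by setting $Z=\Theta(Y)$ in Property 2 and using $\Theta^2=\mathbb{I}$. Your caveat about a real orthonormal basis is harmless but not needed here, since all three identities hold for $\Theta$ defined in any fixed basis of $R_a$; the basis choice only matters later, for instance in $\Theta(\hat{X}_{\mathrm{int}})=-\hat{X}_{\mathrm{int}}$.
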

\begin{proof}

    \begin{enumerate}
        \item Fix $Y\in \mathrm{End}(W)$. Applying the definition twice yields $\Theta^2(Y)=Y$.
        \item Fix $ Y,Z\in \mathrm{End}(W)$. Expanding in indices gives
        \begin{equation}
            \Tr(\Theta(Y)Z)= Y_{ib,ja}Z_{jb,ia}=Y_{ia,jb}Z_{ja,ib}= \Tr(Y\Theta(Z)).
        \end{equation}
        \item Fix $ Y\in \mathrm{End}(W)$. Utilizing the first two properties, we find
        \begin{equation}
\Tr(\Theta(Y)^2)=\Tr(\Theta(Y)\Theta(Y))= \Tr(Y\Theta^2(Y))=\Tr(Y^2)
        \end{equation}
    \end{enumerate}
\end{proof}
We now apply this map to the specific generator $\hat{X}_{\mathrm{int}}$:
\begin{lem}
\label{lem:Xproperties}
    The following two properties hold:
    \begin{equation}
        \Theta(\hat{X}_{\mathrm{int}})=-\hat{X}_{\mathrm{int}}, \quad  \Theta(\hat{X}_{\mathrm{int}}^2) =\hat{X}_{\mathrm{int}}^2+(d-2)\hat{X}_{\mathrm{int}}.
    \end{equation}
\end{lem}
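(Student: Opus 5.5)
The plan is to compute both identities directly at the level of matrix elements, using only two facts: the generators are antisymmetric, $(\hat{X}_A^{(R)})^T=-\hat{X}_A^{(R)}$, and they satisfy the commutation relations $[\hat{X}_A,\hat{X}_B]=if_{ABC}\hat{X}_C$. In components, $(\hat{X}_{\mathrm{int}})_{ia,jb}=(X_A^{([1])})_{ij}(X_A^{(R_a)})_{ab}$. The partial transpose $\Theta$ only swaps the $R_a$ indices $a\leftrightarrow b$, so it acts as an ordinary transpose on the $R_a$ tensor factor and leaves the vector factor untouched.

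\emph{First identity.} This is immediate from antisymmetry:
\begin{equation}
\Theta(\hat{X}_{\mathrm{int}})_{ia,jb}=(X_A^{([1])})_{ij}(X_A^{(R_a)})_{ba}=-(X_A^{([1])})_{ij}(X_A^{(R_a)})_{ab}.
\end{equation}

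\emph{Second identity.} Write $(\hat{X}_{\mathrm{int}}^2)_{ia,jb}=(X_A^{([1])}X_B^{([1])})_{ij}(X_A^{(R_a)}X_B^{(R_a)})_{ab}$. Transposing the $R_a$ factor gives
\begin{equation}
\big((X_AX_B)^{T}\big)_{ab}=(X_B^TX_A^T)_{ab}=(X_BX_A)_{ab}.
\end{equation}
I will then use $X_BX_A=X_AX_B-if_{ABC}X_C$ on $R_a$. The first piece reproduces $\hat{X}_{\mathrm{int}}^2$. The second piece is
\begin{equation}
-i\,f_{ABC}\,(X_A^{([1])}X_B^{([1])})_{ij}\,(X_C^{(R_a)})_{ab}.
\end{equation}
To simplify it, I antisymmetrize in $A,B$, which gives $f_{ABC}X_AX_B=\tfrac12 f_{ABC}[X_A,X_B]=\tfrac{i}{2}f_{ABC}f_{ABD}X_D$. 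I then use $f_{ABC}f_{ABD}=c_2(\mathrm{adj})\,\delta_{CD}$, which holds because the adjoint generators are $(X_C^{\mathrm{adj}})_{AB}=-if_{CAB}$. The second piece therefore becomes
\begin{equation}
-i\cdot\tfrac{i}{2}\,c_2(\mathrm{adj})\,\hat{X}_{\mathrm{int}}=\tfrac12 c_2(\mathrm{adj})\,\hat{X}_{\mathrm{int}}.
\end{equation}

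\emph{Evaluating the adjoint Casimir.} The adjoint of $\SON{d}$ is the tableau $[1,1]$. Theorem~\ref{thm:casimirvalues} gives $c_2([1,1])=(1+d-2)+(1+d-4)=2d-4$, so $\tfrac12 c_2(\mathrm{adj})=d-2$, which yields the claimed identity.

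The main point requiring care is normalization. The relation $f_{ABC}f_{ABD}=c_2(\mathrm{adj})\delta_{CD}$, combined with the Casimir value from Theorem~\ref{thm:casimirvalues}, is only valid if the generators are normalized so that this theorem's formula holds, in particular $c_2([1])=d-1$. That is the same convention already used in Lemma~\ref{lem:contentxtheorem}. I will check it explicitly in the vector representation. Take $(X_{(kl)})_{ij}=-i(\delta_{ki}\delta_{lj}-\delta_{kj}\delta_{li})$, with $A=(kl)$ running over $k<l$. Then $\sum_A X_AX_A=(d-1)\mathbb{I}$, which confirms the normalization, and the same basis gives $\tfrac12 c_2(\mathrm{adj})=d-2$ consistently. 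A direct index computation in this basis could also serve as an independent cross-check of the coefficient $d-2$.
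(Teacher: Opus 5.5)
Your proposal is correct and follows essentially the same route as the paper. Antisymmetry gives the first identity. For the second, you transpose the $R_a$ factor to get $\hat{X}_B^{(R_a)}\hat{X}_A^{(R_a)}$, trade the reordering for a commutator, antisymmetrize in $A,B$, and use $f_{ABC}f_{ABD}=c_2([1,1])\delta_{CD}$ with $c_2([1,1])=2(d-2)$. Your explicit check that this basis gives $c_2([1])=d-1$, and hence the same normalization as Theorem~\ref{thm:casimirvalues}, is a useful addition that the paper leaves implicit.
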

\begin{proof}
    The first property follows immediately from the antisymmetry of $\SON{d}$ generators. To prove the second property, we write $\Theta(\hat{X}_{\mathrm{int}}^2)=\hat{X}_A^{([1])}\hat{X}_B^{([1])}\otimes\hat{X}_B^{(R_a)} \hat{X}_A^{(R_a)}  $, which implies that  
    \begin{equation}
    \hat{X}_{\mathrm{int}}^2-\Theta(\hat{X}_{\mathrm{int}}^2) = \hat{X}_A^{([1])}\hat{X}_B^{([1])}\otimes[\hat{X}_A^{(R_a)},\hat{X}_B^{(R_a)}] =\frac{1}{2}[\hat{X}_A^{([1])},\hat{X}_B^{([1])}]\otimes[\hat{X}_A^{(R_a)},\hat{X}_B^{(R_a)}] .
    \end{equation}
    Utilizing the commutation relation $[\hat{X}_A^{(R)}, \hat{X}_B^{(R)}] = i f_{ABC}\hat{X}_C^{(R)}$, where $f_{ABC}$ are the structure constants, the expression becomes: 
    \begin{equation}
        \hat{X}_{\mathrm{int}}^2-\Theta(\hat{X}_{\mathrm{int}}^2)  = -\frac{1}{2} f_{ABC}f_{ABD}\hat{X}_C^{([1])}\otimes \hat{X}_D^{(R_a)} = -\frac{1}{2} c_2([1,1])\delta_{CD} \hat{X}_C^{([1])}\otimes \hat{X}_D^{(R_a)}.
    \end{equation}
    In the final step, we exploited the fact that $-if_{ABC}$ are themselves the generators of the adjoint representation (which corresponds to the rank-2 antisymmetric representation $[1,1]$ of $\SON{d}$, meaning $f_{ABC}f_{ABD} =\Tr(\hat{X}_C^{([1,1])}\hat{X}_D^{([1,1])})=c_2([1,1])\delta_{CD}$; see the derivation of Eq.~\eqref{eq:sixthtrace} for more details). Finally, applying the explicit Casimir formula from Theorem~\ref{thm:casimirvalues} to the adjoint representation yields $c_2([1,1]) = 2(d-2)$. Plugging this back into the expression and rearranging terms gives the desired result. 
\end{proof}
Equipped with these properties, we can derive six key trace identities:
\begin{lem}
    \begin{align}
        \Tr(\hat{M}) &=d_I,\label{eq:firsttrace} \\
        \Tr(\hat{M}\hat{X}_{\mathrm{int}}) &=-x_Id_I, \label{eq:secondtrace}\\
        \Tr(\hat{M}\hat{X}_{\mathrm{int}}^2)&= (x_I^2+(d-2)x_I)d_I, \label{eq:thirdtrace}\\
        \Tr(\hat{M}^2) & =d_I, \label{eq:fourthtrace}\\
    \Tr(\hat{X}_{\mathrm{int}})&=0,\label{eq:fifthtrace}\\
        \Tr(\hat{X}_{\mathrm{int}}^2)&=2c_2(R_a)d_a.\label{eq:sixthtrace}
    \end{align}
\end{lem}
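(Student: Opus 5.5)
The plan is to derive all six identities from three ingredients: the normalization $\Theta(\hat{M})=\hat{P}_I$ fixed in the setup, the partial-transpose properties of Lemma~\ref{lem:transproperties}, and Lemma~\ref{lem:Xproperties}. Here $x_I$ is the eigenvalue of $\hat{X}_{\mathrm{int}}$ on $R_I\subseteq W$, and $d_I=\Tr(\hat{P}_I)$. Equivalently, $\hat{P}_I$ has components $(\hat{P}_I)_{ia,jb}=\Gamma_{iaI}\Gamma_{jbI}$, normalized to be the projector onto the copy of $R_I$ in $W=[1]\otimes R_a$. This copy is unique because $W$ is multiplicity free.

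For the four identities involving $\hat{M}$, I will move $\Theta$ across the trace.
\begin{itemize}
\item For Eq.~\eqref{eq:firsttrace}: the partial transpose does not touch diagonal components, since $M_{ia,ia}=(\hat{P}_I)_{ia,ia}$. Hence $\Tr(\hat{M})=\Tr(\hat{P}_I)=d_I$.
\item For Eq.~\eqref{eq:fourthtrace}: property 3 of Lemma~\ref{lem:transproperties} gives $\Tr(\hat{M}^2)=\Tr(\Theta(\hat{P}_I)^2)=\Tr(\hat{P}_I^2)=\Tr(\hat{P}_I)=d_I$.
\item For Eq.~\eqref{eq:secondtrace}: property 2 together with the first part of Lemma~\ref{lem:Xproperties} gives $\Tr(\hat{M}\hat{X}_{\mathrm{int}})=\Tr(\hat{P}_I\Theta(\hat{X}_{\mathrm{int}}))=-\Tr(\hat{P}_I\hat{X}_{\mathrm{int}})$. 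Since $\hat{X}_{\mathrm{int}}=\sum_\beta x_\beta\hat{P}_\beta$, this equals $-x_Id_I$.
\item For Eq.~\eqref{eq:thirdtrace}: the same manipulation with the second part of Lemma~\ref{lem:Xproperties} gives $\Tr(\hat{P}_I(\hat{X}_{\mathrm{int}}^2+(d-2)\hat{X}_{\mathrm{int}}))=(x_I^2+(d-2)x_I)d_I$.
\end{itemize}
The only point needing care is to check that the index placement of $\Theta$ in Lemma~\ref{lem:transproperties} matches the trace convention $\Tr(YZ)=Y_{ia,jb}Z_{jb,ia}$. Once that convention is fixed, these four steps are immediate.

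For Eq.~\eqref{eq:fifthtrace}, I will factor the trace over the tensor product: $\Tr(\hat{X}_{\mathrm{int}})=\Tr(\hat{X}_A^{([1])})\Tr(\hat{X}_A^{(R_a)})$. Each factor vanishes because the generators are antisymmetric, so their diagonal entries are zero.

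For Eq.~\eqref{eq:sixthtrace}, I will similarly write
\[
\Tr(\hat{X}_{\mathrm{int}}^2)=\Tr(\hat{X}_A^{([1])}\hat{X}_B^{([1])})\,\Tr(\hat{X}_A^{(R_a)}\hat{X}_B^{(R_a)}).
\]
The key input is the index identity $\Tr(\hat{X}_A^{(R)}\hat{X}_B^{(R)})=T(R)\,\delta_{AB}$. It follows because this bilinear form is invariant under the adjoint action and the generators form an orthonormal basis. The adjoint representation is irreducible for $d\geq 3$, $d\neq 4$, and the excluded case $d=4$ does not enter, since $d$ is odd here. Contracting with $\delta_{AB}$ fixes the coefficient as $T(R)=c_2(R)\dim(R)/N$, where $N=d(d-1)/2$ is the number of generators. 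Substituting gives
\[
\Tr(\hat{X}_{\mathrm{int}}^2)=N\,T([1])\,T(R_a)=\frac{c_2([1])\,d\cdot c_2(R_a)\,d_a}{N}.
\]
With $c_2([1])=d-1$ from Theorem~\ref{thm:casimirvalues}, this reduces to $2c_2(R_a)d_a$.

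This is also the identity already used for the adjoint representation in Lemma~\ref{lem:Xproperties}, namely $f_{ABC}f_{ABD}=c_2([1,1])\delta_{CD}$. Applying the same formula with $\dim([1,1])=N$ checks it, so that lemma is consistent with the computation here. I expect the main obstacle to be keeping the normalizations consistent: the orthonormality of the generators, the $\delta_{AB}$ proportionality from invariance and irreducibility of the adjoint, and the main text's rescaling of $\hat{M}$ so that $\Theta(\hat{M})=\hat{P}_I$ exactly. The remaining steps are bookkeeping.
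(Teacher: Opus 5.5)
Your proposal is correct and follows the paper's proof essentially step for step. The four $\hat{M}$ identities come from moving $\Theta$ across the trace using $\Theta(\hat{M})=\hat{P}_I$ together with Lemmas~\ref{lem:transproperties} and \ref{lem:Xproperties}, tracelessness of $\hat{X}_{\mathrm{int}}$ comes from antisymmetry of the generators, and $\Tr(\hat{X}_{\mathrm{int}}^2)=2c_2(R_a)d_a$ comes from $\Tr(\hat{X}_A^{(R)}\hat{X}_B^{(R)})=c_2(R)\dim(R)\,\delta_{AB}/\dim(\SON{d})$; your justification of this last normalization, via invariance of the trace form and simplicity of the Lie algebra of $\SON{d}$ for $d\neq 4$, is slightly more careful than the paper, which simply asserts the proportionality to $\delta_{AB}$.
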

\begin{proof}
\leavevmode\par
\begin{enumerate}
    \item Eq.~\eqref{eq:firsttrace} follows because the partial transpose preserves the trace:
    \begin{equation}
         \Tr(\hat{M}) =\Tr(\Theta(\hat{P}_I))=\Tr(\hat{P}_I)=d_I.
    \end{equation}
    \item Eq.~\eqref{eq:secondtrace} is obtained by combining Lemmas~\ref{lem:transproperties} and \ref{lem:Xproperties}: 
    \begin{equation}
        \Tr(\hat{M}\hat{X}_{\mathrm{int}}) = \Tr(\hat{P}_I\Theta(\hat{X}_{\mathrm{int}}))= -\Tr(\hat{P}_I\hat{X}_{\mathrm{int}})=-x_Id_I.
    \end{equation}
    \item  Eq.~\eqref{eq:thirdtrace} similarly follows from Lemmas~\ref{lem:transproperties} and \ref{lem:Xproperties}: 
    \begin{equation}
        \Tr(\hat{M}\hat{X}_{\mathrm{int}}^2) = \Tr(\hat{P}_I\Theta(\hat{X}_{\mathrm{int}}^2))= \Tr\left(\hat{P}_I(\hat{X}_{\mathrm{int}}^2+(d-2)\hat{X}_{\mathrm{int}}) \right)= (x_I^2+(d-2)x_I)d_I .
    \end{equation}
    \item  Eq.~\eqref{eq:fourthtrace} directly follows from the third property in Lemma~\ref{lem:transproperties}:
    \begin{equation}
        \Tr(\hat{M}^2) = \Tr(\Theta(\hat{P}_I)^2) =\Tr(\hat{P}_I^2)=d_I.
    \end{equation}
    \item Eq.~\eqref{eq:fifthtrace} holds because $\hat{X}_{\mathrm{int}}$ is built from antisymmetric generators, making it inherently traceless.
    \item Proving Eq.~\eqref{eq:sixthtrace} is more involved. Note that, by the definition of a Casimir operator, for an irreducible representation $R_\beta\subseteq W$,
    \begin{equation}
        \Tr_{R_\beta}(X_A^{(R_\beta)}X_B^{(R_\beta)})=\frac{c_2(R_\beta)d_\beta }{\mathrm{dim}(\SON{d})}\delta_{AB}.
    \end{equation}
     This normalization arises from $c_2(\beta)\mathbb{I}_{\beta}= X_A^{(R_\beta)}X_A^{(R_\beta)} $. Likewise, we know that 
     \begin{equation}
         \Tr_{R_\beta}(X_A^{(R_\beta)}X_B^{(R_\beta)})=k_\beta \delta_{AB},
     \end{equation}
    for a constant $k_\beta$. Taking the trace of the former expression and comparing to the latter provides $k_\beta =(c_2(R_\beta)d_\beta) / \dim(\SON{d})$. Given that $\dim(\SON{d})=\frac{1}{2}d(d-1)$ and $\dim([1])=d$, taking the trace of the interaction generator squared yields
    \begin{equation}
        \Tr({\hat{X}_{\mathrm{int}}^2})= k_{a} k_{[1]}\dim(\SON{d}) = \frac{c_2(R_a)d_ad(d-1)}{\dim(\SON{d}) } =2c_2(R_a)d_a.
    \end{equation}
\end{enumerate}
\end{proof}
Finally, we derive a constraint (which we shall call $C^{*}$) that must hold under C2. Denote the shared value $\kappa_I(R_{\tilde{\alpha}})=\kappa$ and the unique value $\kappa_I(R_\alpha)=\kappa+\Delta$. Then the following theorem holds:
\begin{thm}
    Assume C2 holds. Then $x_I,x_\alpha \neq 0$ and 
    \begin{equation}
        \label{eq:solutionsdelta}
        \Delta = -\frac{x_Id_I}{x_\alpha d_\alpha}, \quad \kappa = \frac{d_I(x_I+x_\alpha)}{dd_ax_\alpha} .
    \end{equation}
    Furthermore, the equation (labeled $C^*$) must be true:
    \begin{equation}
    \label{eq:cstareqn}
        C^*: 2c_2(R_a)(x_I+x_\alpha)=dx_Ix_\alpha(x_I+x_\alpha+d-2)
    \end{equation}
\end{thm}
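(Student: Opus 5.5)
Under C2, $\hat{M}$ has exactly two distinct eigenvalues on $W$: $\kappa$ on every $R_{\tilde{\alpha}}$ and $\kappa+\Delta$ on $R_\alpha$. We can therefore write
\begin{equation}
\hat{M} = \kappa\,\mathbb{I}_W + \Delta\,\hat{P}_\alpha .
\end{equation}
This is legitimate because, with $R_\alpha$ irreducible and $W$ multiplicity free, the $R_{\tilde{\alpha}}$ together with $R_\alpha$ exhaust $W$. The plan is to substitute this two-parameter form into the trace identities \eqref{eq:firsttrace}--\eqref{eq:sixthtrace}, using $\Tr(\hat{P}_\alpha)=d_\alpha$, $\Tr(\hat{P}_\alpha\hat{X}_{\mathrm{int}})=x_\alpha d_\alpha$ and $\Tr(\hat{P}_\alpha\hat{X}_{\mathrm{int}}^2)=x_\alpha^2 d_\alpha$. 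This yields an overdetermined linear system in $(\kappa,\Delta)$ whose compatibility is $C^*$.

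First, \eqref{eq:secondtrace} with \eqref{eq:fifthtrace} gives $\Delta x_\alpha d_\alpha=-x_I d_I$. Since $d_I\neq 0$, we need to show $x_I\neq 0$; then $x_\alpha\neq0$ and $\Delta=-x_Id_I/(x_\alpha d_\alpha)$ follow at once. To get $x_I\neq 0$, note that $R_I\neq R_\alpha$ (C1 excludes $[0]$), so Corollary~\ref{cor:uniquevanish} allows at most one of $x_I,x_\alpha$ to vanish. If $x_I=0$, the relation forces $\Delta x_\alpha=0$. Then $\Delta=0$, so $\hat{M}=\kappa\mathbb{I}$, and \eqref{eq:firsttrace} with \eqref{eq:fourthtrace} gives $\kappa dd_a=d_I=\kappa^2 dd_a$. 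Hence $\kappa=1$ and $d_I=dd_a$, which is impossible because $R_I$ is a proper summand of $W$ (W contains $R_\alpha\neq R_I$). If instead $x_\alpha=0$, then $x_Id_I=0$ gives $x_I=0$, contradicting the corollary.

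Second, \eqref{eq:firsttrace} reads $\kappa dd_a+\Delta d_\alpha=d_I$. Substituting $\Delta$ gives $\kappa dd_a=d_I(1+x_I/x_\alpha)$, which is the stated formula for $\kappa$. Third, \eqref{eq:thirdtrace} together with \eqref{eq:sixthtrace} gives
\begin{equation}
2\kappa c_2(R_a)d_a+\Delta x_\alpha^2 d_\alpha=(x_I^2+(d-2)x_I)d_I .
\end{equation}
Inserting both solutions, the left side becomes $\frac{2c_2(R_a)d_I(x_I+x_\alpha)}{dx_\alpha}-x_Ix_\alpha d_I$. Dividing by $d_I$, multiplying by $dx_\alpha$ and collecting terms produces exactly $C^*$, since $x_Ix_\alpha d + dx_I(x_I+d-2)$ regroups as $dx_I(x_I+x_\alpha+d-2)$.

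Identity \eqref{eq:fourthtrace} is an extra constraint not needed for $C^*$, and I would not use it beyond the degenerate-case argument above. The main obstacle is the nonvanishing step rather than the algebra. It requires care that $R_\alpha\neq R_I$ and that $R_I,R_\alpha\subseteq W$ hold under the standing assumptions, and that the decomposition $\hat{M}=\kappa\mathbb{I}+\Delta\hat{P}_\alpha$ really exhausts $W$ (every summand other than $R_\alpha$ is a complementary representation). I would state these explicitly from Section~\ref{sec:RIchoice} before running the trace computation.
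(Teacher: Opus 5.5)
Your proposal is correct and follows essentially the same route as the paper. It uses the same two-parameter form $\hat{M}=\kappa\,\mathbb{I}_W+\Delta\hat{P}_\alpha$, the same trace identities, and the same argument via Corollary~\ref{cor:uniquevanish} and $\Tr(\hat{M}^2)=d_I$ to rule out $x_I=0$; your explicit use of $R_I\neq R_\alpha$ (which the paper needs but leaves implicit when invoking the corollary) and your bound $d_I<dd_a$ from $R_\alpha\neq R_I\subseteq W$ are clarifications rather than a different method.
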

\begin{proof}
    From our choice in $\kappa_I(\cdot)$, we can expand the operator as 
    \begin{equation}
        \hat{M}= \kappa\sum_{\tilde{\alpha}} \hat{P}_{\tilde{\alpha}}+ (\kappa+\Delta)\hat{P}_\alpha.
    \end{equation}
    Substituting this expansion into Eqs.~\eqref{eq:firsttrace} and \eqref{eq:secondtrace}, we obtain
    \begin{equation}
        \kappa \sum_{\tilde{\alpha}}d_{\tilde{\alpha}} +(\kappa+\Delta) d_\alpha =d_I, \quad  \kappa \sum_{\tilde{\alpha}} d_{\tilde{\alpha}}x_{\tilde{\alpha}}+ (\kappa+\Delta)d_\alpha x_\alpha =-x_Id_I.
    \end{equation}
   Using the identities $0=\Tr(\hat{X}_{\mathrm{int}}) =\sum_{\beta} x_\beta d_\beta $ (from Eq.~\eqref{eq:fifthtrace}) and $\sum_{\beta} d_\beta = dd_a $, these relations simplify to
    \begin{equation}
        \kappa dd_a + \Delta d_\alpha =d_I, \quad \Delta d_\alpha x_\alpha = -x_Id_I. \label{eq:intermediatesolutionsdelta}
    \end{equation}
    Observe that $x_\alpha \neq 0$, since $x_\alpha=0$ would imply $x_I=0$,  contradicting Corollary~\ref{cor:uniquevanish}. Next, we show that $x_I\neq 0$. For the sake of contradiction, assume $x_I=0$ (and $x_\alpha \neq 0$). The second expression in Eq.~\eqref{eq:intermediatesolutionsdelta} forces $\Delta =0$, which reduces the first expression to $\kappa dd_a=d_I$ and $\hat{M}=\kappa \mathbb{I}$. Taking the trace of $\hat{M}^2$ and invoking Eq.~\eqref{eq:fourthtrace} yields $\kappa^2 dd_a=d_I$, forcing $\kappa =\kappa^2$, so $\kappa \in \{0,1\}$. However, $\kappa$ cannot vanish (as that would mean $d_I=0$), nor can it be one (as that implies $d_I=dd_a$, which requires $W$ to consist of a single irreducible representation, while $W$ always decomposes into at least three).  Thus, no valid choice of $\kappa$ exists, showing that $x_I\neq 0$. With these trivial cases voided, solving Eq.~\eqref{eq:intermediatesolutionsdelta} yields Eq.~\eqref{eq:solutionsdelta}. 

    To obtain $C^*$, we evaluate the trace: 
    \begin{equation}
        \Tr(\hat{M}\hat{X}_{\mathrm{int}}^2)= \kappa \sum_{\tilde{\alpha}}x_{\tilde{\alpha}}^2d_{\tilde{\alpha}} +(\kappa+\Delta)x_\alpha^2d_\alpha = \kappa \Tr(\hat{X}_{\mathrm{int}}^2)+\Delta x_\alpha^2d_\alpha.
    \end{equation}
    Plugging in Eq.~\eqref{eq:sixthtrace} and equating the result to Eq.~\eqref{eq:thirdtrace} gives
    \begin{equation}
        2\kappa c_2(R_a)d_a +\Delta x_\alpha^2d_\alpha = (x_I^2+(d-2)x_I)d_I.
    \end{equation}
    Since $\Delta x_\alpha^2d_\alpha=-x_Ix_\alpha d_I$ from Eq.~\eqref{eq:intermediatesolutionsdelta}, this simplifies to
    \begin{equation}
         2 c_2(R_a) \frac{d_I(x_I+x_\alpha)}{dx_\alpha} -x_Ix_\alpha d_I = (x_I^2+(d-2)x_I)d_I.
    \end{equation}
    Multiplying both sides by $dx_\alpha$ and canceling $d_I$ produces condition $C^*$.
\end{proof}
We can further specialize condition $C^*$ by incorporating the constraints imposed by C1: 
\begin{lem}
\label{lem:Gdef}
    Let C1 and C2 hold and assume that $d\geq 4((\lambda_a)_1'+1) $. Then, defining $b_R\coloneq (r,s)$, $b_A=(r+1,s)$, and
    \begin{equation}
        C\coloneq c(b_A) = s-r-1\equiv s-(\lambda_a)_s'-1,
    \end{equation}
    it follows that either $(x_I,x_\alpha)=(C,-(C+d))$ or $(x_I,x_\alpha)=(-(C+d),C)$. Under these conditions, $C^*$ becomes equivalent to
    \begin{equation}
        G(d)\coloneq d(|\lambda_a|+C)+g(\lambda_a)+C^2=0,
    \end{equation}
    with $g(\lambda)$ defined as in Theorem~\ref{thm:casimirvalues}. 
\end{lem}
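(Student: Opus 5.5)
The plan is to combine the C1 characterization of Section~\ref{sec:RIchoice} with the content formula of Lemma~\ref{lem:contentxtheorem}, and then substitute the resulting eigenvalues into $C^*$.

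\textbf{Step 1: no modification rules apply.} The hypothesis $d\geq 4((\lambda_a)_1'+1)$ controls the column lengths. Every $R_\beta\subseteq W$, and in particular $R_I$ and $R_\alpha$, has first column of length at most $(\lambda_a)_1'+1\leq d/4$. Hence $R_\alpha\otimes R_I$ involves only tableaux with total first-column length at most $\lfloor d/2\rfloor$, so the Newell--Littlewood decomposition needs no modification rules. The same bound excludes the zero-weight shift $\nu=0$ for odd $d$: that shift requires $R_a$ to have $(d-1)/2$ rows.

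\textbf{Step 2: identify $R_I$ and $R_\alpha$.} Each of $R_I$ and $R_\alpha$ is obtained from $R_a$ by adding or removing exactly one box. By the C1 analysis (the trace cases of Section~\ref{sec:RIchoice}, now exact by Step 1), $R_\alpha$ and $R_I$ must differ from each other by two boxes lying in the same column. They cannot both be obtained by adding boxes, nor both by removing boxes. If they were, the two modified boxes would have to be stacked vertically in one column, but only one box can be added to or removed from the bottom of a column of $R_a$. Also $R_I\neq R_\alpha$.

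So one of them removes the bottom box $b_R=(r,s)$ of column $s$, with $r=(\lambda_a)_s'$. The other adds $b_A=(r+1,s)$ directly beneath it. Both operations must be valid, which means $(\lambda_a)_r=s$ and $(\lambda_a)_{r+1}=s-1$; this is consistent and is the only configuration allowed.

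\textbf{Step 3: compute the eigenvalues.} By Lemma~\ref{lem:contentxtheorem}, the added box has content $C=s-r-1$, so its eigenvalue is $x=C$. The removed box has content $c(b_R)=s-r=C+1$, so its eigenvalue is $x=-(d-1+C+1)=-(C+d)$. Assigning these to $(x_I,x_\alpha)$ in either order gives the two stated alternatives.

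\textbf{Step 4: reduce $C^*$.} Equation~\eqref{eq:cstareqn} is symmetric in $x_I$ and $x_\alpha$, so the order does not matter. Substitute
\begin{equation*}
x_I+x_\alpha=-d,\qquad x_Ix_\alpha=-C(C+d).
\end{equation*}
Then $x_I+x_\alpha+d-2=-2$, and $C^*$ becomes
\begin{equation*}
-2dc_2(R_a)=2dC(C+d).
\end{equation*}
Dividing by $2d$ gives $c_2(R_a)+C^2+dC=0$. Inserting $c_2(R_a)=d|\lambda_a|+g(\lambda_a)$ from Theorem~\ref{thm:casimirvalues} yields $G(d)=0$. Every step is reversible, so the two conditions are equivalent.

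We should also check consistency with the preceding theorem, which requires $x_I,x_\alpha\neq 0$. Since $C+d>0$, this holds exactly when $C\neq 0$. If $C=0$, C2 already fails by that theorem, so the lemma holds vacuously in that case.

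\textbf{Main obstacle.} The delicate step is Step 2: upgrading the necessary condition from Section~\ref{sec:RIchoice} (a two-box difference in a column, stated before modification rules) to an exact identification in terms of single-box moves from $R_a$. I would handle it by invoking Step 1 to remove all modification-rule and $\nu=0$ subtleties. I would then argue directly from the single-box structure of $W=[1]\otimes R_a$ that the only same-column two-box difference is the vertical pair $b_R,b_A$.
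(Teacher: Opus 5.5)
Your proposal is correct and follows essentially the same route as the paper. Both arguments use the dimension bound to discard modification rules, then invoke the C1 analysis of Section~\ref{sec:RIchoice} to pin $R_I,R_\alpha$ to the vertical pair $b_R=(r,s)$, $b_A=(r+1,s)$, read off the eigenvalues $-(C+d)$ and $C$ from Lemma~\ref{lem:contentxtheorem}, and substitute $x_I+x_\alpha=-d$, $x_Ix_\alpha=-C(C+d)$ into $C^*$. Your extra remarks are sound refinements the paper leaves implicit: ruling out the $\nu=0$ shift, the symmetry of $C^*$ under $x_I\leftrightarrow x_\alpha$, and the vacuous $C=0$ case; note also that the ``both added / both removed'' case is excluded most directly by the size count $|R_I|-|R_\alpha|=\pm 2$.
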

\begin{proof}
    Recall from the main text that the dimensional bound $d\geq 4((\lambda_a)_1'+1)$ restricts $R_\alpha \otimes R_I$ to a regime free of modification rules. As detailed in Section~\ref{sec:RIchoice}, $R_I$ and $R_\alpha$ must be obtained by subtracting/adding a box in the same column of $R_a$. Specifically, the removed box is located at $(r,s)$, with $(\lambda_a)_r=s$ and the added box is located at $(r+1,s)$ with $(\lambda_a)_{r+1}=s-1$. By Lemma~\ref{lem:contentxtheorem}, the removed box possesses an eigenvalue $x=-(d-1+s-r)=-(d+C)$ and the added box has $x=s-r-1 =C$. Because $x_\alpha$ and $x_I$ correspond directly to these added and removed boxes, the first claim is established. To prove the second claim, notice these choices imply $x_\alpha+x_I=-d$ and $x_Ix_\alpha=-C(d+C)$. Plugging this into condition $C^*$ provides
    \begin{equation}
        -2dc_2(R_a) = 2dC(d+C) \iff  g(\lambda_a) +d|\lambda_a| =-C(d+C) \iff G(d)=0.
    \end{equation}
\end{proof}

\subsection{Proof of the Main Theorem}
We establish one final lemma that $G(d)$ is strictly positive if $R_a \neq [1^p]$ for any $p\geq 1$, effectively completing the proof of our main theorem.
\begin{lem}
\label{lem:positivitylem}
Let $R_a$, $R_I$, and $R_\alpha$ be fixed representations satisfying C1 and C2 (as in Lemma~\ref{lem:Gdef}), with an added/removed box pair in column $s$ defining $R_I$ and $R_\alpha$. Let $\lambda_a$ denote the highest weight vector of $R_a$ and, as before,
\begin{equation}
    C\equiv s-(\lambda_a)'_s-1.
\end{equation}
 If $R_a$ is a non-empty, non-column Young tableau, then
    \begin{equation}
        G(d) >0 \text{ for all } d\geq (\lambda_a)_1'+(\lambda_a)_2',
    \end{equation}
    where $G(d)$ is as defined in Lemma~\ref{lem:Gdef}. 
\end{lem}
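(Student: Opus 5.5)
The plan is to exploit the fact that $G(d)=d(|\lambda_a|+C)+g(\lambda_a)+C^2$ is affine in $d$. It then suffices to show two things: the slope is positive, and $G$ is already positive at the left endpoint $d_0\coloneq(\lambda_a)_1'+(\lambda_a)_2'$. Throughout, write $c_j\coloneq(\lambda_a)_j'$ for the column lengths, so $c_1\geq c_2\geq\cdots$, and $c_2\geq 1$ because $R_a$ is not a column.

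\emph{Step 1 (monotonicity).} Since $s\geq1$ and $c_s\leq c_1$, we have $C=s-c_s-1\geq -c_1$. Because $R_a$ is non-column, $|\lambda_a|\geq c_1+c_2$. Hence the slope satisfies $|\lambda_a|+C\geq c_2\geq 1>0$, so $G(d)\geq G(d_0)$ for all $d\geq d_0$.

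\emph{Step 2 (expanding $G(d_0)$).} Using Lemma~\ref{lem:columnvalues} together with $|\lambda_a|=\sum_j c_j$, I will write
\begin{equation}
G(d_0)=\sum_{j\geq1}c_j\,(d_0-c_j+2j-2)\;+\;C(C+d_0).
\end{equation}
The column terms are explicit:
\begin{itemize}
\item $j=1$ contributes $c_1c_2$;
\item $j=2$ contributes $c_2(c_1+2)$;
\item each $j\geq3$ contributes $c_j(c_1+c_2-c_j+2j-2)\geq c_j(c_2+2j-2)>0$.
\end{itemize}
So the first sum is at least $2c_1c_2+2c_2$, plus strictly positive terms from columns $j\geq3$. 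The only possibly negative piece is $C(C+d_0)$, and this is where the main work lies.

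\emph{Step 3 (bounding $C(C+d_0)$, case split on $s$).} A crude bound obtained by minimizing the quadratic over $C\in[-c_1,0]$ gives $-d_0^2/4$. This fails when $c_2=1$ and $c_1$ is large, so I will instead use the exact form of $C$ column by column. Note that $C+d_0=(s-1)+c_1+c_2-c_s\geq0$.
\begin{itemize}
\item If $s=1$: $C=-c_1$ and $C+d_0=c_2$, so the product is $-c_1c_2$. This is absorbed by the $j=1$ term, leaving at least $c_2(c_1+2)>0$.
\item If $s=2$: the product is $(1-c_2)(c_1+1)\geq -(c_2-1)(c_1+1)$. This is dominated by the $j=2$ term $c_2(c_1+2)$, and the $j=1$ term remains as a strict surplus.
\item If $s\geq3$ and $C<0$: the product is at least $-(c_s-s+1)(c_1+c_2+s-1-c_s)$. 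The $j=s$ term equals $c_s(c_1+c_2-c_s+2s-2)$, and this dominates because $c_s>c_s-s+1$ and $c_1+c_2-c_s+2s-2>c_1+c_2+s-1-c_s$. The $j=1,2$ terms remain as strict surplus.
\item If $C\geq0$: the product is nonnegative and there is nothing to do.
\end{itemize}
In every case $G(d_0)>0$, and Step 1 then gives $G(d)>0$ for all $d\geq d_0$.

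\emph{Main obstacle.} I expect Step 3 to be the delicate part, specifically the bookkeeping that matches the negative cross term against the correct column term $j=s$ rather than relying on a uniform bound. I will also check that the column $s$ admissible under Lemma~\ref{lem:Gdef} (a removable box at $(c_s,s)$ and an addable box at $(c_s+1,s)$) does not invalidate the inequality $c_s\leq c_1$ used throughout. Since the columns are non-increasing, that inequality holds automatically.
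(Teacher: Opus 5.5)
Your proof is correct and follows the paper's argument: both use the same affine-in-$d$ monotonicity (slope $|\lambda_a|+C\geq(\lambda_a)_2'\geq 1$), the same expansion of $G$ at $d_0=(\lambda_a)_1'+(\lambda_a)_2'$ via Lemma~\ref{lem:columnvalues}, and the same cancellation of $C(C+d_0)=-(\lambda_a)_1'(\lambda_a)_2'$ against the $j=1$ term when $s=1$. The only difference is the case $C<0$, $s\geq 2$. There the paper uses $1\leq -C\leq(\lambda_a)_2'-1$ and the monotonicity of $x(d_0-x)$ on $[0,d_0/2]$ to absorb the cross term into the $j=1,2$ column terms in one step, whereas you absorb it into the $j=2$ term for $s=2$ and into the $j=s$ term for $s\geq 3$, which works equally well.
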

\begin{proof}
    Denote $B\coloneq (\lambda_a)_1'+(\lambda_a)_2' $. Recalling that $|\lambda_a| = \sum_j (\lambda_a)'_j$, we expand $G(B)$ using Lemma~\ref{lem:columnvalues}: 
    \begin{equation}
    \label{eq:GBeqn}
        G(B) =C(C+B)+\sum_{j\geq 1}(\lambda_a)_j' (B+2j-2-(\lambda_a)_j') .
    \end{equation}
    Since column lengths in a valid Young tableau are monotonically decreasing, the summand is always non-negative. Further, because $[\lambda_a]_L$ is not a single column, $(\lambda_a)_2'\geq 1$. Since the term $C(C+B) $ can potentially become negative, we analyze the behavior across three distinct cases: 
    \begin{thmcases}
    \item ($C\geq 0$): In this case $C(C+B)\geq 0$, allowing us to bound $G(B)$ by the first term in the summand:
    \begin{equation}
        G(B) \geq \sum_{j\geq 1}(\lambda_a)_j' (B+2j-2-(\lambda_a)_j')  \geq (\lambda_a)_1'(\lambda_a)_2'>0. 
    \end{equation}
    \item ($C<0$, $c=1$): The removable box is located at $((\lambda_a)_1',1)$, implying that $C=-(\lambda_a)'_1$. Plugging this in, we obtain $C(C+B)= -(\lambda_a)'_1(\lambda_a)'_2$, which exactly cancels the $j=1$ term in the summand for $G(B)$. This leaves 
    \begin{equation}
        G(B) = \sum_{j\geq 2}(\lambda_a)_j' (B+2j-2-(\lambda_a)_j') \geq (\lambda_a)_2'( (\lambda_a)_1'+ 2) >0.
    \end{equation}
    \item ($C<0$, $c\geq 2$): Since $c\geq 2$, the monotonic decrease of column lengths means $(\lambda_a)_c'\leq (\lambda_a)_2'$, which bounds the values of $-C$ to $1\leq -C\leq (\lambda_a)_2'-1$. Furthermore, the condition  $ (\lambda_a)_2'\leq  (\lambda_a)_1'$ implies $ (\lambda_a)_2'\leq B/2$. Using the property that the function $f(x) = x(B-x)$ monotonically increases on the interval $[0, B/2]$, we find 
    \begin{equation}
        -C(B+C) \leq ( (\lambda_a)_2'-1)(B-(\lambda_a)_2'+1) = ((\lambda_a)_2'-1)((\lambda_a)_1'+1).
    \end{equation}
    Bounding Eq.~\eqref{eq:GBeqn} by the first two terms in the sum alongside the above inequality obtains
    \begin{align}
        G(B)&\geq   (\lambda_a)_1'(\lambda_a)_2'+(\lambda_a)_2'((\lambda_a)_1'+2)-((\lambda_a)_2'-1)((\lambda_a)_1'+1)\nonumber\\
        &= ((\lambda_a)_2'+1)((\lambda_a)_1'+1)>0.
    \end{align}
\end{thmcases}
Thus, across all possible parameter ranges, we have $G(B)>0$. Finally, because $c\geq 1$ and $(\lambda_a)_c'\leq (\lambda_a)_1'$, the value $C$ is bounded below by $C\geq -(\lambda_a)_1'$, meaning $|\lambda_a|+C$ satisfies the inequality
\begin{equation}
    |\lambda_a|+C\geq |\lambda_a|-(\lambda_a)_1' =\sum_{j\geq 2} (\lambda_a)_j'\geq (\lambda_a)_2' \geq 1 .
\end{equation}
Expressing $G(d) = G(B) + (|\lambda_a|+C)(d-B)\geq G(B)>0$, it follows that for all $d \geq B$, $G(d) \geq G(B) > 0$, completing the proof. 
\end{proof}

\begin{proof}[Proof of Theorem~\ref{thm:pform}]
 If C1 and C2 hold, then condition $C^*$ (Eq.~\eqref{eq:cstareqn}) must hold. By Lemma~\ref{lem:Gdef}, condition $C^*$ is equivalent to the constraint $G(d)=0$. However, Lemma~\ref{lem:positivitylem} proves that $G(d) > 0$ for all valid dimensions $d \geq B$ whenever $R_a \neq [1^p]$ for $p \geq 1$. This contradiction implies that the two consistency conditions cannot be satisfied for non-column representations. Because we previously established in Section \ref{Sec:pformframewwork} that standard $p$-form symmetries satisfy both consistency conditions, the proof is complete.
\end{proof}

\end{appendix}

\phantomsection
\section*{References}
\addcontentsline{toc}{section}{References}
\makeatletter
\let\bibsection\relax
\makeatother
\bibliography{refs}
\end{document}